\newtheorem{theorem}{Theorem}
\newtheorem{lemma}{Lemma}
\newtheorem{corollary}{Corollary}
\newcommand{\Rset}{\mathbb{R}}
\newcommand{\Mfam}{\mathcal{M}}
\newcommand{\Afam}{\mathcal{A}}
\newcommand{\Cfam}{\mathcal{C}}
\newcommand{\Vfam}{\mathcal{V}}
\newcommand{\Lfam}{\mathcal{L}}
\newcommand{\Bfam}{\mathcal{B}}
\newcommand{\Pfam}{\mathcal{P}}
\newcommand{\LP}{{\tt PCLP}}
\newcommand{\NPCLP}{{\tt NPCLP}}
\newcommand{\CoreLP}{{\tt SimpleLP}}
\newcommand{\CoreDual}{{\tt SimpleDual}}
\newcommand{\true}{\mbox{true}}
\newcommand{\false}{\mbox{false}}
\tikzstyle{vertex}=[circle,draw, inner sep=2.3pt]
\tikzstyle{terminal}=[circle,draw,fill, inner sep=2.3pt]
\tikzstyle{biset}=[fill=gray,fill opacity=.2, line width=8pt, draw=gray]
\tikzstyle{biset outer}=[dotted,line width=2pt]
\title{Spider covers for prize-collecting network activation problem}
\author{Takuro Fukunaga\footnote{National Institute of Informatics,
2-1-2 Hitotsubashi, Chiyoda-ku, Tokyo, Japan.
JST, ERATO, Kawarabayashi Large Graph Project, Japan.
Email: takuro@nii.ac.jp}}
\date{}
\begin{document}
\maketitle
\begin{abstract}
In network activation problem, each edge in a graph is associated with an
 activation function that decides 
 whether the edge is activated from weights assigned
 to its end nodes.
 The feasible solutions of the problem are node weights such that the
 activated edges form graphs of required connectivity,
 and the objective is to find a feasible solution minimizing its total
 weight.
 In this paper, we consider a prize-collecting version of the network
 activation problem and present the first nontrivial approximation algorithms.
 Our algorithms are based on a new linear programming relaxation of the problem.
 They round optimal solutions for the relaxation by repeatedly computing
 node weights
 activating subgraphs, called spiders, which are known to be useful for
 approximating the network activation problem.
 For the problem with node-connectivity requirements, we also present a new
 potential function on uncrossable biset families
 and use it to analyze our algorithms.
\end{abstract}

\section{Introduction}
\subsection{Problem}
{\em Network activation problem}
is a problem of 
activating a well-connected network by assigning weights to
nodes.
The problem is formally described as follows. Given a graph $G=(V,E)$ and
a set $W$ of non-negative real numbers such that $0 \in W$ and
$i + j \in W$ for any $i,j \in W$,
a solution in the problem is a
node weight function $w\colon V \rightarrow W$.
For $u,v\in V$,
let $\{u,v\}$ and $uv$ 
denote the unordered and ordered pairs of $u$ and $v$, respectively.
Each edge $\{u,v\} \in E$ is associated with an activation function $\psi^{uv}: W
\times W \rightarrow \{\true,\false\}$ such that
$\psi^{uv}(i,j)=\psi^{vu}(j,i)$ holds for any $i,j \in W$.
In this paper, each activation function $\psi^{uv}$ is supposed to be
{\em monotone}, i.e.,
if $\psi^{uv}(i,j)=\true$ for some $i,j \in W$,
then $\psi^{uv}(i',j')=\true$ for any $i',j'\in W$ with $i' \geq i$ and $j' \geq j$.
An edge $\{u,v\}$ is \emph{activated} by $w$ if
$\psi^{uv}(w(u),w(v))=\true$.
Let $E_{w}$ be the set of edges activated by $w$ in $E$.
A node weight function $w$ is feasible 
in the network activation problem
if $E_{w}$ satisfies given constraints, and the objective of the problem is to find a feasible
node weight function $w$ that minimizes $\sum_{v \in V}w(v)$, denoted by
$w(V)$. 
We assume throughout the paper that $G$ is undirected even though
the problem can be defined for directed graphs as well.

In this paper, we pose connectivity constraints 
on the set $E_w$ of activated edges.
Namely, we are given
demand pairs $\{s_1,t_1\},\ldots,\{s_d,t_d\}\subseteq V$ associated with
connectivity requirements $r_1,\ldots,r_d$ defined as natural numbers.
$[d]$ denotes $\{1,\ldots,d\}$, $k$ denotes $\max_{i \in [d]}r_i$,
and a node that participates in some demand pair is called a {\em terminal}.
The constraints require that
the connectivity between $s_i$ and $t_i$ in
the graph $(V,E_w)$ is at least $r_i$ for each $i \in [d]$.
We consider three definitions of the connectivity:
edge-connectivity, node-connectivity, and element-connectivity.
The edge-connectivity between two nodes $u$ and $v$ is the maximum
number of edge-disjoint paths between $u$ and $v$,
and the node-connectivity 
between $u$ and $v$ is the maximum
number of inner disjoint paths between $u$ and $v$.
The element-connectivity is defined only for pairs of terminals,
and for two terminals $u$ and $v$, it is defined as the maximum number
of paths between them that are disjoint in edges and in non-terminal nodes.
The edge-connectivity network activation problem denotes the problem with the edge-connectivity
constraints.
The node- and the element-connectivity network activation problems 
are defined similarly.

The network activation problem is closely related to
the \emph{survivable network design problem} (SNDP),
a problem of constructing a cheap network that is sufficiently connected.
A feasible solution to the SNDP is a subgraph $(V,F)$ of a given graph $G=(V,E)$
that satisfies the connectivity constraints.
There are two popular variations, called the edge- and
node-weighted SNDPs.
In the edge-weighted SNDP, each edge in the graph is associated with a
weight $w(e)$, and the objective is to minimize the weight $w(F)$ of $F$ defined as
$\sum_{e \in F}w(e)$.
In the node-weighted SNDP, 
a weight $w(v)$ is given for each node $v \in V$, and the objective is
to minimize $\sum_{v \in V(F)} w(v)$, where $V(F)$ denotes the set of
end nodes of edges in $F$.
We denote $\sum_{v \in V(F)} w(v)$ by $w(V(F))$ in the sequel.
It is known that the node-weighted SNDP generalizes the edge-weighted
SNDP.

It can be seen that the network activation problem extends the node-weighted SNDP.
Given node weights $w'\colon V\rightarrow \Rset_{\geq 0}$,
let $W=\{w'(v)\colon v \in V \} \cup \{0\}$,
and define a monotone activation function $\psi^{uv}$ for $\{u,v\}\in E$ so that
$\psi^{uv}(i,j)=\true$ if and only if $i\geq w'(u)$ and $j \geq w'(v)$.
A minimal solution $w\colon V\rightarrow W$ to the network activation
problem with these activation functions
does not assign a weight larger than $w'(v)$ to $v\in V$.
Hence, if an edge activated by $w$ is incident to a node $v$, 
then $w(v)=w'(v)$ holds without loss of generality.
Therefore, the node-weighted SNDP with $w'$ is equivalent to the network activation problem
with $\psi$ defined from $w'$.

The extension from the SNDP to the network activation problem is not
only important from a technical viewpoint but also for practical
reasons. In the node-weighted SNDP, for each node, one is required to decide whether it
is chosen.  In contrast, the network activation problem demands
a decision concerning which weight is assigned to a node. In other words, the
network activation problem admits more than two choices while the
node-weighted SNDP admits only two choices for each node.  This rich
structure of the network activation problem enables to capture many
problems motivated by realistic applications.  In fact,
Panigrahi~\cite{Panigrahi11wireless} discussed numerous applications to
wireless networks.  In wireless networks, the success of communication
between two base stations depends on factors such as physical
obstacles between them, positions of antennas, and signal strength.
Panigrahi suggested that many problems related to wireless networks can
be modeled by the network activation problem.

Our main contribution in this paper is to develop algorithms for 
a prize-collecting version of the network activation
problem, which we call the {\em prize-collecting network activation
problem} (PCNAP). 
In the PCNAP,
each demand pair $\{s_i,t_i\}$ is associated with not only a
connectivity requirement $r_i$, but also a non-negative real number
$\pi_i$, which is called the \emph{penalty}.
The edge set $E_w$ activated by a solution $w$ 
is allowed to violate the
connectivity requirements,
but it has to pay the penalty $\pi_i$ if it does not satisfy the connectivity
requirement for $\{s_i,t_i\}$. 
The objective of the PCNAP
is to minimize the sum of 
$w(V)$ and the penalties we have to pay.

We also consider two variations of the PCNAP.
The {\em rooted node-connectivity PCNAP} is a special case of the
node-connectivity PCNAP such that 
a root node $s \in V$ is specified and the demand pairs
are $\{s,t_1\},\ldots,\{s,t_d\}$.
In the {\em subset node-connectivity PCNAP},
terminals $t_1,\ldots,t_d \in V$ 
and penalties $\pi_1,\ldots,\pi_d$ are given instead of demand pairs.
Let $E_w$ be the set of activated edges. 
In addition to node-weights,
a solution chooses $U \subseteq [d]$ such that 
every pair of terminals $t_i$ and $t_j$ 
with $i,j \in U$
is $k$-connected in the graph $(V,E_w)$.
The penalty is $\sum_{i \in [d]\setminus U}\pi_i$.
We note that the subset node-connectivity PCNAP is not a special case of
the node-connectivity PCNAP because the above setting cannot be represented by 
connectivity demands and penalties on terminal pairs.

In all of the known applications, it is reasonable to assume 
$|W|={\rm poly}(|V|)$.
In fact, all previous research~\cite{Nutov13activation,Panigrahi11wireless} studied the network activation problem 
under this assumption.
In this paper, we proceed on the same assumption and 
design algorithms
that run in polynomial time of $|W|$ and the size of $G$.

\subsection{Related work} 

The SNDP is a well-studied optimization
problem, and there are substantial number of studies regarding algorithms for
it. The best known approximation factors for the edge-weighted SNDP are two
for the edge-~\cite{Jain01} and element-connectivity~\cite{FleischerJW06},
and $O(k^3 \log |V|)$ for node-connectivity~\cite{ChuzhoyK12}. 
For the node-weighted SNDP, 
Nutov~\cite {Nutov10node-weights} gave an $O(k\log |V|)$-approximation algorithm with edge-connectivity requirements, and
element-connectivity requirements in~\cite{Nutov12uncrossable}. His algorithm
is based on an algorithm for the problem of covering uncrossable biset
families by edges, where a biset is an ordered pair of two node sets, and an uncrossable family 
is a family closed under some uncrossing operations (we will present their formal definitions later). 
However, his analysis of the
algorithm for covering uncrossable biset families has an error.
We will explain it in Section~\ref{sec.potential}.

The prize-collecting SNDP has also been well studied.
As for edge-weighted graphs,
we refer to only Hajiaghayi et al.~\cite{HajiaghayiKKN12} 
whereas many papers studied related problems such as the prize-collecting Steiner tree and forest.
Recently much attention has been paid to node-weighted graphs.
K\"onemann, Sadeghian, and Sanit\`a~\cite{Konemann13CoRR} gave an $O(\log
|V|)$-approximation algorithm for the prize-collecting node-weighted Steiner tree problem.
Their algorithm has the Lagrangian multiplier preserving property, which is
useful in many contexts.
They also pointed out a technical error in Moss and Rabani~\cite{MossR07}.
Bateni, Hajiaghayi, and Liaghat~\cite{BateniHL13} gave an $O(\log
|V|)$-approximation algorithm for the prize-collecting node-weighted Steiner forest
problem with application to the budgeted Steiner tree problem.
Chekuri, Ene, and Vakilian~\cite{ChekuriEV12} gave an $O(k^2 \log
|V|)$-approximation for the prize-collecting SNDP with edge-connectivity
requirements, which they later improved to $O(k \log |V|)$-approximation
and also extended to 
the element-connectivity
requirements (refer to \cite{Vakilian13}).
We note that the proof in~\cite{Vakilian13} implies that the 
algorithm in~\cite{Nutov12uncrossable} works for the node-weighted SNDP
with element-connectivity requirements,
as Nutov originally claimed, even though his analysis of the algorithm for covering
uncrossable biset families is not correct in general.
We also note that 
the algorithm for the element-connectivity requirements in~\cite{Vakilian13} implies 
$O(k^4\log |V|)$-approximation for node-connectivity requirements,
using the reduction from node-connectivity requirements to 
the element-connectivity requirements presented by Chuzhoy and Khanna~\cite{ChuzhoyK12}.

Concerning the network activation problem, 
Panigrahi~\cite{Panigrahi11wireless} gave $O(\log |V|)$-approximation
algorithms for $k\leq 2$ and proved
that it is NP-hard to obtain an $o(\log |V|)$-approximation algorithm even
when activated edges are required to be a spanning tree.
Nutov~\cite{Nutov13activation} presented approximation algorithms for
higher connectivity requirements, including $O(k\log |V|)$-approximation for
the edge- and element-connectivity and $O(k^4 \log^2 |V|)$-approximation for
the node-connectivity.
He also discussed special node-connectivity requirements such as rooted and
subset requirements.
These results are built based on his research in~\cite{Nutov12uncrossable} for
covering uncrossable biset families. 
This contains an error as
mentioned above, and 
the rectification offered in~\cite{Vakilian13} cannot be extended to the network
activation problem.
Therefore, the network activation problem currently
has no non-trivial algorithms for the element- and node-connectivity.
One contribution of this paper is to rectify the Nutov's error and to provide
algorithms for these problems.

An important factor
in most of the research mentioned above is the {\em greedy spider cover algorithm}.
The notion of {\em spiders} was invented by Klein and Ravi~\cite{KleinR95} in
order to solve
the node-weighted Steiner tree problem. 
It was originally defined as a tree that admits at most one
node of degree larger than two and that spans at least two terminals.
The node of degree larger than two is called the {\em head}, 
and nodes of degree one are called the {\em feet}
of the spider. 
It is supposed without loss of generality that each foot of a spider is a terminal.
If all nodes have degrees of at most two, then an arbitrary node is chosen to be the head.
Klein and Ravi~\cite{KleinR95} proved that any Steiner tree can be
decomposed into node-disjoint spiders so that each terminal is included by
some spider. The {\em density} of a subgraph is defined as 
its node weight divided by the number of terminals included by it.
The decomposition theorem implies that there exists a spider with a
density of at
most that of Steiner trees. 
Since contracting a spider with $f$ feet decreases the number of
terminals by at least $f-1$,
a greedy algorithm to 
repeatedly contract minimum density spiders
achieves $O(\log |V|)$-approximation.
Minimum density spiders are hard to compute but their relaxations can be
computed by a simple algorithm that involves first guessing the place of the head and 
number of feet, which is possible
because there are only $|V|$ options for each. Let $h$ be the head, and
$f$ be the number of feet.
We then compute a shortest path from $h$ to each terminal, and
choose the $f$ shortest paths from them.
The union of these shortest paths is not necessarily a spider, but its
density is at most that of spiders, and contracting the union can play the
same role as contracting spiders.
Nutov~\cite{Nutov10node-weights,Nutov12uncrossable,Nutov13activation}
extended the notion of spiders to uncrossable biset families, and demonstrated in the
sequence of his research that they are useful for the node-weighted SNDP and the
network activation problem.

\subsection{Our results}

The main result in this paper is to present approximation algorithms for
the PCNAP.
Our algorithms achieve $O(k\log |V|)$-approximation for the
edge-connectivity PCNAP, and
$O(k^2 \log |V|)$-approximation for the element-connectivity PCNAP.
Table~\ref{table.results} summarizes the approximation factors achieved by our algorithms
and previous studies.
Using decompositions of connectivity requirements given in 
\cite{ChuzhoyK12,Nutov12uncrossable,Nutov12subset},
we can also achieve approximation factors
$O(k^5 \log^2 |V|)$ for the node-connectivity PCNAP
and $O(k^3 \log |V|)$ for the rooted and
subset node-connectivity PCNAPs.
Our results 
give the first non-trivial algorithms for the PCNAP.
We also recall that, 
besides our algorithms,
no algorithms are known
even for the element- and node-connectivity network activation problems
because the analysis of the algorithms claimed 
by Nutov~\cite{Nutov12uncrossable,Nutov13activation}
contains an error.
For wireless networks, it is natural to consider node-connectivity, 
which represents tolerance against node failures,
rather than edge-connectivity, which represents tolerance against link
failures.
Hence, our results are important for not only theory but also applications.

\begin{table*}
\centering
\caption{Approximation factors for the edge-weighted SNDP, node-weighted SNDP, and 
the network activation problem}
\label{table.results}

\begin{threeparttable}
\begin{tabular}{lllll}
\toprule
 & \multicolumn{2}{l}{non-prize-collecting} & \multicolumn{2}{l}{prize-collecting}\\
\midrule
{\bf edge-connectivity}&&&&\\
edge-weighted SNDP & 2 & Jain~\cite{Jain01} & 2.54 & Hajiaghayi et al.~\cite{HajiaghayiKKN12} \\
node-weighted SNDP & 
	$O(k\log |V|)$ & Nutov~\cite{Nutov10node-weights} 
	& $O(k\log |V|)$ & Chekuri et al.~\cite{ChekuriEV12}\\
network activation & 
	$O(k\log |V|)$ & Nutov~\cite{Nutov13activation}
	& $O(k \log |V|)$ & [this paper]\\
\midrule
{\bf element-connectivity}&&&&\\
edge-weighted SNDP
	& 2 & Fleischer et al.~\cite{FleischerJW06}
	& 2.54 &Hajiaghayi et al.~\cite{HajiaghayiKKN12}\\
node-weighted SNDP
	& $O(k \log |V|)$ & Vakilian \cite{Vakilian13}\tnote{1}
	& $O(k\log |V|)$ & Vakilian \cite{Vakilian13}\\
network activation  
 & $O(k^2 \log |V|)$ & [this paper]\tnote{1}
 & $O(k^2 \log |V|)$ & [this paper]\\
\bottomrule
\end{tabular}
\begin{tablenotes}\footnotesize
\item[1] Nutov~\cite{Nutov12uncrossable,Nutov13activation} claimed $O(k \log |V|)$-approximation
algorithms for the node-weighted SNDP and the network activation problem with element-connectivity 
constraints, but these contained an error.
\end{tablenotes}
\end{threeparttable}
\end{table*}


Let us present a high level overview of our algorithms.
Our algorithms first reduce the problem with high connectivity requirements
to the \emph{augmentation problem},
which asks to increase the connectivity of demand pairs by one.
This is a standard trick for SNDP,
and we will show in Section~\ref{sec.preliminaries} that this trick can work
even for the PCNAP.
Then, our algorithms compute an optimal solution to an LP relaxation, and 
discards some of the demand pairs according to the optimal solution, 
which is a popular way to deal with prize-collecting problems
since Bienstock~et~al.~\cite{BienstockGSW93}.
In the last step, the algorithms solves the problem using the greedy spider cover algorithm.
To obtain an approximation guarantee, we are required to show that 
the minimum density of spiders can be bounded
in terms of the optimal value of the LP relaxation.
We achieve this by presenting a primal-dual algorithm for computing spiders,
which is the same approach as~\cite{ChekuriEV12,BateniHL13,Vakilian13}.

As observed from this overview,
our algorithms rely on many ideas
given in the previous studies on the prize-collecting
SNDP and the network activation problem.
However, it is highly nontrivial to apply these ideas for the PCNAP, and 
we required several new ideas to obtain our algorithms.
Specifically,
the technical contributions of the present paper are the following three new findings:
an LP relaxation of the problem,
a primal-dual algorithm for computing spiders, and a 
potential
function for analyzing the greedy spider cover algorithm.
Below we explain these one by one.

\subsubsection*{LP relaxation}
Nutov's spider decomposition theorem is useful for 
the biset covering problem defined from the
SNDP and the network activation problem, but 
we have to strengthen it for solving their prize-collecting versions. 
We define an LP relaxation of the problem
and compare the minimum density of spiders with the density of
fractional solutions feasible to this relaxation. The same attempt has
been made previously by
\cite{BateniHL13,ChekuriEV12,Konemann13CoRR} for the node-weighted SNDP,
but our situation is much more complicated.
Each connectivity requirement in the node-weighted SNDP can be simply represented 
by demands on the number of chosen nodes in node cuts of graphs,
which naturally formulates an LP relaxation that performs well.
On the other hand,
the network
activation problem requires to decide which edges are
activated for covering bisets in addition to the decision on which weights are assigned to nodes
for activating the edges.
Hence an LP relaxation for the network activation problem needs
variables corresponding to edges and nodes whereas that for the node-weighted SNDP 
needs only variables corresponding to nodes.
However, 
dealing with both edge and node variables introduces 
a large integrality gap into
a natural LP relaxation for the network
activation problem, as we will see in Section~\ref{sec.LP}.
Hence we require to formulate an LP relaxation carefully.

In the present paper, 
we propose a new LP that lifts the natural LP relaxation for the PCNAP.
It is non-trivial even to see that our LP relaxes the PCNAP.
We prove it
using the structure of uncrossable biset families, wherein
any uncrossable biset family
can be decomposed into 
a polynomial number of ring biset families,
and 
the degree of each node is at most two in
any minimal edge cover of a ring biset family.
In addition,
the main result in this paper implies that
our LP has small integrality gap.

Let us mention that
the idea on formulating our LP relaxation is potentially useful for other covering problems.
The author pointed out
in his recent work~\cite{Fukunaga14}
that
a natural LP relaxation has a large integrality gap for many covering problems
in node-weighted graphs. He also presented several tight approximation algorithms
using the LP relaxations designed based on the idea we propose in the present paper.

\subsubsection*{Primal-dual algorithm for computing spiders}
For bounding the minimum density of spiders in terms of optimal
values of our relaxation, we will present a primal-dual algorithm for computing
spiders. 
Usually, a primal-dual algorithm computes fractional solutions feasible to
the dual of an LP relaxation together with primal solutions, but this seems
difficult for our relaxation because of its complicated form.
Hence, our algorithm does not directly
compute solutions feasible to the dual of our relaxation.
Instead, we define another LP simpler than our relaxation,
and our algorithm computes feasible solutions to the dual of this
simpler LP.
Although the simpler LP does not relax our relaxation, 
we can show that it is within a constant factor of our relaxation
if biset families are restricted to laminar families of cores, which are
bisets that do not include more than one minimal biset.
Our primal-dual algorithm 
computes dual solutions
that assign non-zero values only to variables corresponding to cores in 
laminar families.
Hence, the density of spiders can be analyzed in terms of our relaxation.

Summarizing, our algorithm uses two different LPs:
the LP based on the structure of uncrossable biset families
is used for deciding which demand pairs are discarded in the first step, and
the simpler LP 
with laminar core families is used
in the second step that iterates choosing spiders.
We note that the simpler LP cannot be used 
in the first step because of two reasons.
First,
we do not know beforehand which laminar core families will be used, and second, we have different laminar families in distinct iterations.

Although
our primal-dual algorithm for the simpler LP 
seems to be similar to primal-dual algorithms known for related
problems, 
its design and analysis is not trivial.
 One reason for this is the
existence of more than one choices of weights for each end node of activated edges as we have already
mentioned. Another reason is the involved structure of bisets. Since a biset is defined as an ordered pair of
two node sets, covering a biset family by edges is much more difficult problem than covering a set family,
for which primal-dual algorithms are often studied. Indeed, our algorithm utilizes many non-trivial
properties of uncrossable biset families.

\subsubsection*{Potential function for analyzing greedy spider cover algorithm}
Nutov~\cite{Nutov12uncrossable} claimed that
repeatedly choosing a constant approximation of minimum density spiders
achieves 
$O(\log |V|)$-approximation for covering uncrossable biset families.
This claim is true if biset families are defined from
edge-connectivity requirements.
However it is not true for all uncrossable biset families.
The claim is based on the fact that contracting a spider with $f$ feet
decreases the number of minimal bisets by a constant fraction of $f$.
However there is a case in which contracting a spider does not decrease
the number at all (see Section~\ref{sec.potential}).
Chekuri, Ene, and Vakilian~\cite{Vakilian13}
showed that the claim is true for biset families arising from the
node-weighted SNDP,
but it cannot be extended to arbitrary uncrossable biset families,
including those from the network activation problem.

To rectify this situation,
we will define a new potential function.
The new potential function depends on
the numbers of minimal bisets and
 nodes shared by more
than two minimal bisets.
If the number of minimal bisets does not decrease considerably when a
spider is selected, 
many new minimal bisets share the head of the spider.
This fact motivates the definition of the potential function.

With this new potential function,
the definition of density of an edge set will be changed to the total weight for
activating it divided by the value of the potential function.
We cannot prove that the minimum density of spiders is at most that of
biset family covers after changing the definition of density.
Instead, we will show that a spider minimizing the density in the old definition
approximates 
the density of biset family covers in the new definition within a factor of $O(k)$.
This proves that the greedy spider covering algorithm achieves
 $O(k\log |V|)$-approximation for the biset
covering problem with uncrossable biset families.
Since Klein and Ravi~\cite{KleinR95},
the greedy spider cover algorithms have been applied to many problems
related to the node-weighted SNDP.
Considering this usefulness of the greedy spider cover algorithms,
our potential function is of independent interest
because it is required for analyzing the algorithms for
uncrossable biset families.

\subsection{Roadmap}
The remainder of this paper is organized as follows. 
Section~\ref{sec.preliminaries}
presents reduction from the PCNAP to the
augmentation problem
and introduces preliminary facts on biset families. 
Section~\ref{sec.LP}
defines our LP relaxation.
Section~\ref{sec.primal-dual} presents our primal-dual algorithm for
computing spiders, and Section~\ref{sec.potential}
presents a new potential function for analyzing the greedy spider covers.
Section~\ref{sec.algorithm} presents our algorithms, with
Section~\ref{sec.conclusion}
concluding this paper.

\section{Preliminaries}\label{sec.preliminaries}

\subsection{Reduction to the augmentation problem}\label{sec.augmentation}
First, we define the augmentation problem in detail.
We assume that there are two edge sets $E_0$ and $E$, and
activation functions are given for edges in $E$.
The connectivity of each 
demand pair $\{s_i,t_i\}$ is at least $k'-1$
in the graph $(V,E_0)$, 
and a subset $F$ of $E$ is feasible if
the connectivity of each demand pair in
$(V,E_0 \cup F)$ is at least $k'$.
The objective of the problem is to find a node weight function $w\colon V
\rightarrow W$ so that $E_w$ is feasible and
$w(V)$ is minimized.
In the prize-collecting augmentation problem, each demand pair
$\{s_i,t_i\}$
has a penalty $\pi_i$, and if the connectivity of $\{s_i,t_i\}$ is not
increased by $E_w$, then we
must pay the penalty. The objective of the prize-collecting
augmentation problem is to find a node weight function $w$ that
minimizes the sum of $w(V)$ and penalties of demand pairs of
connectivity smaller than $k'$ in $(V,E_0 \cup E_w)$.
PCNAP can be reduced to the
prize-collecting augmentation problem as follows.

\begin{theorem}\label{thm.augmentation}
 If the prize-collecting augmentation problem admits an
 $\alpha$-approximation algorithm, then PCNAP admits an $\alpha
 k$-approximation algorithm.
\end{theorem}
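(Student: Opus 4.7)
The plan is the standard iterative-augmentation reduction, adapted to cope with penalties. Starting from $E_0 = \emptyset$ and an active index set $A = [d]$, for $\ell = 1,\ldots,k$ I invoke the $\alpha$-approximation for the prize-collecting augmentation problem on the instance whose existing edge set is $E_0$, whose demand pairs are $\{\{s_i,t_i\} : i \in A,\ r_i \geq \ell\}$ with the original penalties $\pi_i$, and whose target connectivity is $k' = \ell$. Letting $w^{(\ell)}$ and $D^{(\ell)} \subseteq A$ be the returned weight function and set of dropped pairs, I update $E_0 \leftarrow E_0 \cup E_{w^{(\ell)}}$ and $A \leftarrow A \setminus D^{(\ell)}$. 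The final PCNAP solution uses $w(v) = \sum_{\ell=1}^{k} w^{(\ell)}(v)$, which lies in $W$ by closure of $W$ under addition.

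I would first verify by induction on $\ell$ that every sub-instance is well-posed: at the start of iteration $\ell$, any $i \in A$ with $r_i \geq \ell$ has connectivity at least $\ell - 1$ in $(V, E_0)$, since at the previous iteration it was either removed from $A$ or its connectivity was raised to $\ell - 1$. Next, let $(w^*, P^*)$ denote an optimal PCNAP solution, so $\OPT = w^*(V) + \sum_{i \in P^*}\pi_i$. Then $(w^*, P^* \cap A \cap \{i : r_i \geq \ell\})$ is a feasible solution of iteration $\ell$'s sub-instance: any remaining demand pair lies outside $P^*$, so $w^*$ already gives it connectivity at least $r_i \geq \ell$ in $(V, E_{w^*})$, which only grows when $E_0$ is added. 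Its cost is at most $w^*(V) + \sum_{i \in P^*}\pi_i = \OPT$, so the approximation guarantee yields
\[
w^{(\ell)}(V) + \sum_{i \in D^{(\ell)}} \pi_i \leq \alpha \cdot \OPT.
\]

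Summing over $\ell$ gives $\sum_{\ell=1}^{k} w^{(\ell)}(V) + \sum_{\ell=1}^{k} \sum_{i \in D^{(\ell)}} \pi_i \leq \alpha k \cdot \OPT$. Since $w(v) \geq w^{(\ell)}(v)$ for every $v$ and $\ell$, monotonicity of the activation functions gives $E_w \supseteq \bigcup_\ell E_{w^{(\ell)}}$, so every pair $i$ never appearing in any $D^{(\ell)}$ achieves connectivity at least $r_i$ in $(V, E_w)$ and incurs no penalty. Because $A$ shrinks whenever pairs are dropped, the sets $D^{(\ell)}$ are pairwise disjoint, so the total PCNAP objective of $w$ is bounded by $\sum_\ell w^{(\ell)}(V) + \sum_\ell \sum_{i \in D^{(\ell)}} \pi_i \leq \alpha k \cdot \OPT$, which is the claimed guarantee.

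The subtle point to watch is the use of the unscaled penalty $\pi_i$ in every sub-problem: this is precisely what lets a single global $(w^*, P^*)$ certify a bound of $\OPT$ in every iteration, and it is safe because a pair can be dropped at most once across the $k$ iterations, so the summed penalties $\sum_\ell \sum_{i \in D^{(\ell)}} \pi_i$ do not double-count. A secondary issue is that the $w^{(\ell)}$ must be combined by summation rather than by coordinatewise maximum, so that monotonicity of each $\psi^{uv}$ preserves all previously activated edges while the total weight telescopes to $\sum_\ell w^{(\ell)}(V)$.
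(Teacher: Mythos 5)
Your proof is correct and takes essentially the same route as the paper's: iterate the augmentation problem $k$ times, use the PCNAP optimum $w^*$ (with its satisfied/unsatisfied partition) as a feasibility certificate of cost at most $\OPT$ for each sub-instance, and combine the $w^{(\ell)}$ by summation so that monotonicity of the activation functions preserves all previously activated edges. The only cosmetic difference is that you track the dropped indices $D^{(\ell)}$ directly while the paper tracks the complementary satisfied indices $D_{k'}$ and the carried-forward set $I_{k'}$; the two accountings are equivalent, and your observation that the $D^{(\ell)}$ are pairwise disjoint is the same fact the paper uses implicitly when it bounds the summed penalties by $k\sum_{i\in[d]\setminus D^*}\pi_i$.
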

\begin{proof}
 We sequentially define instances of the prize-collecting augmentation
 problem. In the first instance,
 $E_0$ is set to be empty and $E$ is the edge set of the graph in the
 instance of the PCNAP. 
 Activation functions, demand pairs and their penalties are same as
 those in the PCNAP instance.
 The connectivity of each demand pair is $0$ in $(V,E_0)$,
 and the requirement of a demand pair is satisfied 
 if its connectivity is increased to at least one in $(V,E_0 \cup E_w)$.
 
 We define the $k'$-th instance after solving the $(k'-1)$-th instance.
 Let $w_{k'-1}$ be the node weights computed by the
 $\alpha$-approximation algorithm for the $(k'-1)$-th instance,
 and $D_{k'-1}$ be the set of indices of demand pairs that are
 satisfied by $w_{k'-1}$ in the $(k'-1)$-th instance.
 We move the edges activated by $w_{k'-1}$ from $E$ to $E_0$.
 For each $i \in D_{k'-1}$, the connectivity of $\{s_i,t_i\}$  is at least
 $k'-1$ in $(V,E_0)$ after the update.
 Let $I_{k'}=\{i \in D_{k'-1}\colon r_{i}\geq k'\}$.
 We define the demand pairs in the $k'$-th instance as $\{s_i,t_i\}$,
 $i\in I_{k'}$.
 The activation functions in the $k'$-th instance 
 are same as those in the PCNAP instance.

 We repeat the above sequence until the $k$-th instance is solved.
 Our solution to the PCNAP instance is $w=\sum_{k'=1}^k w_{k'}$.
 We prove that $w$ achieves $\alpha k$-approximation.
 Let $w^*$ be an optimal solution for the PCNAP instance, and $D^*=\{i
 \in [d] \colon \mbox{$\{s_i,t_i\}$ is satisfied by $E_{w^*}$}\}$.
 Then,
 the optimal value of the PCNAP instance is 
 $w^*(V) + \sum_{i \in [d]\setminus D^*}\pi_i$.
 If an edge is activated by $w^*$ in the PCNAP instance,
 then it is either in $E_0$ or is activated by $w^*$ in the $k'$-th
 instance of the prize-collecting augmentation problem.
 Hence, a demand pair $\{s_i,t_i\}$ with $i \in I_{k'}$
 is satisfied by $w^*$ if it is satisfied by $w^*$ in the PCNAP instance,
 implying that
 the objective value of $w^*$ 
 in the $k'$-th instance
 is at most $w^*(V) + \sum_{i \in I_{k'}\setminus D^*}\pi_i$.
 By the $\alpha$-approximability of $w_{k'}$,
 we have
\[
 w_{k'}(V) + \sum_{i \in I_{k'}\setminus D_{k'}}\pi_i \leq \alpha \left(w^*(V) + \sum_{i \in I_{k'}\setminus D^*}\pi_i\right).
\]
The objective value of $w$ in the PCNAP instance is 
\[
 \sum_{k'=1}^k \left(w_{k'}(V) + \sum_{i \in I_{k'}\setminus D_{k'}}\pi_i\right)
\leq 
 \alpha \sum_{k'=1}^k\left(w^*(V) + \sum_{i \in I_{k'}\setminus D^*}\pi_i\right)
\leq 
 \alpha k \left(w^*(V) + \sum_{i \in [d]\setminus D^*}\pi_i\right).
\]
\end{proof}

\subsection{Biset covering problem}\label{sec.biset}
Here, we formulate the prize-collecting augmentation problem 
as a problem of activating edges covering bisets.
A {\em biset} is an ordered pair $\hat{X}=(X,X^+)$ of subsets of $V$ such that $X \subseteq
X^+$. The former element of a biset is called the {\em inner-part} and the
letter is called the {\em outer-part}. 
We always let $X$ denote the inner-part of
a biset $\hat{X}$ and $X^+$ denote the outer-part of $\hat{X}$.
$X^+\setminus X$ is called the {\em boundary} of a biset $\hat{X}$ and is denoted
by $\Gamma(\hat{X})$.
For an edge set $E$, $\delta_E(\hat{X})$ denotes the set of edges in
$E$ that have one end-node in $X$ and the other in $V \setminus X^+$.
We say that an edge $e$ {\em covers} $\hat{X}$ if 
$e \in \delta_E(\hat{X})$, and a set $F$ of edges {\em covers} 
a biset family $\Vfam$ if each $\hat{X} \in \Vfam$ is covered by some
edge in $F$.

Let $i \in [d]$.
We say that a biset $\hat{X}$ {\em separates} a demand pair $\{s_i,t_i\}$ if 
$|X \cap \{s_i,t_i\}|=|\{s_i,t_i\}\setminus X^+|=1$.
We define $\Vfam^{\rm edge}_i$ as the family of bisets $\hat{X}$ such that
$X=X^+\subset V$, $|\delta_{E_0}(\hat{X})|=k-1$, and $\hat{X}$ separates 
the demand pair $\{s_i,t_i\}$.
According to Menger's theorem, $F \subseteq E$ 
increases the edge-connectivity of $\{s_i,t_i\}$ in the augmentation problem
if and only if 
$F$ covers $\Vfam^{\rm edge}_i$.
We define $\Vfam^{\rm node}_i$ as the family of bisets $\hat{X}$ such that
$|\delta_{E_0}(\hat{X})|+|\Gamma(\hat{X})|=k-1$ and $\hat{X}$ separates the demand pair $\{s_i,t_i\}$.
$F \subseteq E$ 
increases the node-connectivity of $\{s_i,t_i\}$ 
if and only if 
$F$ covers $\Vfam^{\rm node}_i$.
We define $\Vfam^{\rm ele}_i$ as the family of bisets $\hat{X} \in
\Vfam^{\rm node}_i$ such that
$\Gamma(\hat{X})\cap \{s_{i'},t_{i'}\}=\emptyset$ for each $i' \in [d]$.
$F \subseteq E$ 
increases the element-connectivity of $\{s_i,t_i\}$ 
if and only if 
$F$ covers $\Vfam^{\rm ele}_i$.

For two bisets $\hat{X}$ and $\hat{Y}$, we define
$\hat{X}\cap \hat{Y}= (X\cap Y, X^+ \cap Y^+)$, 
$\hat{X}\cup \hat{Y}= (X\cup Y, X^+ \cup Y^+)$, 
and 
$\hat{X}\setminus \hat{Y}= (X\setminus Y^+, X^+ \setminus Y)$.
A biset family $\Vfam$ is called {\em uncrossable}
if, for any $\hat{X},\hat{Y} \in \Vfam$, 
(i)~$\hat{X}\cap \hat{Y}, \hat{X}\cup \hat{Y}\in \Vfam$, or 
(ii)~$\hat{X}\setminus \hat{Y}, \hat{Y}\setminus \hat{X}\in \Vfam$ holds.
The following lemma indicates that the uncrossable biset families characterize
the augmentation problem with edge- and element-connectivity requirements.

\begin{lemma}\label{lem.edge-element}
For any $D\subseteq [d]$,
biset families
$\bigcup_{i \in D}\Vfam^{\rm edge}_i$ and
$\bigcup_{i \in D}\Vfam^{\rm ele}_i$
are uncrossable.
\end{lemma}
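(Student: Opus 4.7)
I would prove both claims simultaneously by combining two ingredients: a set-theoretic case analysis showing that the property ``separates some demand pair in $D$'' is preserved by one of the two uncrossing operations, and the standard submodularity of biset cuts that controls the sizes of the resulting cuts.

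Let $\hat X \in \Vfam^{\star}_i$ separate $\{s_i,t_i\}$ and $\hat Y \in \Vfam^{\star}_j$ separate $\{s_j,t_j\}$ for $\star \in \{\mathrm{edge},\mathrm{ele}\}$ and $i,j \in D$. Because $\Gamma(\hat X) = \emptyset$ in the edge case and $\Gamma(\hat X) \cap \{s_\ell,t_\ell\} = \emptyset$ for every $\ell$ in the element case, each terminal of pairs $i$ and $j$ lies strictly inside or strictly outside each of $\hat X$ and $\hat Y$. After normalizing by symmetry to $s_i \in X$, $t_i \notin X^+$, $s_j \in Y$, $t_j \notin Y^+$, only four boolean quantities remain free: whether each of $s_j,t_j$ lies in $X$ and whether each of $s_i,t_i$ lies in $Y$. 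For each of the bisets $\hat X \cap \hat Y$, $\hat X \cup \hat Y$, $\hat X \setminus \hat Y$, $\hat Y \setminus \hat X$ the condition ``separates pair $i$ or pair $j$'' translates into a simple disjunction on these booleans. A short propositional check on the sixteen cases then shows that the failure of option~(i)—that one of $\hat X \cap \hat Y$, $\hat X \cup \hat Y$ separates no demand pair in $D$—and the failure of option~(ii)—that one of $\hat X \setminus \hat Y$, $\hat Y \setminus \hat X$ separates no demand pair in $D$—cannot both hold simultaneously, so at least one option survives. (The degenerate case $i=j$ is immediate, since then $\hat X \cap \hat Y$ and $\hat X \cup \hat Y$ both separate $\{s_i,t_i\}$.)

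To finish, I would pass from ``separates a demand pair in $D$'' to actual membership in the relevant family using submodularity. In the edge case, $X = X^+$ is preserved by all four operations, and the inequalities
\[
|\delta_{E_0}(\hat X)| + |\delta_{E_0}(\hat Y)| \geq |\delta_{E_0}(\hat X \cap \hat Y)| + |\delta_{E_0}(\hat X \cup \hat Y)|
\]
and its posimodular analogue for the two differences bound the relevant sums by $2(k-1)$. In the element case the same inequalities hold with $|\delta_{E_0}|$ replaced by $|\delta_{E_0}| + |\Gamma|$, and the containment $\Gamma(\hat X \diamond \hat Y) \subseteq \Gamma(\hat X) \cup \Gamma(\hat Y)$ for $\diamond \in \{\cap,\cup,\setminus\}$ ensures that no terminal slips into a new boundary. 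The augmentation precondition, combined with Menger's theorem, shows that every biset separating a demand pair has (element-)cut of size at least $k-1$; together with the sum bound $2(k-1)$ this forces each of the two new bisets in the surviving option to have cut of size exactly $k-1$, so they lie in the corresponding family.

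The main obstacle is the separation case analysis in the first step, and the cleanest execution is to encode each terminal's side as a boolean value and reduce ``separates pair~$\ell$'' to a comparison of simple boolean expressions on the four free variables. Everything that follows—the submodularity bookkeeping and the boundary and edge-versus-element distinctions—is routine once separation is in hand.
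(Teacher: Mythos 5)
Your proposal is correct and follows the same strategy the paper attributes to this lemma (the paper does not spell out a proof, instead citing \cite{FleischerJW06,Nutov12uncrossable} and naming the two ingredients: submodularity/posimodularity of $|\delta_{E_0}|$ and $|\Gamma|$, plus a case analysis). Your plan makes that case analysis concrete via the four-boolean encoding and correctly closes it with Menger's theorem and the boundary-containment $\Gamma(\hat X\diamond\hat Y)\subseteq\Gamma(\hat X)\cup\Gamma(\hat Y)$, which is exactly the point needed to keep terminals out of new boundaries in the element case.
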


Lemma~\ref{lem.edge-element} follows from the submodularity and
posimodularity of $|\delta_{E_0}(\cdot)|$ and $|\Gamma(\cdot)|$, and a simple case analysis.
The same claim can be found in
\cite{FleischerJW06,Nutov12uncrossable}, and we recommend referring to
them for the proof of Lemma~\ref{lem.edge-element}.

By Lemma~\ref{lem.edge-element},
the problem of finding a minimum weight edge set covering 
a given uncrossable biset family
contains the augmentation problem with edge- or element-connectivity
requirements. The biset family
$\bigcup_{i \in D}\Vfam^{\rm node}_i$
defined from the node-connectivity requirements is not
necessarily uncrossable. However, it was shown previously in~\cite{ChuzhoyK12,Nutov12uncrossable,Nutov12subset}
that this family can be decomposed into uncrossable families, and the union of
covers of these uncrossable families gives a good approximate solution
for the node-connectivity augmentation problem.
We apply this approach for dealing with node-connectivity constraints
(see Section~\ref{sec.algorithm}).

We define the {\em biset covering problem}
as 
the problem of 
minimizing the sum of node weights under the constraint that the edges
activated by the node weights cover given biset families.
The prize-collecting version of the biset covering problem
is defined as follows.
Given an undirected graph $G=(V,E)$ such that each edge in $E$ is
associated with an activation function, 
demand pairs
$\{s_1,t_1\},\ldots,\{s_d,t_d\}$ with penalties
$\pi_1,\ldots,\pi_d$,
and a biset family $\Vfam$ on $V$.
For $i \in [d]$, let $\Vfam_i$ be the family of bisets in $\Vfam$
that separate $\{s_i,t_i\}$.
We say that $\hat{X} \in \Vfam$ is {\em violated} by an edge set $F\subseteq E$
if $\delta_F(\hat{X})=\emptyset$.
The penalty of $w\colon V \rightarrow W$ is $\sum \pi_i$ where the summation is
taken over all $i \in [d]$ such that $E_w$ violates some biset in $\Vfam_i$.
The objective of the problem is to find $w\colon V \rightarrow W$ that minimizes 
the sum of $w(V)$ and penalty of $w$.
This problem generalizes
the prize-collecting augmentation problem,
and hence, it suffices to present an algorithm for this problem.

Our results require several properties of uncrossable biset families.
We say that bisets $\hat{X}$ and $\hat{Y}$ are {\em strongly disjoint}
when both $X\cap Y^+=\emptyset$ and $X^+ \cap Y=\emptyset$ hold.
When $X \subseteq Y$ and $X^+ \subseteq Y^+$,
we say $\hat{X}\subseteq \hat{Y}$.
Minimality and maximality in a biset family are defined with regard to
inclusion.
A biset family $\Vfam$ is called {\em strongly laminar}
when, if $\hat{X},\hat{Y}\in \Vfam$ are not strongly disjoint, then 
they are comparable (i.e.,
$\hat{X}\subseteq \hat{Y}$ or $\hat{Y}\subseteq \hat{X}$).
A minimal biset in a biset family $\Vfam$ is called a {\em min-core}, and
$\Mfam_{\Vfam}$ denotes the family of min-cores in $\Vfam$.
A biset is called a {\em core} if it includes only one min-core, and 
$\Cfam_{\Vfam}$ denotes the family of cores in $\Vfam$,
where min-cores are also cores.
When $\Vfam$ is clear from the context, we may simply denote them
by $\Mfam$ and $\Cfam$.

For a biset family $\Vfam$, biset $\hat{X}$, and node $v$, 
$\Vfam(\hat{X})$ denotes $\{\hat{Y} \in \Vfam \colon
\hat{X}\subseteq \hat{Y}\}$
and $\Vfam(\hat{X},v)$ denotes $\{\hat{Y} \in \Vfam(\hat{X}) \colon
v \not\in Y^+\}$.
A biset family $\Vfam$ is called a {\em ring-family}
if
$\hat{X}\cap \hat{Y}, \hat{X}\cup \hat{Y}\in \Vfam$ hold
for any $\hat{X},\hat{Y} \in \Vfam$.
A maximal biset in a ring-family is unique because ring-families are closed under union.

\begin{lemma}\label{lem.uncrossable}
If $\Vfam$ is an uncrossable family of bisets, then the following properties hold\/{\rm :}
\begin{enumerate}
 \item[\rm (i)] $\Cfam(\hat{X})$ is a ring-family for any 
	      $\hat{X} \in \Mfam$.
 \item[\rm (ii)] Let $\hat{X},\hat{Y} \in \Mfam$ be distinct min-cores.
	      For any $\hat{X}' \in \Cfam(\hat{X})$ and $\hat{Y}' \in
	      \Cfam(\hat{Y})$, 
	      both $\hat{X}'\setminus \hat{Y}'\in \Cfam(\hat{X})$
	      and $\hat{Y}'\setminus \hat{X}'\in \Cfam(\hat{Y})$
	      hold.
 \item[\rm (iii)] Let $\hat{X},\hat{Y} \in \Mfam$ be distinct min-cores.
	      Then $\hat{Y}$ is strongly disjoint with
	      any $\hat{X}' \in \Cfam(\hat{X})$.
	      In particular, min-cores are pairwise strongly disjoint.
\end{enumerate}
\end{lemma}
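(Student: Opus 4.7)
The plan is to deduce all three properties from uncrossability via case analysis, relying repeatedly on one observation: any biset in $\Vfam$ strictly contained in a min-core violates minimality. As a first step I would establish the auxiliary fact that distinct min-cores $\hat{X},\hat{Y}$ are strongly disjoint. Applying uncrossability to $\hat{X},\hat{Y}$, the intersection branch yields $\hat{X}\cap\hat{Y}\in\Vfam$ strictly inside both min-cores, contradicting minimality, so the difference branch must hold; minimality of $\hat{X}$ then forces $\hat{X}\setminus\hat{Y}=\hat{X}$, i.e.\ $X\cap Y^+=X^+\cap Y=\emptyset$.

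I would prove (iii) next, because (ii) reuses it. Fix $\hat{X}'\in\Cfam(\hat{X})$ and apply uncrossability to $\hat{X}'$ and $\hat{Y}$. The intersection branch would give $\hat{X}'\cap\hat{Y}\in\Vfam$ contained in $\hat{Y}$, so by minimality $\hat{Y}\subseteq\hat{X}'$, putting two distinct min-cores in $\hat{X}'$ and contradicting that $\hat{X}'$ is a core. Hence the difference branch applies, and minimality of $\hat{Y}$ forces $\hat{Y}\setminus\hat{X}'=\hat{Y}$, i.e.\ $\hat{Y}$ is strongly disjoint from $\hat{X}'$.

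For (i), take $\hat{X}',\hat{Y}'\in\Cfam(\hat{X})$. In the difference branch of uncrossability, $\hat{X}'\setminus\hat{Y}'\in\Vfam$ would contain a min-core which, sitting inside the core $\hat{X}'$, must equal $\hat{X}$; but $\hat{X}\subseteq\hat{Y}'$ gives $X\subseteq Y^{'+}$, so $\hat{X}\not\subseteq\hat{X}'\setminus\hat{Y}'$, a contradiction. Thus the intersection branch holds, so $\hat{X}'\cap\hat{Y}',\hat{X}'\cup\hat{Y}'\in\Vfam$, and $\hat{X}$ sits inside both. To see that neither contains a second min-core, I would take any hypothetical extra min-core $\hat{Z}$ and apply (iii) to conclude that $\hat{Z}$ is strongly disjoint from both $\hat{X}'$ and $\hat{Y}'$, forcing $Z\subseteq X'\cup Y'$ to be empty, which contradicts non-emptiness of min-cores.

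For (ii), two applications of (iii) with roles swapped give $\hat{X}$ strongly disjoint from $\hat{Y}'$ and $\hat{Y}$ strongly disjoint from $\hat{X}'$, so $\hat{X}\subseteq\hat{X}'\setminus\hat{Y}'$ and $\hat{Y}\subseteq\hat{Y}'\setminus\hat{X}'$. The intersection branch of uncrossability for $\hat{X}',\hat{Y}'$ would force $\hat{X}\subseteq\hat{X}'\cap\hat{Y}'\subseteq\hat{Y}'$, putting a second min-core inside $\hat{Y}'$; so the difference branch must hold, giving $\hat{X}'\setminus\hat{Y}',\hat{Y}'\setminus\hat{X}'\in\Vfam$. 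Any min-core inside $\hat{X}'\setminus\hat{Y}'\subseteq\hat{X}'$ must then be $\hat{X}$, placing $\hat{X}'\setminus\hat{Y}'$ in $\Cfam(\hat{X})$, and the claim for $\hat{Y}'\setminus\hat{X}'$ is symmetric. The only real obstacle I anticipate is careful bookkeeping of how inner and outer parts interact under $\cap$, $\cup$ and $\setminus$; once the preliminary strong-disjointness fact is in hand, every branch closes mechanically.
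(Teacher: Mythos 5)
Your proof is correct; the paper gives no proof of this lemma (it cites Nutov), and your direct case analysis on the two uncrossing branches, with (iii) established first and then reused in (i) and (ii), is the standard route. Two minor housekeeping points, neither a real gap: the argument implicitly uses that every biset of $\Vfam$ contains some min-core (true by finiteness) and that min-cores have non-empty inner parts (standard, and necessary here, since a biset with empty inner part cannot be covered by any edge); and in the preliminary step, $\hat{X}\cap\hat{Y}$ need not lie \emph{strictly} inside both min-cores, but the contradiction still obtains because $\hat{X}\cap\hat{Y}\in\Vfam$ lies weakly below both $\hat{X}$ and $\hat{Y}$, and minimality of each then forces $\hat{X}=\hat{X}\cap\hat{Y}=\hat{Y}$, contradicting distinctness. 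The key pivots — in (i), that $\hat{X}\subseteq\hat{X}'\setminus\hat{Y}'$ together with $X\subseteq (Y')^{+}$ forces $X=\emptyset$, and in (ii), using (iii) to place $\hat{X}$ inside $\hat{X}'\setminus\hat{Y}'$ and $\hat{Y}$ inside $\hat{Y}'\setminus\hat{X}'$ before ruling out the intersection branch — are exactly right.
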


The proof of Lemma~\ref{lem.uncrossable} can be found
in~\cite{Nutov12uncrossable}. 

For a biset family $\Vfam$ and an edge set $F$, let $\Vfam_F=\{\hat{X}
\in \Vfam\colon \delta_F(\hat{X})=\emptyset\}$.
The following lemma is required when we compute solutions recursively.

\begin{lemma}\label{lem.residual-uncrossable}
 Let $\Vfam$ be a family of bisets and $F\subseteq E$.
 Then $\Vfam_F$ is
 uncrossable if $\Vfam$ is uncrossable.
 $\Vfam_F$ is a ring-family if $\Vfam$ is a ring-family.
\end{lemma}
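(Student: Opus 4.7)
The plan is to prove a single edge-monotonicity inclusion and derive both statements as immediate corollaries. Concretely, for any edge $e\in E$ and any two bisets $\hat X,\hat Y$, I claim
\[
\delta_{\{e\}}(\hat X\cap\hat Y)\cup \delta_{\{e\}}(\hat X\cup\hat Y)\cup \delta_{\{e\}}(\hat X\setminus\hat Y)\cup \delta_{\{e\}}(\hat Y\setminus\hat X)\;\subseteq\;\delta_{\{e\}}(\hat X)\cup\delta_{\{e\}}(\hat Y).
\]
Summing this over $e\in F$ will give the analogous inclusion for $\delta_F$. Since every biset appearing on the right-hand side of either uncrossing axiom has empty $\delta_F$ as soon as $\hat X$ and $\hat Y$ do, membership in $\Vfam_F$ will be preserved under every uncrossing that is available in $\Vfam$.

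To establish the inclusion, I would fix $e=\{u,v\}$ and argue by a short case analysis on which regions of the Venn diagram $u$ and $v$ occupy. For instance, if $e$ covers $\hat X\cap\hat Y$, then one endpoint, say $u$, lies in $X\cap Y$ while $v\notin X^+\cap Y^+$; then $v\notin X^+$ gives $e\in\delta_{\{e\}}(\hat X)$, and $v\notin Y^+$ gives $e\in\delta_{\{e\}}(\hat Y)$. The same kind of two-line check handles $\hat X\cup\hat Y$ (using that the two endpoints of a covering edge lie on opposite sides of $X^+\cup Y^+$) and $\hat X\setminus\hat Y=(X\setminus Y^+,X^+\setminus Y)$ (where a covering edge has $u\in X\setminus Y^+$ and $v\in(V\setminus X^+)\cup Y$, producing membership in $\delta_{\{e\}}(\hat X)$ or $\delta_{\{e\}}(\hat Y)$ respectively); $\hat Y\setminus\hat X$ is symmetric.

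With the inclusion in hand, the two statements of the lemma fall out. Take $\hat X,\hat Y\in\Vfam_F$, so $\delta_F(\hat X)=\delta_F(\hat Y)=\emptyset$. If $\Vfam$ is uncrossable, then either $\hat X\cap\hat Y,\hat X\cup\hat Y\in\Vfam$ or $\hat X\setminus\hat Y,\hat Y\setminus\hat X\in\Vfam$; in either case the inclusion forces $\delta_F$ of the resulting bisets to be empty, placing them in $\Vfam_F$, so $\Vfam_F$ is uncrossable. If $\Vfam$ is a ring-family, then the intersection and union are always in $\Vfam$, and again the inclusion forwards them into $\Vfam_F$, so $\Vfam_F$ is a ring-family.

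There is no real obstacle beyond careful case bookkeeping for the edge-monotonicity inclusion; once that is stated in the right form, the reduction from $\Vfam$ to $\Vfam_F$ is mechanical. I would also remark that this lemma is the standard reason why greedy/augmentation schemes for uncrossable biset covers can be iterated: after each partial solution $F$, the residual family is of the same structural type, so the same algorithmic primitives remain applicable in the next iteration.
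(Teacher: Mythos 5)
Your proof is correct and follows the same route as the paper: the paper simply asserts that $\delta_F(\hat X)=\delta_F(\hat Y)=\emptyset$ forces $\delta_F$ of $\hat X\cap\hat Y$, $\hat X\cup\hat Y$, $\hat X\setminus\hat Y$, $\hat Y\setminus\hat X$ to be empty and concludes, while you spell out the per-edge Venn-diagram check behind that assertion. Your case analysis is sound, so this is a faithful (slightly more explicit) version of the paper's own argument.
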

\begin{proof}
If bisets $\hat{X}$ and $\hat{Y}$ satisfy
$\delta_F(\hat{X})=\delta_F(\hat{Y})=\emptyset$,
then 
all $\delta_F(\hat{X}\cap \hat{Y})$, $\delta_F(\hat{X}\cup \hat{Y})$,
$\delta_F(\hat{X}\setminus \hat{Y})$, and $\delta_F(\hat{Y}\setminus
 \hat{X})$
are empty.
The claim follows from this fact.
\end{proof}

Below, we consider directed edges for technical reasons.
$A$ denotes the set of directed edges obtained by orienting the edges
in $E$ in both directions.
$\delta^-_A(\hat{X})$ denotes $\{uv \in A\colon v \in X, u \in V
\setminus X^+\}$ for a biset $\hat{X}$.
We say that a directed edge $e$ covers a biset $\hat{X}$ if $e \in
\delta^-_A(\hat{X})$,
and a set $F$ of directed edges covers a biset family $\Vfam$ if each
biset in $\Vfam$ is covered by some edge in $F$.
The following lemma will be required to prove that our LP relaxes the
prize-collecting biset covering problem.

\begin{lemma}\label{lem.ring-cover}
 Let $F$ be an inclusion-wise minimal set of directed edges 
 that covers a ring-family $\Vfam$ of bisets.
 Then the in-degree and out-degree of each node in the graph $(V,F)$ 
 is at most one.
\end{lemma}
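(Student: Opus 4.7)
My plan is to prove both statements (in-degree $\le 1$ and out-degree $\le 1$) by contradiction, using the ring-family closure together with the minimality of $F$ to produce two ``witness'' bisets that are then combined.

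For the in-degree statement, suppose some node $v$ has two distinct incoming edges $e_1=x_1v$ and $e_2=x_2v$ in $F$. By the minimality of $F$, for each $j\in\{1,2\}$ there is a biset $\hat{X}_j\in\Vfam$ with $\delta^-_A(\hat{X}_j)\cap F=\{e_j\}$; in particular $v\in X_j$ and $x_j\notin X_j^+$. Because $e_2\notin\delta^-_A(\hat{X}_1)$ while $v\in X_1$, we must have $x_2\in X_1^+$, and symmetrically $x_1\in X_2^+$. The plan is now to look at $\hat{X}_1\cup\hat{X}_2\in\Vfam$: its inner part contains $v$, and its outer part contains both $x_1$ and $x_2$, so neither $e_1$ nor $e_2$ lies in $\delta^-_A(\hat{X}_1\cup\hat{X}_2)$. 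Some third edge $e_3\in F$ must cover $\hat{X}_1\cup\hat{X}_2$; its head lies in $X_1$ or $X_2$ (say $X_1$), and its tail lies outside $X_1^+\cup X_2^+\supseteq X_1^+$. Hence $e_3\in\delta^-_A(\hat{X}_1)\cap F=\{e_1\}$, contradicting $e_1\notin\delta^-_A(\hat{X}_1\cup\hat{X}_2)$.

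For the out-degree statement, the argument is symmetric but uses closure under intersection instead. Suppose $e_1=vy_1$ and $e_2=vy_2$ both leave $v$ and lie in $F$, and pick witnesses $\hat{X}_j$ with $\delta^-_A(\hat{X}_j)\cap F=\{e_j\}$, so $y_j\in X_j$ and $v\notin X_j^+$. Since $v\notin X_1^+$ already, $e_2\notin\delta^-_A(\hat{X}_1)$ forces $y_2\notin X_1$, and symmetrically $y_1\notin X_2$. Look at $\hat{X}_1\cap\hat{X}_2\in\Vfam$: its inner part $X_1\cap X_2$ contains neither $y_1$ nor $y_2$, so neither $e_1$ nor $e_2$ covers it. A third edge $e_3=x_3y_3\in F$ must cover it, with $y_3\in X_1\cap X_2$ and $x_3$ outside at least one of $X_1^+,X_2^+$ (say $X_1^+$); then $e_3\in\delta^-_A(\hat{X}_1)\cap F=\{e_1\}$, but $y_3\in X_2$ while $y_1\notin X_2$ gives $e_3\neq e_1$, a contradiction.

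I don't expect a major obstacle; the only thing to watch is bookkeeping with the boundary, i.e., being careful to distinguish ``outside $X$'' from ``outside $X^+$'' when checking which edges belong to $\delta^-_A(\cdot)$. The proof uses only the ring-family property (closure under union and intersection) and the minimality hypothesis, so no additional results from the excerpt are needed.
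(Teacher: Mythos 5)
Your proof is correct and follows essentially the same route as the paper: minimality gives witness bisets for each of the two parallel edges, and closure under union (for in-degree) or intersection (for out-degree) produces a biset that neither original edge covers, forcing a third edge that must coincide with one of the witnesses' unique covering edges, a contradiction. You spell out the intermediate inclusions (e.g.\ $x_2 \in X_1^+$, $y_2 \notin X_1$) more explicitly than the paper, but the argument is the same.
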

\begin{proof}
 Let $v \in V$.
 We see that at most one edge in $F$ leaves $v$.
 For arriving at a contradiction, suppose that
 $F$ contains two edges $e=vu$ and $e'=vu'$.
 By the minimality of $F$,
 there exist $\hat{X} \in \Vfam$  
 with $\delta^-_{F}(\hat{X})=\{e\}$ and 
 $\hat{X}' \in \Vfam$  
 with $\delta^-_{F}(\hat{X}')=\{e'\}$.
 Note that $v \not\in X^+ \cup (X')^+$.
 We have $\hat{X}\cap \hat{X}', \hat{X}\cup \hat{X}'\in \Vfam$
 because $\Vfam$ is a ring-family.
 $u \in X\setminus X'$ and $u' \in X' \setminus X$ hold,
 and hence $e,e' \not\in \delta_{F}^-(\hat{X}\cap\hat{X}')$ holds.
 However, this means that $\delta_{F}^-(\hat{X}\cap\hat{X}')$ contains
 an edge distinct from $e$ and $e'$,
 and that this edge covers $\hat{X}$ or $\hat{X}'$.
 This contradicts the definition of $\hat{X}$ or $\hat{X}'$.

 We can also see that
 $F$ contains at most one edge entering $v$.
 To the contrary, suppose that
 there are two edges $f=uv$ and $f'=u'v$ in $F$.
 There exist $\hat{Y} \in \Vfam$  
 with $\delta^-_{F}(\hat{Y})=\{f\}$ and 
 $\hat{Y}' \in \Vfam$  
 with $\delta^-_{F}(\hat{Y}')=\{f'\}$ by the minimality of
 $F$.
 Note that $v \in Y \cap Y'$.
 We have $\hat{Y}\cap \hat{Y}', \hat{Y}\cup \hat{Y}'\in \Vfam$.
 If $f$ covers $\hat{Y}\cup \hat{Y}'$,
 then it covers $\hat{Y}'$ as well, which is a contradiction.
 Hence $f$ does not cover $\hat{Y}\cup \hat{Y}'$.
 Similarly, we can see that $f'$ does not cover $\hat{Y}\cup
 \hat{Y}'$, which means that $\delta^-_{F}(\hat{Y}\cup \hat{Y}')$ contains
 an edge that is distinct from $f$ and $f'$, and it
  covers $\hat{Y}$ or $\hat{Y}'$.
 However, this contradicts the definition of $\hat{Y}$ or $\hat{Y}'$.
\end{proof}

\section{LP relaxation for prize-collecting augmentation problem}
\label{sec.LP}

In this section, we present an LP relaxation for the prize-collecting
augmentation problem. 
Henceforth, we let $k$ denote the target connectivity from now on; 
The connectivity of each demand pair is $k-1$ in $(V,E_0)$,
and the problem
requires an increase in the connectivity of each demand pair by at least one.

For an edge $uv\in A$,
let $\Psi^{uv}$ denote the set of pairs $(j,j') \in W \times W$
such that 
$\psi^{uv}(j,j') = \true$.
A natural integer programming (IP) formulation for the prize-collecting biset covering
problem can be given by preparing 
variables $x(uv,j,j') \in \{0,1\}$ for each $uv \in A$ and $(j,j') \in \Psi^{uv}$, 
$x(v,j) \in \{0,1\}$ for each $v \in V$ and $j \in W$, 
and $y(i) \in \{0,1\}$ for each $i \in [d]$.
$x(uv,j,j')=1$ indicates that $uv$ is activated 
by weights $w$ with $w(u)=j$ and $w(v)=j'$.
$x(v,j)$ is equal to 1 if $v$ is assigned the weight $j$, and $0$ otherwise.
$y(i)$ indicates whether the connectivity requirement for $\{s_i,t_i\}$
is satisfied, and
$y(i)=0$ holds when
all bisets separating $\{s_i,t_i\}$ are covered.
The connectivity constraints require that, for each $i \in [d]$ and 
$\hat{X} \in \Vfam_i$, 
$y(i)=1$ holds or $\hat{X}$ is
covered by an activated edge, which is represented by 
$\sum_{uv \in \delta^-_A(\hat{X})}\sum_{(j,j') \in \Psi^{uv}} x(uv,j,j') +y(i)\geq 1$.
If $x(uv,j,j')=1$, then $u$ and $v$ must be assigned the weights $j$ and $j'$,
respectively.
This is represented by $x(u,j) \geq x(uv,j,j')$ and 
$x(v,j')\geq x(uv,j,j')$ for each $uv \in A$ and $(j,j') \in \Psi^{uv}$.
The objective is to minimize $\sum_{v \in V}\sum_{j \in W}j \cdot x(v,j) +
\sum_{i \in [d]}\pi_i \cdot y(i)$.
In conclusion, IP can be described as follows:
\begin{align}
&\mbox{minimize} && \sum_{v \in V}\sum_{j \in W} j \cdot x(v,j)
 +\sum_{i\in [d]}\pi_i \cdot y(i)\notag\\
&\mbox{subject to} &&
\sum_{uv \in\delta^-_A(\hat{X})} \sum_{(j,j') \in \Psi^{uv}} x(uv,j,j') 
+ y(i)\geq 1
&&\mbox{for $i\in [d]$, $\hat{X} \in \Vfam_i$,}\notag\\ 
&&&
x(u,j) \geq x(uv,j,j')  &&\mbox{for $uv \in A$, $(j,j') \in \Psi^{uv}$,}\label{ip.c2}\\
 &&& 
 x(v,j')\geq x(uv,j,j')  &&\mbox{for $uv \in A$, $(j,j') \in \Psi^{uv}$,}\label{ip.c3}\\
&&& x(v,j) \in \{0,1\} &&\mbox{for $v \in V$, $j\in W$,}\notag \\
&&& x(uv,j,j') \in \{0,1\} &&\mbox{for $uv \in A$, $(j,j') \in \Psi^{uv}$,}\notag\\
&&& y(i) \in \{0,1\} &&\mbox{for $i \in [d]$.}\notag
\end{align}
However, the LP relaxation obtained by dropping off the integrality constraints
from this IP has an unbounded integrality gap as follows. Consider the case where 
$d=1$, $\Vfam_1$
consists of only one biset $\hat{X}$, and 
$\delta_E(\hat{X})$ contains
$m$ edges incident to a node $u \in V\setminus X^+$.
Moreover, $W=\{0,1\}$ and each
edge $uv$ is activated by weights $w(u)=1$ and $w(v)=0$.
Suppose $\pi_1=+\infty$ so that $y(1)=0$ holds in any optimal
solutions for the IP and LP relaxation. 
For this instance, an integral solution activates one edge from
$\delta^-_A(\hat{X})$ by assigning weight 1 to $u$ and weight 0 to the
other end-node of the
chosen edge, which achieves the objective value 1.
On the other hand, define a fractional
solution $x$ so that
$x(u,1)=1/m$, $x(v,0)=1/m$,
and $x(uv,1,0)=1/m$
for all $uv \in \delta^-_A(\hat{X})$,
and the other variables are equal to 0.
This solution
is feasible for the LP relaxation,
and its objective value is $1/m$. This example implies that
the integrality gap of the LP relaxation is at least $m$.

For this reason, we need another LP relaxation.
Our idea is to strengthen \eqref{ip.c2} and \eqref{ip.c3}.
In the above IP, 
$x(u,j)$ is bounded by $x(uv,j,j')$ from below in \eqref{ip.c2}.
Instead, our new constraints bound $x(u,j)$
by $\sum_{v \in X: uv \in X}\sum_{j'\in W:(j,j')\in \Psi^{uv}}x(uv,j,j')$ for each $\hat{X} \in \Vfam$ with
$u \not\in X^+$.
However, these constraints are so strong that solutions feasible
to the prize-collecting biset covering problem do not satisfy it.
To remedy this drawback, we introduce new variables $x(uv,j,j',\hat{C})$
for each $\hat{C} \in \Mfam_{\Vfam}$ to replace $x(uv,j,j')$.
 $x(uv,j,j',\hat{C})$ is used for covering $\hat{X} \in \Vfam(\hat{C})$.
For each $\hat{C} \in \Mfam_{\Vfam}$, $\hat{X} \in \Vfam(\hat{C})$, $u \in
V\setminus X^+$, and $j \in W$,
$x(u,j)$ is bounded by $\sum_{v \in
X:uv\in A}\sum_{j'\in W:(j,j')\in \Psi^{uv}}x(uv,j,j',\hat{C})$.
\eqref{ip.c3} is similarly modified.
Summarizing, the following is the proposed LP relaxation.
\begin{align}
 &{\LP}(\Vfam)=\hspace*{-2em} \notag\\
 &\mbox{minimize} && \sum_{v \in V}\sum_{j \in W}j \cdot x(v,j)
 +\sum_{i\in [d]}\pi_i \cdot y(i)\notag\\
&\mbox{subject to} &&
\sum_{uv \in \delta^-_A(\hat{X})} \sum_{(j,j')\in \Psi^{uv}}x(uv,j,j',\hat{C}) + y(i)\geq 1
&&\mbox{for $\hat{C} \in \Mfam_{\Vfam}$, $i\in [d]$, $\hat{X} \in \Vfam_i(\hat{C})$,}\label{lp-c1}\\
&&&
x(u,j) \geq \sum_{\stackrel{v \in X:}{uv \in A}}\sum_{\stackrel{j' \in W:}{(j,j')\in \Psi^{uv}}}x(uv,j,j',\hat{C}) &&
\mbox{for $\hat{C} \in \Mfam_{\Vfam}$, $\hat{X}\in \Vfam(\hat{C})$, $u \in V\setminus X^+$, $j \in W$,}\label{lp-c2}\\
&&&
 x(v,j') \geq 
 \sum_{\stackrel{u \in V\setminus X^+:}{uv\in A}}\sum_{\stackrel{j\in W:}{(j,j')\in\Psi^{uv}}}x(uv,j,j',\hat{C}) &&
\mbox{for $\hat{C}\in \Mfam_{\Vfam}$, $\hat{X}\in \Vfam(\hat{C})$, $v \in X$, $j' \in W$, }\label{lp-c3}\\
&&& x(v,j) \geq 0 &&\mbox{for $v \in V$, $j \in W$,}\notag \\
&&& x(uv,j,j',\hat{C}) \geq 0 &&\mbox{for $uv \in A$, $(j,j') \in \Psi^{uv}$, $\hat{C} \in \Mfam_{\Vfam}$,}\notag\\
&&& y(i) \geq 0 &&\mbox{for $i \in [d]$.}\notag
\end{align}

{\bf Note:}
In \cite{Fukunaga14}, the author applied a similar idea of lifting LP relaxations for solving several covering problems
in edge- and node-weighted graphs. He defined a new LP relaxation by 
replacing edge variables by variables corresponding to pairs of edges
and constraints, and showed that the new LP relaxation has better integrality gap than the original one. 
This idea cannot be applied to the SNDP and the network activation problem straightforwardly because
they have an exponential number of constraints. Hence we instead define a new variable for each pair of edges and
min-cores, which makes the number of new variables being polynomial.

\begin{lemma}\label{lem.relaxation}
 $\LP(\Vfam)$ is at most the optimal value of the prize-collecting
 biset covering problem when $\Vfam$ is uncrossable.
 \end{lemma}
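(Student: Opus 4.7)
The plan is to take a feasible integer solution $w\colon V\to W$ to the prize-collecting biset covering problem, let $D\subseteq[d]$ be the set of indices for which $E_w$ leaves some biset in $\Vfam_i$ uncovered (so the integer cost equals $w(V)+\sum_{i\in D}\pi_i$), and lift $w$ to a feasible solution $(x,y)$ of $\LP(\Vfam)$ with the same objective value. I will set $y(i)=1$ exactly when $i\in D$ and $x(v,j)=1$ exactly when $w(v)=j$; these variables alone realize the full LP objective, so the edge variables $x(uv,j,j',\hat{C})$ only need to satisfy \eqref{lp-c1}--\eqref{lp-c3}.

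For the edge variables I will orient each activated edge in both directions to form $\vec{E_w}\subseteq A$, which by feasibility of $w$ covers every biset in $\bigcup_{i\notin D}\Vfam_i$. For each min-core $\hat{C}\in\Mfam_\Vfam$ I will pick an inclusion-minimal $F_{\hat{C}}\subseteq\vec{E_w}$ covering every biset of $\Vfam(\hat{C})\cap\bigcup_{i\notin D}\Vfam_i$, and set $x(uv,j,j',\hat{C})=1$ exactly when $uv\in F_{\hat{C}}$ and $(w(u),w(v))=(j,j')$. Constraint \eqref{lp-c1} then follows immediately: for $i\in D$ the term $y(i)=1$ discharges it, and for $i\notin D$ every $\hat{X}\in\Vfam_i(\hat{C})$ is covered by some edge in $F_{\hat{C}}$ by construction.

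The heart of the proof is verifying the covering constraints \eqref{lp-c2} and \eqref{lp-c3}, which amount to: for every $\hat{X}\in\Vfam(\hat{C})$ and every $u\notin X^+$, at most one edge of $F_{\hat{C}}$ leaves $u$ into $X$, and symmetrically for edges entering a fixed $v\in X$. The plan is to mimic the uncrossing proof of Lemma~\ref{lem.ring-cover} on the ring-family $\Cfam(\hat{C})$ provided by Lemma~\ref{lem.uncrossable}(i): a violation would produce two edges $uv_1,uv_2\in F_{\hat{C}}$ and, by minimality of $F_{\hat{C}}$, witnesses $\hat{X}_1,\hat{X}_2$ uniquely covered by each; both contain $\hat{C}$, so uncrossing $\hat{X}_1$ and $\hat{X}_2$ in $\Vfam$ must place the intersection $\hat{X}_1\cap\hat{X}_2$ inside $\Vfam(\hat{C})$, after which an edge in $\delta^-_A(\hat{X}_1\cap\hat{X}_2)$ distinct from $uv_1,uv_2$ also lies in $F_{\hat{C}}$ and contradicts uniqueness of one of the witnesses.

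The main obstacle is that $\Vfam(\hat{C})$ itself is not in general a ring-family; only its sub-family $\Cfam(\hat{C})$ is, and a non-core biset in $\Vfam(\hat{C})$ contains another min-core and can be uncrossed via the ``difference'' branch of uncrossability into bisets leaving $\Vfam(\hat{C})$, where the ring-cover argument breaks down. Handling non-core witnesses requires reducing each one to a core witness in $\Cfam(\hat{C})$ that $uv_k$ still uniquely covers inside $F_{\hat{C}}$; I plan to achieve this by carving out the other min-cores contained in the witness, using the containment and strong-disjointness properties of Lemma~\ref{lem.uncrossable}(ii)--(iii). After the reduction, both witnesses lie in the ring-family $\Cfam(\hat{C})$, the difference branch of uncrossability is excluded (a core cannot contain a second min-core), and the argument of Lemma~\ref{lem.ring-cover} concludes the contradiction.
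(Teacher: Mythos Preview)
Your construction of the LP solution from an optimal integer solution $w$ (setting $y$, $x(v,j)$, and per-min-core minimal directed covers) is exactly the paper's construction. The divergence is in how constraints \eqref{lp-c2}--\eqref{lp-c3} are verified.

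The paper does \emph{not} reduce witnesses to cores. It simply asserts, citing Lemma~\ref{lem.uncrossable}, that $\Vfam(\hat{C})$ is a ring-family, and then invokes Lemma~\ref{lem.ring-cover} to conclude that every node has in- and out-degree at most one in $A_{\hat C}$, which makes the right-hand sides of \eqref{lp-c2}--\eqref{lp-c3} at most one. Your observation that Lemma~\ref{lem.uncrossable}(i) only guarantees the \emph{core} family $\Cfam(\hat{C})$ is a ring-family is correct; the paper's one-line appeal to Lemma~\ref{lem.uncrossable} does not justify that the larger family $\Vfam(\hat{C})$ is a ring-family, and in general it is not (take $V=\{a,b,c\}$, $\hat{C}=(\{a\},\{a\})$, $\hat{X}=(\{a,b\},\{a,b\})$, $\hat{Y}=(\{a,c\},\{a,c\})$, $\hat{B}=(\{b\},\{b\})$, $\hat{D}=(\{c\},\{c\})$: $\Vfam=\{\hat{C},\hat{X},\hat{Y},\hat{B},\hat{D}\}$ is uncrossable, but $\hat{X}\cup\hat{Y}\notin\Vfam(\hat{C})$). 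So you are being more careful here than the paper.

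That said, your proposed fix is only a sketch, and the step ``carve out the other min-cores from a non-core witness using Lemma~\ref{lem.uncrossable}(ii)--(iii)'' is not justified as written. Those two items are stated for \emph{cores}: (ii) subtracts a core in $\Cfam(\hat{C}')$ from a core in $\Cfam(\hat{C})$, and (iii) concerns strong disjointness of a min-core from a core. Your witness $\hat{X}_1$ is by hypothesis \emph{not} a core, so neither statement applies directly; and uncrossing $\hat{X}_1$ with a min-core $\hat{C}'\subseteq\hat{X}_1$ can land on the trivial branch ($\hat{X}_1\cap\hat{C}'=\hat{C}'$, $\hat{X}_1\cup\hat{C}'=\hat{X}_1$), which does not produce a smaller biset. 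Even if you do manufacture a core $\hat{X}_1'\in\Cfam(\hat{C})$ with $uv_1\in\delta_A^-(\hat{X}_1')$, you still need $\hat{X}_1'$ to lie in the family over which $F_{\hat{C}}$ is minimal (i.e., in $\bigcup_{i\notin D}\Vfam_i$) in order to reuse the uniqueness of $uv_1$; this is an additional point your plan does not address. In short, your construction coincides with the paper's and your diagnosis of the gap is correct, but the patch you outline needs a genuine argument beyond Lemma~\ref{lem.uncrossable}(ii)--(iii).
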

\begin{proof}
 Let $w\colon V \rightarrow W$ be a solution to the prize-collecting
 biset covering problem, and 
 let $A_w$ be the set of directed edges obtained by replacing each
 $\{u,v\}\in E_w$ with $uv$ and $vu$.
 For each $\hat{C} \in \Mfam_{\Vfam}$,
 let $A_{\hat{C}}$ be a minimal subset of $A_w$
 covering each $\hat{X} \in \Vfam(\hat{C})$ that is covered by $E_w$.
 We define an integer solution $(x,y)$ to $\LP(\Lfam)$ as follows:
 \begin{eqnarray*}
 y(i) &=& \begin{cases}
	1 & \mbox{if all bisets in $\Vfam_i$ are not covered by $E_w$,}\\
	0 & \mbox{otherwise,}
       \end{cases}\\
 x(uv,j,j',\hat{C})&=&
 \begin{cases}
	1 & \mbox{if $uv \in A_{\hat{C}}$ and $(j,j')=(w(u),w(v))$,}\\
	0 & \mbox{otherwise,}
       \end{cases}\\
 x(v,j) &=& 
 \begin{cases}
	1 & \mbox{if $j=w(v)$,}\\
	0 & \mbox{otherwise.}
       \end{cases}
 \end{eqnarray*}

We can see that the objective value of $(x,y)$ is at
most that of $w$. We here prove that $(x,y)$ is
 feasible for $\LP(\Vfam)$.
 Since $A_{\hat{C}}$ covers each $\hat{X} \in \Vfam_i(\hat{C})$ unless $y(i)=1$,
 we can see that \eqref{lp-c1} holds.
 By Lemma~\ref{lem.uncrossable}, 
 $\Vfam(\hat{C})$ is a ring-family.
 Hence, the right-hand side of \eqref{lp-c2} is at most one by
 Lemma~\ref{lem.ring-cover}.
 If it is one, then the left-hand side of \eqref{lp-c2} is also one 
 by the definition of $x$. Hence, $x$ satisfies \eqref{lp-c2}.
 It can be similarly observed from Lemma~\ref{lem.ring-cover}
 that $x$ satisfies \eqref{lp-c3}.
\end{proof}

In our algorithm, we first solve $\LP(\Vfam)$.
This is possible by the ellipsoid method 
under the assumption that 
a polynomial-time algorithm is available for computing
a minimal biset, including a specified node in its inner-part
over a ring-family.
This is
because the separation over the feasible region of $\LP(\Vfam)$ can be
done in polynomial time as follows.
The separation of \eqref{lp-c1} can be reduced to the submodular
function minimization problem for which polynomial-time algorithms are
known.
\eqref{lp-c2} has an exponential number of constraints for fixed $\hat{C}\in
\Mfam_{\Vfam}$, $u \in V$ and $j \in W$,
but a maximal biset in $\Vfam(\Cfam)$ such that $u \in V\setminus X^+$ is
unique and can be found in polynomial time by the above assumption and
from the fact that $\Vfam(\hat{C})$ is a ring-family.
Hence, it is sufficient to check a polynomial number of inequalities for the
separation of \eqref{lp-c2}, which can be done in
polynomial time.
The separation of \eqref{lp-c3} can be done similarly.
If $\Vfam$ is defined as $\bigcup_{i\in [d]}\Vfam_i^{\rm edge}$ or
$\bigcup_{i\in [d]}\Vfam_i^{\rm ele}$,
then the algorithm in the assumption is available, and the minimal biset can
be computed from maximum flows.
The separation of \eqref{lp-c1} can be done by the maximum flow
computation as well in such a case.
Moreover, $\LP(\Vfam)$ has a compact representation if
$\Vfam$ is $\bigcup_{i \in [d]}\Vfam_i^{\rm edge}$ or $\bigcup_{i \in
[d]}\Vfam_i^{\rm ele}$, 
and hence we can also use other LP solvers for solving
$\LP(\Vfam)$.

After solving $\LP(\Vfam)$, we round
each variable $y(i)$, $i \in [d]$ 
in the optimal solution
to either $0$ or $1$.
The demand pair $\{s_i,t_i\}$ is thrown away if $y(i)$ is rounded to
$1$.
We let $\NPCLP(\Vfam)$ denote the LP such that 
$y(i)$ is fixed to $0$ for all $i \in [d]$.
We then apply a primal-dual algorithm, given in the subsequent section, that
computes a spider for the remaining demand pairs.
The algorithm does not deal with $\NPCLP(\Vfam)$ directly but
runs on a simpler LP, which we call $\CoreLP(\Vfam)$.
The following is a description of $\CoreLP(\Vfam)$.
\begin{align}
&\CoreLP(\Vfam)= \hspace*{-4em} \notag\\
 &\mbox{minimize} && \sum_{v \in V}\sum_{j \in W}j \cdot (x_{\rm in}(v,j)+x_{\rm out}(v,j))\notag\\
&\mbox{subject to} &&
\sum_{uv \in \delta^-_{A}(\hat{X})} \sum_{(j,j')\in\Psi^{uv}}x(uv,j,j') \geq 1
&&\mbox{for $\hat{X} \in \Vfam$,}\label{label.primal-c1}\\
&&&
x_{\rm out}(u,j) \geq \sum_{\stackrel{v \in X:}{uv\in A}}\sum_{\stackrel{j' \in W:}{(j,j')\in\Psi^{uv}}}x(uv,j,j') &&
\mbox{for $\hat{X}\in \Vfam$, $u \in V\setminus X^+$, $j \in W$,
 }\label{primal-c2}\\
&&&
x_{\rm in}(v,j') \geq 
 \sum_{\stackrel{u \in V\setminus X^+:}{uv\in A}} \sum_{\stackrel{j \in W:}{(j,j')\in \Psi^{uv}}}x(uv,j,j') &&
\mbox{for $\hat{X}\in \Vfam$, $v \in X$, $j' \in W$, }\label{primal-c3}\\
&&& x_{\rm in}(v,j) \geq 0 &&\mbox{for $v \in V$, $j \in W$,}\notag \\
&&& x_{\rm out}(v,j) \geq 0 &&\mbox{for $v \in V$, $j \in W$,}\notag \\
&&& x(uv,j,j') \geq 0 &&\mbox{for $uv \in A$, $(j,j') \in \Psi^{uv}$.}\notag
\end{align}

Instead of $x(v,j)$ in $\LP(\Vfam)$, $\CoreLP(\Vfam)$ has two variables $x_{\rm in}(v,j)$ and
$x_{\rm out}(v,j)$ for each pair of $v \in V$ and $j \in W$, where
$x_{\rm in}(v,j)$ indicates if $v$ is assigned the weight $j$ for
activating edges entering $v$, and 
$x_{\rm out}(v,j)$ indicates if $v$ is assigned the weight $j$ for
activating edges leaving $v$.
We require this modification in order to obtain a primal-dual algorithm.

$\CoreLP(\Vfam)$ does not relax $\NPCLP(\Vfam)$
or the biset covering problem.
In fact, the analysis of our primal-dual algorithm does not use $\CoreLP(\Vfam)$.
The LP relaxation we use is $\CoreLP(\Lfam)$ defined from some
subfamily $\Lfam$ of $\Vfam$.
We do not know $\Lfam$ beforehand, but we can show that
$\Lfam$ is a strongly laminar family of cores of $\Vfam$.
The following lemma indicates that in this case
$\CoreLP(\Lfam)$ is within a constant factor of $\NPCLP(\Vfam)$.

\begin{lemma}\label{lem.corelpvslp}
 $\CoreLP(\Lfam) \leq 2\NPCLP(\Vfam)$ 
 if $\Vfam$ is uncrossable and $\Lfam$ is a strongly laminar family of
 cores of $\Vfam$.
 \end{lemma}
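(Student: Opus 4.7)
The plan is to construct a feasible solution to $\CoreLP(\Lfam)$ from an optimal fractional solution $(x,y)$ of $\NPCLP(\Vfam)$ by splitting each node-weight variable into an in- and out-copy, at the cost of a factor of $2$. The conceptual difficulty is that the $\NPCLP(\Vfam)$ edge variable $x(uv,j,j',\hat{C})$ carries an extra min-core index, while $\CoreLP(\Lfam)$ has only $x(uv,j,j')$; strong laminarity of $\Lfam$ will let me assign each directed edge-slot a canonical min-core from which to read off its value.

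The key preparatory step is the following observation. For an ordered pair $(u,v)$, let
\[
\Lfam(v,u):=\{\hat{X}\in\Lfam : v\in X,\ u\notin X^+\}
\]
be the set of bisets in $\Lfam$ covered by $uv$. Any two members of $\Lfam(v,u)$ share $v$ in their inner parts, so fail to be strongly disjoint and are therefore comparable by strong laminarity; hence $\Lfam(v,u)$ is a chain. Moreover, for any comparable pair $\hat{X}\subseteq \hat{Y}$ of cores of $\Vfam$, the unique min-core of $\Vfam$ contained in $\hat{X}$ must coincide with the one contained in $\hat{Y}$ (it already lies in the smaller biset, and both are cores). Consequently all bisets in $\Lfam(v,u)$ share a single min-core of $\Vfam$, which I denote by $\hat{C}_{uv}$.

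With this in hand, given $(x,y)$ optimal for $\NPCLP(\Vfam)$ (so $y\equiv 0$), define
\[
\tilde{x}(uv,j,j')=\begin{cases} x(uv,j,j',\hat{C}_{uv}) & \text{if $\Lfam(v,u)\neq\emptyset$,}\\ 0 & \text{otherwise,}\end{cases}\qquad \tilde{x}_{\rm in}(v,j)=\tilde{x}_{\rm out}(v,j)=x(v,j).
\]
The objective value is exactly $2\sum_v\sum_j j\cdot x(v,j)=2\,\NPCLP(\Vfam)$. For feasibility, fix $\hat{X}\in\Lfam$ with min-core $\hat{C}_X$; every $uv\in\delta^-_A(\hat{X})$ places $\hat{X}$ into $\Lfam(v,u)$, forcing $\hat{C}_{uv}=\hat{C}_X$. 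Thus the left-hand side of \eqref{label.primal-c1} reduces to the $\NPCLP(\Vfam)$ coverage sum \eqref{lp-c1} for $(\hat{X},\hat{C}_X)$, which is at least $1$. For \eqref{primal-c2} at $(\hat{X},u,j)$ with $u\notin X^+$, every $v\in X$ with $uv\in A$ again gives $\hat{C}_{uv}=\hat{C}_X$, so the right-hand side equals the $\NPCLP(\Vfam)$ expression in \eqref{lp-c2} for $(\hat{X},\hat{C}_X)$, bounded above by $x(u,j)=\tilde{x}_{\rm out}(u,j)$. Constraint \eqref{primal-c3} is symmetric.

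The main conceptual step is recognizing that strong laminarity collapses $\Lfam(v,u)$ to a single-min-core chain, which is exactly what allows one coordinate $\tilde{x}(uv,j,j')$ to serve simultaneously for every biset in $\Lfam$ that $uv$ covers. The factor $2$ arises only because \eqref{primal-c2} and \eqref{primal-c3} are checked independently while both are supplied by the same node budget $x(v,j)$; I would not expect to remove this factor without an argument that shares that budget between incoming and outgoing coverage.
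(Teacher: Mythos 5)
Your proof is correct and reaches the same conclusion as the paper's, but via a cleaner construction that avoids one step. The paper sets $x'(uv,j,j')=\max_{\hat{C}\in\Mfam_{\Vfam}}x(uv,j,j',\hat{C})$, which forces an extra preprocessing step: $x$ must first be shrunk to a minimal feasible solution of $\NPCLP(\Lfam)$ so that the max cannot pick up an irrelevant min-core index, and feasibility of \eqref{primal-c2}--\eqref{primal-c3} is then argued by contradiction from minimality plus strong laminarity. You bypass the minimality reduction entirely by observing upfront that $\Lfam(v,u)$ is a chain (any two members share $v$ in their inner parts, so are not strongly disjoint, hence comparable), that comparable cores of $\Vfam$ carry the same min-core, and hence that all bisets in $\Lfam$ covered by a given directed edge $uv$ share a single min-core $\hat{C}_{uv}$; reading $\tilde{x}(uv,j,j')=x(uv,j,j',\hat{C}_{uv})$ then makes each constraint of $\CoreLP(\Lfam)$ collapse directly onto the corresponding constraint of $\NPCLP(\Vfam)$ indexed by $\hat{C}_X$. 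Both arguments turn on the same structural fact — a core includes a unique min-core and comparable cores share it — but your version makes it explicit as a chain lemma instead of hiding it inside the minimality step. One small point worth stating explicitly, which both proofs leave implicit: each $\hat{X}\in\Lfam$ is presumed to lie in $\Vfam_i$ for some $i$, so that \eqref{lp-c1} actually applies to it; this holds because in the biset covering problem $\Vfam=\bigcup_{i}\Vfam_i$.
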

 \begin{proof}
  Let $x$ be an optimal solution for $\NPCLP(\Vfam)$.
  Decreasing $x$ greedily,
  we transform $x$ into a minimal feasible solution to $\NPCLP(\Lfam)$.
  Then, we define a solution $x'$ to $\CoreLP(\Lfam)$
  so that $x'(uv,j,j')=\max_{\hat{C} \in \Mfam_{\Vfam}}x(uv,j,j',\hat{C})$ for each $uv \in A$ and
  $(j,j')\in \Psi^{uv}$, and
  $x'_{\rm out}(v,j)=x'_{\rm in}(v,j)=x(v,j)$ for each $v \in V$ and
  $j\in W$. The objective value of $x'$ in $\CoreLP(\Lfam)$
  is at most $2\NPCLP(\Vfam)$.
  Hence, it suffices to prove that $x'$ is feasible to $\CoreLP(\Lfam)$.

  \eqref{label.primal-c1} follows from \eqref{lp-c1}.
  Let $\hat{C} \in \Mfam_{\Vfam}$.
  If \eqref{primal-c2} is violated for $\hat{X} \in \Lfam(\hat{C})$, $u \in
  V\setminus X^+$ and $j\in W$,
  then there exists a pair of $uv \in \delta^-_A(\hat{X})$
  and $\hat{C}' \in \Mfam_{\Vfam}$ such that 
  $x(uv,j,j',\hat{C}') > x(uv,j,j',\hat{C})$.
  The minimality of $x$ implies that 
  there exists $\hat{Y} \in \Lfam(\hat{C}')$
  with $uv \in \delta^-_A(\hat{Y})$.
  The strong laminarity of $\Lfam$ indicates that 
  $\hat{Y}$ is comparable with $\hat{X}$, but this 
  means that $\hat{Y} \in \Lfam(\hat{C})$, which is a contradiction
  because 
  a core does not include two min-cores.
  Therefore, $x'$ satisfies \eqref{primal-c2}.
  We can similarly prove that $x'$ satisfies \eqref{primal-c3} as well.
 \end{proof}

The dual of $\CoreLP(\Vfam)$ is 
\begin{align}
&\CoreDual(\Vfam)= \hspace*{-5em} \notag \\
 &\mbox{maximize}&&\sum_{\hat{X} \in \Vfam}z(\hat{X})\notag\\
 &\mbox{subject to} &&
 \sum_{\hat{X} \in \Vfam: uv \in \delta^-_{A}(\hat{X})} 
 z(\hat{X}) 
 \leq 
\sum_{\hat{X}\in \Vfam: uv \in \delta^-_{A}(\hat{X})}
&&\hspace*{-3em}\left(z(\hat{X},u,j) + z(\hat{X},v,j')\right) \notag\\
&&&&&\mbox{for $uv\in A$, $(j,j') \in \Psi^{uv}$,}\label{dual-c1}\\
&&& \sum_{\hat{X}\in \Vfam:v \in X}z(\hat{X},v,j') \leq j'
&&\mbox{for $v\in V$, $j' \in W$,}\label{dual-c3}\\
&&& \sum_{\hat{X}\in \Vfam:u \in V \setminus X^+}z(\hat{X},u,j) 
\leq j
&&\mbox{for $u\in V$, $j \in W$,}\label{dual-c3'}\\
 &&& z(\hat{X})\geq 0 
&&\mbox{for $\hat{X}\in \Vfam$,}\notag\\
 &&& z(\hat{X},v,j)\geq 0 
&&\mbox{for $\hat{X}\in \Vfam$, $v \not\in \Gamma(\hat{X})$, $j \in W$.}\notag
 \end{align}

In the subsequent section, we present an algorithm for computing
node weights activating a spider 
and a solution $z$ feasible to $\CoreDual(\Lfam)$
for some strongly laminar family $\Lfam$ of cores.
The sum of weights  is bounded in terms
of $\sum_{\hat{X}\in \Lfam}z(\hat{X})$.

\section{Primal-dual algorithm for computing spiders}
\label{sec.primal-dual}

A spider for a biset family $\Vfam$ is an edge set $S\subseteq
E$ 
such that there exist $h \in V$ and $\hat{X}_1,\ldots,\hat{X}_f \in
\Mfam$, and $S$ can be decomposed into subsets $S_1,\ldots,S_f$
that satisfy the following conditions:
\begin{itemize}
 \item $V(S_i) \cap V(S_j)\subseteq \{h\}$ for each $i,j \in [f]$ with
       $i\neq j$;
 \item $S_i$ covers $\Cfam(\hat{X}_i,h)$ for each $i \in [f]$;
 \item if $f=1$, then $\Cfam(\hat{X}_1,h)=\Cfam(\hat{X}_1)$;
 \item $h \not\in X^+_i$ for each $i \in [f]$.
\end{itemize}
$h$ is called the head, and $\hat{X}_1,\ldots,\hat{X}_f$ are called the feet of the spider.
For a spider $S$, we let $f(S)$ denote the number of its feet.
Note that this definition of spiders for biset families
is slightly different from the original one
in~\cite{Nutov12uncrossable}, where
an edge set is a spider in~\cite{Nutov12uncrossable} even if it does not
satisfy the last condition given above.

In this section, we present an algorithm for computing spiders.
More precisely, we prove the following theorem.

\begin{theorem}\label{thm.spideralgorithm}
 Let $\Vfam$ be an uncrossable family of bisets.
 There exists a polynomial-time algorithm for computing $w\colon
 V\rightarrow W$
 and a strongly laminar family $\Lfam$ of cores
 such that
 $E_w$ contains a spider $S$ and 
 $w(V)/f(S) \leq \CoreLP(\Lfam)/|\Mfam_{\Vfam}|$ holds.
\end{theorem}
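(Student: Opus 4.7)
The plan is a primal-dual procedure that enumerates candidate heads $h \in V$. For each such $h$, the subroutine synchronously grows dual variables of $\CoreDual(\Vfam)$ starting from the min-cores of $\Vfam$ that avoid $h$, producing along the way (i) node weights $w$ whose activated edge set contains a spider rooted at $h$, and (ii) the strongly laminar family $\Lfam$ of cores that ever received positive dual value. The top-level algorithm runs this subroutine for each head (and each appropriate cutoff time within the growth) and returns the pair $(w,\Lfam)$ minimizing $w(V)/f(S)$. The factor $1/|\Mfam_\Vfam|$ in the theorem will emerge from an averaging argument over the initial active min-cores.

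For a fixed head $h$, initialize $\Lfam^{\mathrm{act}} := \{\hat{C}\in\Mfam_\Vfam : h\notin C^+\}$ and all primal/dual variables to zero. Uniformly raise $z(\hat{X})$, together with the side variables $z(\hat{X},v,j)$, for every currently active $\hat{X}$. Whenever \eqref{dual-c3} or \eqref{dual-c3'} becomes tight at a pair $(v,j)$, commit the weight $j$ to $v$ in the corresponding incoming/outgoing direction; whenever \eqref{dual-c1} becomes tight at $(uv,j,j')$ and both endpoints already have the relevant weights committed, activate the edge $uv$. After each activation, for every $\hat{X}\in\Lfam^{\mathrm{act}}$ that is now covered, replace $\hat{X}$ by the unique smallest uncovered biset of $\Cfam(\hat{C}_{\hat{X}})\cap\Vfam$ strictly containing $\hat{X}$; by Lemma~\ref{lem.uncrossable}(i), $\Cfam(\hat{C})$ is a ring-family closed under intersection, so this minimum exists and is unique. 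If no such strictly larger uncovered biset exists, remove $\hat{X}$ from $\Lfam^{\mathrm{act}}$ and declare $\hat{C}_{\hat{X}}$ a reached foot. Stop once $\Lfam^{\mathrm{act}}=\emptyset$, then reverse-delete prune the activated edges to extract a minimal spider $S$.

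The main obstacle is verifying simultaneously that (a) the family $\Lfam$ of cores ever active is strongly laminar, (b) the constructed $z$ is feasible for $\CoreDual(\Lfam)$, and (c) the quantitative ratio $w(V)/f(S)\le\CoreLP(\Lfam)/|\Mfam_\Vfam|$ holds. Claim (a) is the structural heart: every biset in $\Lfam$ belongs to exactly one ring-family $\Cfam(\hat{C})$, and Lemma~\ref{lem.uncrossable}(iii) guarantees that members of $\Cfam(\hat{C})$ and $\Cfam(\hat{C}')$ with $\hat{C}\neq\hat{C}'$ are pairwise strongly disjoint; within a single $\Cfam(\hat{C})$, the replacement sequence is nested by construction, so strong laminarity follows. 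Claim (b) is maintained invariantly because only $\hat{X}\in\Lfam$ accumulate dual value and growth halts exactly at tightness, with the side variables $z(\hat{X},v,j)$ absorbing node-weight loads; the care needed here is that an edge variable $x(uv,j,j')$ couples two node-weight constraints, so activations must be scheduled to keep both sides of \eqref{dual-c1} simultaneously achievable.

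For (c), a standard primal-dual charging shows $w(V) \le \sum_{\hat{X}\in\Lfam} z(\hat{X})$, the node-weight commitments being charged across bisets via the side variables that saturate \eqref{dual-c3}--\eqref{dual-c3'}. Meanwhile, the instantaneous growth rate of $\sum_{\hat{X}\in\Lfam} z(\hat{X})$ equals $|\Lfam^{\mathrm{act}}|$, which starts at $|\Mfam_\Vfam|$ (up to discarding cores containing $h$) and decreases by one each time a foot is reached; choosing $h$ and the stopping time to minimize the ratio, one obtains $w(V)/f(S)\le \sum z/|\Mfam_\Vfam|$. Combining with LP duality $\sum_{\hat{X}\in\Lfam} z(\hat{X})\le \CoreLP(\Lfam)$ finishes the bound. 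I expect the quantitatively delicate step will be showing that pruning produces exactly a spider (rather than a looser connected substructure) whose foot count $f(S)$ is compatible with the dual-growth integral, which will likely require using the uncrossability properties from Lemma~\ref{lem.uncrossable}(ii) to argue that committed edges incident to $h$ can be partitioned into node-disjoint branches to the reached feet.
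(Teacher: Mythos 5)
Your proposal departs from the paper's construction in two places, and the first creates a genuine quantitative gap. You fix a candidate head $h$ in advance and seed the growth only from $\{\hat{C}\in\Mfam_\Vfam : h\notin C^+\}$, then let the active family shrink over time as feet are reached. The resulting dual mass is $\sum_{\hat{X}\in\Lfam}z(\hat{X})=\int_0^\tau|\Lfam^{\mathrm{act}}_t|\,dt$, which is at most $|\{\hat{C}:h\notin C^+\}|\cdot\tau$ and strictly less than $|\Mfam_\Vfam|\cdot\tau$ whenever some min-core has $h$ in its outer-part or a foot is reached before time $\tau$. Your per-foot charging gives $w(V)\leq f(S)\cdot\tau$, so you obtain $w(V)/f(S)\leq\tau$, but $\sum z/|\Mfam_\Vfam|$ can be strictly below $\tau$, so the target inequality $w(V)/f(S)\leq\sum z/|\Mfam_\Vfam|\leq\CoreLP(\Lfam)/|\Mfam_\Vfam|$ does not follow. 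Enumerating $h$ and stopping times does not repair this: for every choice the dual mass of the discarded min-cores is simply gone. The paper avoids the problem by not fixing $h$ at all: it grows \emph{every} min-core, and Case~(a) swaps a covered core for the next one in its chain so that $|\Afam|=|\Mfam_\Vfam|$ holds at every instant, while Case~(b) terminates the growth the moment a branch merge or a chain exhaustion occurs. That keeps $\sum z=|\Mfam_\Vfam|\cdot\tau$ exactly, and the head is read off only afterwards (Lemmas~\ref{lem.spider1} and \ref{lem.spider2}).

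The second issue is your justification of strong laminarity. You assert that Lemma~\ref{lem.uncrossable}(iii) makes every member of $\Cfam(\hat{C})$ strongly disjoint from every member of $\Cfam(\hat{C}')$ for distinct min-cores $\hat{C},\hat{C}'$. That is not what the lemma says: it guarantees that the min-core $\hat{C}'$ itself is strongly disjoint from each $\hat{X}'\in\Cfam(\hat{C})$, but two non-minimal cores in different chains may intersect. The paper instead maintains the invariant (Lemma~\ref{lem.afam}) that $\Afam$ is always exactly the min-core family of the residual family $\Vfam_F$; applying Lemma~\ref{lem.uncrossable}(iii) to $\Vfam_F$ then gives pairwise strong disjointness of the \emph{currently active} cores, and strong laminarity of $\Lfam$ follows because each newly added core is strongly disjoint from everything active and therefore from every smaller core in a different chain. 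Your proposal needs this invariant rather than the direct appeal you make. Relatedly, "reverse-delete prune" is not enough to produce a spider: the paper's deletion procedure is a specific alternating-chain argument, and Lemma~\ref{lem.fy} together with Lemmas~\ref{lem.spider1}, \ref{lem.spider2} is where the node-disjointness of branches and the $f(S)=|B|$ foot count are actually established.
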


Our algorithm keeps 
an edge set $F\subseteq E$, 
core families $\Lfam, \Afam\subseteq \Cfam$,
and a feasible solution $z$ to $\CoreDual(\Lfam)$.
We initialize the dual variables $z$ to $0$ and $F$ to the empty set.
$\Lfam$ and $\Afam$ are initialized to the family $\Mfam$ of min-cores of $\Vfam$.
By Lemma~\ref{lem.uncrossable}, 
$\Lfam$ and $\Afam$ are pairwise strongly disjoint.
The algorithm always
maintains $\Lfam$ being strongly laminar and $\Afam$ being pairwise strongly disjoint.

{\bf Increase phase:}
After initialization, we increase dual variables
$z(\hat{X})$, $\hat{X}\in \Afam$ uniformly.
We introduce the concept of time. Each of the variables is
increased by one in a unit of time. 

For satisfying the constraints of $\CoreDual(\Lfam)$,
we have to increase other variables as well.
Let $uv \in \delta^-_A(\hat{X})$ and $(j,j') \in \Psi^{uv}$.
To satisfy \eqref{dual-c1},
for each such pair of $uv$ and $(j,j')$, 
we have to increase 
$z(\hat{X},u,j)$,
or $z(\hat{X},v,j')$.
Note that $z(\hat{X},u,j)$ 
is bounded from above by \eqref{dual-c3'} for $(u,j)$,
and $z(\hat{X},v,j')$ is bounded from above 
by \eqref{dual-c3} for $(v, j')$.
Our algorithm first increases $z(\hat{X},v,j')$ at the same speed
as $z(\hat{X})$ until 
\eqref{dual-c3} becomes tight for $(v,j')$.
Let $\tau(v,j')$ denote the time when 
\eqref{dual-c3} becomes tight for $(v,j')$.
After time $\tau(v,j')$,
the algorithm increases $z(\hat{X},u,j)$.
There may exist another pair of $uv' \in \delta^-_A(\hat{X})$ (possibly
$v'=v$) and $(j,j'') \in \Psi^{uv'}$.
In this case,
we stop increasing $z(\hat{X},v',j'')$ 
at time $\tau(v,j')$
even if \eqref{dual-c3} is not yet
tight for $(v',j'')$ at time $\tau(v,j')$,
We say that $(uv,j,j')$ gets {\em tight}
when we cannot increase 
$z(\hat{X},u,j)$ 
or $z(\hat{X},v,j')$.

{\bf Events:}
After increasing the dual variables for some time, 
we encounter an event that the variable $z(\hat{X})$ for some $\hat{X} \in
\Afam$ can no longer be increased
because of a tight tuple $(uv,j,j')$ with $uv \in \delta_A^-(\hat{X})$ and $(j,j')
\in \Psi^{uv}$.
Let $\tilde{\tau}$ be the time when this event occurs.

It is possible that 
more than one such tuple may get simultaneously tight.
We choose an arbitrary pair of $u \in V \setminus X^+$ and $j \in W$
such that there exists a tight tuple $(uv,j,j')$ with $uv \in \delta^-_A(\hat{X})$
and $(j,j') \in \Psi^{uv}$.
Let $(uv_{1},j_1),\ldots,(uv_{p},j_p)$ be the pairs of edges leaving $u$ in
$\delta^-_A(\hat{X})$ and weights such that 
$(uv_{p'},j,j_{p'})$ is a tight tuple for each $p' \in [p]$.
For each $p' \in [p]$, define
$\hat{Y}_p'$ as the 
minimal core in $\Lfam$ such that $uv_{p'} \in \delta^-_A(\hat{Y}_{p'})$.
Without loss of generality, 
suppose $\hat{Y}_1 \subseteq \cdots \subseteq \hat{Y}_p \subseteq \hat{X}$.
We add the
undirected edge $\{u,v_{1}\}$ to $F$, and assign the weight $j$ to $u$
and weight $j_1$ to $v_1$.
We say that $\hat{X}$ is the {\em witness} of the edge $\{u,v_{1}\}$.
If $z(\hat{X}',u,j)>0$ for some biset $\hat{X}' \in \Lfam$ comparable with
$\hat{X}$, then 
$\hat{Y}_1 \subseteq \hat{X}' \subseteq \hat{X}$ holds
because
the algorithm does not increase
$z(\hat{X}',u,j)$
unless there exists a pair of $uv\in \delta^-_A(\hat{X}')$
and $(j,j')\in \Psi^{uv}$
such that 
\eqref{dual-c3} is tight for $(v,j')$, and $(uv,j,j')$ is tight when
\eqref{dual-c3'} tightens for $(u,j)$.

Let $B$ be the set of directed edges leaving $u$ whose corresponding
undirected edges are added to $F$ at time $\tilde{\tau}$ or earlier, where
$B$ does not contain $uv$ if $\{u,v\}$ is added to $F$ because of $vu$.
We define two cases here.
In Case~(a),  $|B|=1$ holds 
and 
there exists a core $\hat{Z} \in \Cfam$ such that 
$\hat{X} \subset\hat{Z}$ and $\hat{Z}$ is not covered by $F$.
In Case~(b), $|B|\geq 2$ holds or 
all cores $\hat{Z} \in \Cfam$ with $\hat{X} \subset
\hat{Z}$ are covered by $F$.

{\em Case}~(a): $|B|=1$
and 
there exists a core $\hat{Z} \in \Cfam$ such that 
$\hat{X} \subset\hat{Z}$ and $\hat{Z}$ is not covered by $F$.
Let $\hat{Z}$ be a minimal core among such cores. $\hat{Z}$ is
unique because $\Cfam_F(\hat{X})$ is a ring-family by
Lemmas~\ref{lem.uncrossable} and \ref{lem.residual-uncrossable}.
We add $\hat{Z}$ to both $\Lfam$ and $\Afam$, and remove
$\hat{X}$ from $\Afam$.

\begin{lemma}\label{lem.afam}
  $\Afam$ is the family of min-cores of
 $\Vfam_F$
 after the update of Case~{\rm (}a\/{\rm )}.
\end{lemma}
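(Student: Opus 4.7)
The plan is to maintain the invariant $\Afam=\Mfam_{\Vfam_F}$ throughout the algorithm and prove it by induction on the events. The base case is immediate: at initialization, $F=\emptyset$, $\Vfam_F=\Vfam$, and $\Afam=\Mfam_{\Vfam}$. Assume inductively that $\Afam_{\rm old}=\Mfam_{\Vfam_{F_{\rm old}}}$ just before a Case~(a) event. By Lemma~\ref{lem.residual-uncrossable} the family $\Vfam_{F_{\rm old}}$ is uncrossable, so by Lemma~\ref{lem.uncrossable}(iii) the bisets in $\Afam_{\rm old}$ are pairwise strongly disjoint. After the event, $F=F_{\rm old}\cup\{\{u,v_1\}\}$ and $\Afam=(\Afam_{\rm old}\setminus\{\hat{X}\})\cup\{\hat{Z}\}$, and the task is to verify that the updated $\Afam$ equals $\Mfam_{\Vfam_F}$.

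First I would show that each $\hat{Y}\in\Afam_{\rm old}\setminus\{\hat{X}\}$ remains a min-core of $\Vfam_F$. The edge $\{u,v_1\}$ covers $\hat{X}$ because $uv_1\in\delta^-_A(\hat{X})$, so $\hat{X}\notin\Vfam_F$. Strong disjointness of $\hat{Y}$ with $\hat{X}$ gives $v_1\in X\Rightarrow v_1\notin Y^+$, so the only way $\{u,v_1\}$ can cover $\hat{Y}$ is if $u\in Y$. I would rule this out using the Case~(a) hypothesis $|B|=1$: if $u\in Y$ then the reversed edge $v_1u$ lies in $\delta^-_A(\hat{Y})$, and an analysis of the algorithm's increase protocol (saturating the inner-vertex constraint \eqref{dual-c3} before the outer-vertex constraint \eqref{dual-c3'}) together with the history of earlier edges added at $u$ would contradict $|B|=1$. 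Once $u\notin Y$ is established, $\hat{Y}$ stays in $\Vfam_F$, and since $\Vfam_F\subseteq\Vfam_{F_{\rm old}}$ its minimality is preserved in the smaller family.

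Next I would verify that $\hat{Z}$ is a min-core of $\Vfam_F$. The Case~(a) hypothesis says $\hat{Z}$ is uncovered by $F$, so $\hat{Z}\in\Vfam_F$. For minimality, any $\hat{W}\in\Vfam_F$ with $\hat{W}\subsetneq\hat{Z}$ also lies in $\Vfam_{F_{\rm old}}$ and, by the inductive hypothesis, contains some $\hat{Y}'\in\Afam_{\rm old}$. Since $\hat{Z}$ is a core whose unique min-core of $\Vfam$ is $\hat{X}$, Lemma~\ref{lem.uncrossable}(iii) forces $\hat{Y}'=\hat{X}$; but then $\hat{W}$ belongs to the ring family $\Cfam_{F}(\hat{X})$ (a ring family by Lemmas~\ref{lem.uncrossable}(i) and \ref{lem.residual-uncrossable}) strictly inside the minimal uncovered core $\hat{Z}$, contradicting the minimal choice of $\hat{Z}$. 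Finally, any new min-core of $\Vfam_F$ not already present in $\Afam_{\rm old}\setminus\{\hat{X}\}$ must arise as the minimal uncovered biset above the removed $\hat{X}$; by the ring-family property of $\Cfam_F(\hat{X})$ this biset is uniquely $\hat{Z}$.

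The principal obstacle is the argument that $u\notin Y$ in the first step. It is not purely combinatorial but depends on the algorithmic invariants that $|B|=1$ encodes together with the timing of dual-variable increases at events processed earlier, and I expect the bulk of the effort to go into this careful case analysis.
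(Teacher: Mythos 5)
Your overall plan---induction with the invariant $\Afam=\Mfam_{\Vfam_F}$ split into three checks---is sound, and your handling of the minimality of $\hat{Z}$ and of the absence of other new min-cores is essentially right. The genuine gap is in the first step, and it is the step you yourself flagged as the principal obstacle: ruling out $u\in Y$. The mechanism you sketch (deriving a contradiction with $|B|=1$ from the increase protocol and the timing of tightness at $u$) is not the right route, and I do not see how it could be made to work: $|B|$ counts directed edges \emph{leaving} $u$ that have already been added to $F$, whereas $u\in Y$ concerns the position of $u$ in a core's inner-part, which is tied to edges that would \emph{enter} $u$; there is no clean implication between the two, and the algorithm does not in general prevent the head from lying in the inner-part of a min-core.

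The paper resolves this step without any appeal to the dual timing. The key is to use strong disjointness against the \emph{new} core $\hat{Z}$, not against $\hat{X}$. Since $\hat{Z}$ is uncovered by $F$ (this is its defining property) and $v_1\in X\subseteq Z$, the edge $\{u,v_1\}$ failing to cover $\hat{Z}$ forces $u\in Z^+$. Now note that $\hat{Z}$ lies in $\Cfam_{\Vfam_{F'}}(\hat{X})$: it is in $\Vfam_{F'}$, contains $\hat{X}$, and contains no second min-core of $\Vfam_{F'}$ (such a min-core would be strongly disjoint from $\hat{X}$ yet share with it the unique min-core of $\Vfam$ inside $\hat{Z}$). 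Applying Lemma~\ref{lem.uncrossable}~(iii) to the uncrossable family $\Vfam_{F'}$ then gives that $\hat{Z}$ is strongly disjoint from every $\hat{Y}\in\Afam_{\rm old}\setminus\{\hat{X}\}$, so $Z\cap Y^+=\emptyset$ and $Z^+\cap Y=\emptyset$. From $v_1\in Z$ we get $v_1\notin Y^+$, and from $u\in Z^+$ we get $u\notin Y$; hence $\{u,v_1\}$ covers no such $\hat{Y}$. This closes the step purely combinatorially and renders your timing argument unnecessary.
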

\begin{proof}
 Let $uv \in B$.
 Recall that $uv$ covers $\hat{X}$, and hence $v \in X \subseteq Z$.
 It suffices to show that $\{u,v\}$ covers no core in $\Afam$.
 Let $\hat{Z}' \in \Afam$. If $\hat{Z}'=\hat{Z}$, then its definition
 implies that $\{u,v\}$ does not cover it.
 Hence, suppose that $\hat{Z}'\neq \hat{Z}$.
 Let $F'$ represent $F$ before $\{u,v\}$ is added.
 Since $\hat{Z}'$ was in $\Afam$ before the update,
 $\hat{Z}'$ is a min-core of  $\Vfam_{F'}$, which 
 implies that $\hat{Z}$ and $\hat{Z}'$ are strongly disjoint by
 Lemma~\ref{lem.uncrossable} (iii).
 $v \not\in Z'$ follows from $v \in Z$.
 Since 
 $\{u,v\}$ does not cover $\hat{Z}$, we have $u \in Z^+$, and 
 hence $u \not\in Z'$.
 These indicate that $\{u,v\}$ does not cover $\hat{Z}'$.
\end{proof}

Lemma~\ref{lem.afam} indicates that
$\Afam$ is pairwise strongly disjoint and $\Lfam$ is strongly laminar
even after the update.

{\em Case} (b): $|B|\geq 2$ or 
all cores $\hat{Z}$ with $\hat{X} \subset
\hat{Z}$ are covered by $F$.
In this case, we go to the deletion phase, which
removes several edges from $F$.
We then output the obtained edge set, 
node weights for activating the edge set,
and $\Lfam$.
We will show that the edge set is a spider with $|B|$ feet.

{\bf Deletion Phase}\/:
Let $\hat{Y} \in \Afam$, and let
$\hat{Y}_1,\ldots,\hat{Y}_{l-1}$ be the cores included by $\hat{Y}$
in $\Lfam$.
We also let $\hat{Y}_l=\hat{Y}$.
We assume without loss of generality that $\hat{Y}_1 \subset \cdots \subset
\hat{Y}_{l}$ holds.
$\hat{Y}_1$ is a min-core of $\Vfam$.
Let $F_{\hat{Y}}$ be the edges in $F$ whose witnesses are in 
$\{\hat{Y}_1,\ldots,\hat{Y}_{l}\}$.
Note that $F$ can be partitioned into
$F_{\hat{Y}}$, $\hat{Y}\in \Afam$.

For each $l' \in [l]$,
$F$ contains an edge
$\{u_{l'},v_{l'}\}$ whose witness is $\hat{Y}_{l'}$.
Without loss of generality, we have $v_{l'} \in Y_{l'}$ and $u_{l'} \in
Y^+_{l'+1}\setminus Y^+_{l'}$ for $l' \in [l]$,
where we let $Y^+_{l+1}=V$ for convenience.
We apply the following algorithm to delete several edges from $F_{\hat{Y}}$.

\begin{description}\setlength{\itemsep}{0pt}
 \item[Deletion algorithm]
 \item[Step 1:]
	    Define $p$ as $l$ and $S_{\hat{Y}}$ as $F_{\hat{Y}}$.
 \item[Step 2:]
	    Let $q$ be the smallest integer in $[p]$ such that 
	    $v_{p} \in Y_{q}$.
	    Remove $\{u_{p-1},v_{p-1}\}, \ldots,\{u_{q},v_{q}\}$
	    from $S_{\hat{Y}}$.
 \item[Step 3:] If $q>1$, then set $p$ to $q-1$ and go back to Step~2.
	    Otherwise, output $S_{\hat{Y}}$ and terminate.
\end{description}

\begin{figure}[th]
 \centering
 \begin{tikzpicture}[line width=1.5pt]
  \tikzstyle{every label}=[label distance=-3pt]
  \draw[biset] (-2.4,-2.4) rectangle (5.4,3.4);
  \draw[biset outer] (-2.5,-2.5) rectangle (5.5,3.5);
  \draw[biset] (-1.8,-1.8) rectangle (4.3,2.8);
  \draw[biset outer] (-1.9,-1.9) rectangle (4.4,2.9);
  \draw[biset] (-1.2,-1.2) rectangle (3.2,2.2);
  \draw[biset outer] (-1.3,-1.3) rectangle (3.3,2.3);
  \draw[biset] (-.6,-.6) rectangle (2.1,1.6);
  \draw[biset outer] (-.7,-.7) rectangle (2.2,1.7);
  \draw[biset] (0,0) rectangle (1,1);
  \draw[biset outer] (-.1,-.1) rectangle (1.1,1.1);

  \node[vertex, label=north:$u_5$] (v1) at (6.5,.5){};
  \node[vertex, label={[label distance=-4pt]north west:$u_4$}] (v2) at (5.3,1.2){};
  \node[vertex, label=north:$v_4$] (v2n) at (3.8,1.2){};
  \node[vertex, label={[label distance=-7pt]south west:$v_5$}] (v3) at (3.2,.5){};
  \node[vertex, label=south:$u_3$] (v3n) at (3.7,0){};
  \node[vertex, label=north:$u_2$] (v4) at (2.6,.7) {};
  \node[vertex, label=south:$v_3$] (v5) at (1.45,0) {};
  \node[vertex, label={[label distance=-3pt]east:$u_1$}] (v5n) at (2.05,.35) {};
  \node[vertex, label={[label distance=-7pt]north east:$v_2$}] (v6) at (.5,.7) {};
  \node[vertex, label={[label distance=-5pt]west:$v_1$}] (v6n) at (.6,.35) {};

  \node at (5,-2) {$\hat{Y}_5$};
  \node at (3.9,-1.4) {$\hat{Y}_4$};
  \node at (2.8,-.8) {$\hat{Y}_3$};
  \node at (-.25,-.2) {$\hat{Y}_2$};
  \node at (.13,.83) {$\hat{Y}_1$};
  \draw[red] (v1) -- (v3);
  \draw (v2) -- (v2n);
  \draw[red] (v3n) -- (v5);
  \draw (v4) -- (v6);
  \draw[red] (v5n) -- (v6n);
 \end{tikzpicture}
 \caption{An example of $\hat{Y}_1,\ldots,\hat{Y}_l$ and
 $\{u_1,v_1\},\ldots,\{u_l,v_l\}$ with $l=5$. Red edges are those chosen
 in $S_{\hat{Y}}$. Areas surrounded by the dotted lines represent bisets,
 and dark gray areas represent boundaries of bisets.}
 \label{fig.deletion}
 \end{figure}
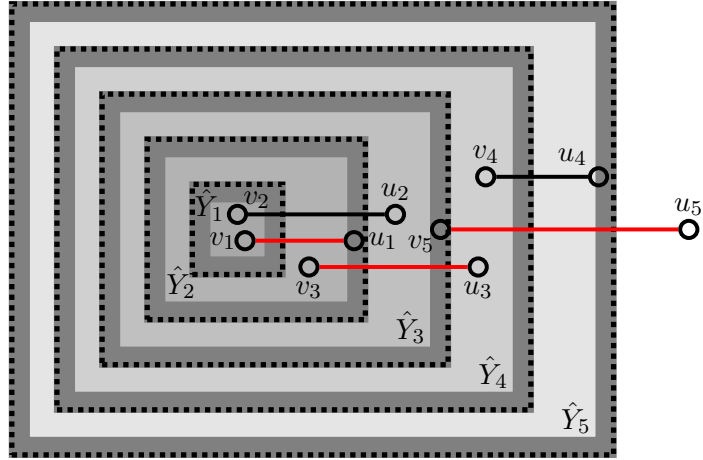

Figure~\ref{fig.deletion} illustrates an example to which the deletion
algorithm is applied.
Below, we let $S_{\hat{Y}}$ denote the edge set obtained by applying
the deletion algorithm to $F_{\hat{Y}}$.

\begin{lemma}\label{lem.fy}
Any core in $\Cfam(\hat{Y}_1,u_{l})$
is covered by at least one edge in $S_{\hat{Y}}$.
The core $\hat{Y}_{l'}$ is covered by exactly one edge in $S_{\hat{Y}}$ for
 each $l' \in [l]$.
\end{lemma}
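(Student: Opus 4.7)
I will prove the two claims separately: Claim 2 by a direct combinatorial argument, and Claim 1 by strong induction on $l$.

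For Claim 2, I first observe that the edge $\{u_{l'}, v_{l'}\}$ covers $\hat{Y}_{l''}$ precisely when $v_{l'} \in Y_{l''}$ (which, since $v_{l'} \in Y_{l'}$ and the chain is nested, is equivalent to $l'' \geq q_{l'}$ for $q_{l'}$ the smallest index with $v_{l'} \in Y_{q_{l'}}$) and $u_{l'} \notin Y^+_{l''}$ (which, since $u_{l'} \in Y^+_{l'+1}\setminus Y^+_{l'}$, is equivalent to $l'' \leq l'$). Hence the coverage interval of $\{u_{l'}, v_{l'}\}$ is $[q_{l'}, l']$. By construction of the deletion algorithm, the kept indices $s_1 > s_2 > \cdots > s_m$ satisfy $q_{s_i} = s_{i+1}+1$ (with the convention $s_{m+1} = 0$), so the intervals $[s_{i+1}+1, s_i]$ partition $[1, l]$. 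Each $\hat{Y}_{l''}$ therefore lies in exactly one such interval and is covered by exactly one kept edge.

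For Claim 1, I proceed by strong induction on $l$. The base case $l=1$ is immediate: the only kept edge $\{u_1, v_1\}$ satisfies $v_1 \in Y_1 \subseteq X$ (from $\hat{Y}_1 \subseteq \hat{X}$) and $u_1 \notin X^+$ (from $\hat{X} \in \Cfam(\hat{Y}_1, u_1)$), so it covers $\hat{X}$. For the inductive step, Lemma~\ref{lem.uncrossable}(i),(iii) together with $\hat{X}$ being a core whose unique min-core is $\hat{Y}_1$ imply that $\hat{X} \cap \hat{Y}_{l'} \in \Cfam(\hat{Y}_1)$ for every $l' \in [l]$. If $v_l \in X$, then $\{u_l, v_l\}$ covers $\hat{X}$ directly. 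Otherwise, I apply the inductive hypothesis to the sub-problem on $\hat{Y}_1, \ldots, \hat{Y}_{q-1}$ with the core $\hat{X} \cap \hat{Y}_{q-1}$, which lies in $\Cfam(\hat{Y}_1, u_{q-1})$ because $u_{q-1} \notin Y^+_{q-1}$. This returns a kept edge $\{u_s, v_s\}$ of the sub-problem covering $\hat{X} \cap \hat{Y}_{q-1}$; when $s < q-1$, the inclusion $Y^+_{s+1} \subseteq Y^+_{q-1}$ promotes $u_s \notin X^+ \cap Y^+_{q-1}$ to $u_s \notin X^+$, and combined with $v_s \in X$ this edge covers $\hat{X}$.

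The main obstacle is the residual sub-case $s = q-1$ with $u_{q-1} \in X^+$. To handle it, I plan to reinvoke the inductive hypothesis on the same sub-problem but with the strictly smaller core $\hat{X} \cap \hat{Y}_{q_{q-1}-1}$: because $v_{q-1} \notin Y_{q_{q-1}-1}$ by the definition of $q_{q-1}$, the edge $\{u_{q-1}, v_{q-1}\}$ cannot be the one returned, so the new covering edge has index at most $q_{q-1}-1 < q-1$ and the argument of the previous paragraph applies. Iterating this ``skip-the-top'' procedure descends along the sub-problem's kept indices. Termination with a successful covering relies on the minimality of each core $\hat{Y}_{l'}$ selected in Case~(a) of the primal-dual algorithm, which prevents the existence of uncovered intermediate cores between consecutive $\hat{Y}_{l'}$ and $\hat{Y}_{l'+1}$, together with the ring-family structure of $\Cfam(\hat{Y}_1)$ keeping every intermediate intersection a valid core for re-application of the induction.
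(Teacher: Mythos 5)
Your Claim 2 argument is correct and, if anything, cleaner than the paper's: instead of taking the minimum kept index $p$ covering $\hat{Y}_{l'}$ and arguing about forced removals, you read off each kept edge's coverage interval as $[q_{s_i}, s_i]$ and observe that by construction these intervals exactly tile $[1,l]$. This is a transparent repackaging of the same combinatorial fact and is a perfectly valid alternative.

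The Claim 1 argument, however, has a genuine gap that you acknowledge but do not close. The induction is sound when the recursive call returns an edge of index $s<q-1$, because then $u_s\in Y^+_{s+1}\subseteq Y^+_{q-1}$ upgrades $u_s\notin X^+\cap Y^+_{q-1}$ to $u_s\notin X^+$. But in the residual case $s=q-1$ with $u_{q-1}\in X^+$, your ``skip-the-top'' call with the smaller core $\hat{X}\cap\hat{Y}_{q_{q-1}-1}$ only guarantees a returned index $\leq q_{q-1}-1$, and $q_{q-1}-1$ is precisely the top kept index of the next sub-chain, so the same bad sub-case can repeat. Nothing in your downward recursion prevents the iteration from descending all the way to index~$1$ and terminating with an edge whose tail still lies in $X^+$ (the only index guaranteed to have a tail outside $X^+$ is $l$ itself, but $v_l$ may lie outside $X$). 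What actually excludes this scenario is the algorithmic invariant captured in Lemma~\ref{lem.afam}: at the moment $\{u_{q-1},v_{q-1}\}$ was added, $\hat{Y}_q$ was the minimal biset of $\Vfam_F$, so the strictly smaller core $\hat{Y}_q\cap\hat{X}$ was already covered, and a case analysis of the witness of that covering edge forces it inside $Y^+_{q-1}\subseteq X^+$, a contradiction. The paper reaches this conclusion by choosing $\hat{Z}$ \emph{maximal} among uncovered cores, which gives access both to the union $\hat{Z}\cup\hat{Y}_p$ being covered (its $p>q$ subcase) and to the Lemma~\ref{lem.afam} argument above (its $p=q$ subcase). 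Your fixed-$\hat{X}$ downward induction has no analogue of either step; you gesture at the Case~(a) minimality in your final paragraph but never carry out the contradiction, and it is not clear your framework can deliver it without essentially rebuilding the paper's extremal argument inside each inductive step.
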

\begin{proof}
 Let $l' \in [l]$.
 First, we show that $\hat{Y}_{l'}$ is covered by exactly one edge in 
 $S_{\hat{Y}}$.
 When
 the event occurs to $\hat{Y}_{l'}$, the algorithm adds the edge
 $\{u_{l'},v_{l'}\}$
 covering $\hat{Y}_{l'}$ to $F$, 
 and defines $\hat{Y}_{l'}$ as the witness of the edge.
 $\{u_{l'},v_{l'}\}$ is not removed by the deletion algorithm
 unless another edge covering $\hat{Y}_{l'}$ remains in $S_{\hat{Y}}$.
 Hence $\hat{Y}_{l'}$ is covered by at least one edge after applying
 the deletion algorithm.
 Let $p$ be the minimum integer in $[l]$ such that
$\{u_p,v_p\} \in S_{\hat{Y}}$ covers $\hat{Y}_{l'}$.
 By way of constructing $\Lfam$, we have $p \geq l'$.
 Suppose that another edge $\{u_{p'},v_{p'}\} \in S_{\hat{Y}}$
 covers $\hat{Y}_{l'}$ as well.
 Then, $v_{p'} \in Y_{l'}$ holds.
 The definition of $p$ indicates that $p' > p$.
 However, in this case, the deletion algorithm removes
 $\{u_p,v_p\}$ from $S_{\hat{Y}}$.
 Hence, $\hat{Y}_{l'}$ is covered by exactly one edge in $S_{\hat{Y}}$.

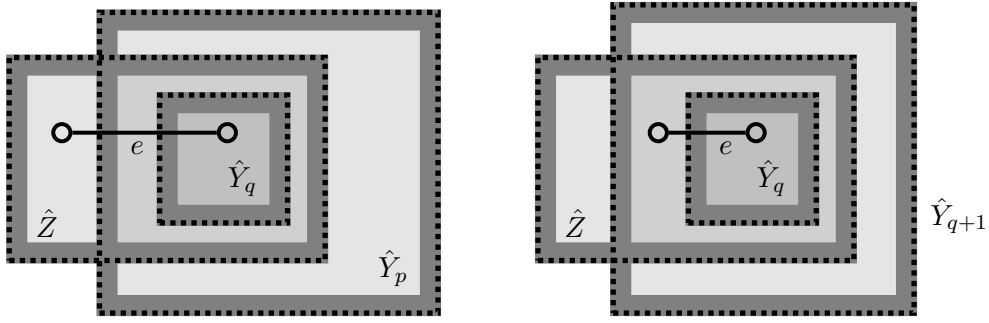
\begin{figure}[t]
 \centering

\begin{tikzpicture}[line width=1.5pt]

 \draw[biset] (1.7,-.7) rectangle +(4.3,3.8);
 \draw[biset] (.5,0) rectangle +(4,2.5);
 \draw[biset] (2.5,.5) rectangle +(1.5,1.5);
 \draw[biset outer] (1.6,-.8) rectangle +(4.5,4);
 \draw[biset outer] (.4,-.1) rectangle +(4.2,2.7);
 \draw[biset outer] (2.4,.4) rectangle +(1.7,1.7);

 \node at (.9,.4) {$\hat{Z}$};
 \node at (3.5,1) {$\hat{Y}_q$};
 \node at (5.5,-.2) {$\hat{Y}_{p}$};

 \node[vertex](a) at (1.1,1.6) {};
 \node[vertex](b) at (3.3,1.6) {};
 \draw (a) -- (b);
 \node at (2.1,1.4) {$e$};

 \begin{scope}[xshift=200pt]
 \draw[biset] (.5,0) rectangle +(4,2.5);
 \draw[biset] (1.5,-.7) rectangle +(3.8,3.9);
 \draw[biset] (2.5,.5) rectangle +(1.5,1.5);
 \draw[biset outer] (.4,-.1) rectangle +(4.2,2.7);
 \draw[biset outer] (1.4,-.8) rectangle +(4,4.1);
 \draw[biset outer] (2.4,.4) rectangle +(1.7,1.7);

 \node at (.9,.4) {$\hat{Z}$};
 \node at (3.5,1) {$\hat{Y}_q$};
 \node at (6,.5) {$\hat{Y}_{q+1}$};

 \node[vertex](c) at (2,1.6) {};
 \node[vertex](d) at (3.3,1.6) {};
 \draw (c) -- (d);
 \node at (2.9,1.4) {$e$};

 \end{scope}
\end{tikzpicture}

 \caption{Bisets in the proof of Lemma~\ref{lem.fy}. The left figure
illustrates the case where $p > q$, and the right figure illustrates the
 case where $p=q$.}
 \label{fig.lemmafy}
\end{figure}

 Let $\hat{Z} \in \Cfam(\hat{Y}_1,u_{l})$.
 We show that $\hat{Z}$ is covered by at least one edge in $S_{\hat{Y}}$.
  To the contrary,
 suppose that $\hat{Z}$  is 
 covered by no edge in $S_{\hat{Y}}$.
 Let $\hat{Z}$ be a maximal core among such cores,
 and let $q$ be the maximum integer in $[l]$ such that 
 $\hat{Y}_q \subseteq \hat{Z}$.
 By the above claim, $S_{\hat{Y}}$ contains the edge
 $e=\{u_{p},v_{p}\}$ covering $\hat{Y}_q$.
 Since $e$ does not cover $\hat{Z}$, we have $e \subseteq Z^+$, and
 $p < l$ holds because $u_{l}\not\in Z^+$.

 Suppose that $p>q$.
 The left example in Figure~\ref{fig.lemmafy}
 illustrates this case.
 By the maximality of $q$, 
 $\hat{Y}_{p}$ is not included by $\hat{Z}$, and hence
 $\hat{Z} \subset \hat{Z}\cup \hat{Y}_{p}$ holds.
 Since $\hat{Z} \cup \hat{Y}_p \in \Cfam(\hat{Y}_1,u_l)$,
 the maximality of $\hat{Z}$ indicates that
 $\hat{Z}\cup \hat{Y}_{p}$ is covered by an edge in $S_{\hat{Y}}$.
 Let
 $f$ be an edge in $S_{\hat{Y}}$ covering $\hat{Z}\cup \hat{Y}_{p}$.
 Since $e \subseteq Z^+$, $e$ does not cover
 $\hat{Z}\cup \hat{Y}_{p}$, implying $e \neq f$.
 $f$ covers $\hat{Z}$ or $\hat{Y}_{p}$.
 If $f$ covers $\hat{Y}_{p}$, then
 $\hat{Y}_p$ is covered by two edges in $S_{\hat{Y}}$, which is a contradiction.
 Hence, $f$ covers $\hat{Z}$, which is a contradiction again.

 Next, consider the case where $p=q$.
 The example on the right side of Figure~\ref{fig.lemmafy} illustrates this case.
 $e \subseteq Y^+_{q+1}$ follows from $p=q$.
 Hence, $e \subseteq Y^+_{q+1} \cap Z^+$, and $e$ does not cover
 $\hat{Y}_{q+1}\cap \hat{Z}$.
 By the maximality of $q$, $\hat{Y}_{q+1}$ is not included by $\hat{Z}$,
 and hence $\hat{Y}_{q+1}\cap \hat{Z}\subset \hat{Y}_{q+1}$.
 By Lemma~\ref{lem.afam}, $\hat{Y}_{q+1}$ was a minimal core in
 $\Cfam(\hat{Y}_1,u_{l})$
 that was not covered by $F$ when $e$ was added to $F$.
 Note that $\hat{Y}_{q+1}\cap \hat{Z} \in \Cfam(\hat{Y}_1,u_l)$.
 Hence, an edge in $F$ covered $\hat{Y}_{q+1}\cap \hat{Z}$
 when $e$ was added to $F$.
 Let $g$ denote such an edge.
 Since $g$ does not cover $\hat{Y}_{q+1}$,
 we have $g \subseteq Y^+_{q+1}$, implying that the witness of $g$ is
 included by $\hat{Y}_{q+1}$.
 $\hat{Y}_{q}$ is not the witness of $g$ because $e \neq g$.
 Hence, the witness of $g$ is also included by $\hat{Y}_{q}$.
 From this, it follows that $g \subseteq Y^+_{q} \subseteq Z^+$.
 However, it indicates that $g$ does not cover $\hat{Y}_{q+1} \cap \hat{Z}$,
 which is a contradiction.
 \end{proof}

Let $h$ be the node that each edge in $B$ leaves.
When $|B|\geq 2$, 
let $\mathcal{X}$ be the family of witnesses of edges in $B$.
We apply the deletion algorithm to each
$\hat{Y} \in \mathcal{X}$ to obtain $S_{\hat{Y}}$, and define $S=\bigcup_{\hat{Y}\in \mathcal{X}}S_{\hat{Y}}$.
When $|B|=1$,
let $\hat{X}$ be the witness of the edge in $B$,
and let $\mathcal{X}$ be the family of $\hat{X}$ and maximal cores in
$\Lfam \setminus \Afam$ that is not comparable with $\hat{X}$.
We apply the deletion algorithm to each core $\hat{Y}'\in \mathcal{X}$ to
obtain $S_{\hat{Y}'}$ and define $S=\bigcup_{\hat{Y}' \in \mathcal{X}}S_{\hat{Y}'}$
when $|B|=1$.
In the following lemmas, it will be shown that $S$ is a spider with 
$|B|$ feet and $h$ is the head of $S$.

\begin{lemma}\label{lem.spider1}
 When $|B|=1$,
 the edge set $S$
  is a spider with only one foot, and its head is $h$.
\end{lemma}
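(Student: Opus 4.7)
The plan is to verify the four conditions of the spider definition for $S$ with $f = 1$, head $h$, and foot $\hat{X}_1 \in \Mfam$ chosen to be the unique min-core of $\Vfam$ contained in the witness $\hat{X}$ (well-defined since $\hat{X}$ is a core). Taking the trivial decomposition $S_1 = S$, the pairwise-intersection condition is vacuous. For $h \notin X_1^+$, the unique directed edge in $B$ is $hv_1 \in \delta_A^-(\hat{X})$, so $h \in V \setminus X^+$; since $\hat{X}_1 \subseteq \hat{X}$ implies $X_1^+ \subseteq X^+$, we conclude $h \notin X_1^+$.

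The crux is the condition $\Cfam(\hat{X}_1, h) = \Cfam(\hat{X}_1)$, equivalent to showing $h \notin X^{\prime +}$ for every $\hat{X}' \in \Cfam(\hat{X}_1)$. By Lemma~\ref{lem.uncrossable}(i), $\Cfam(\hat{X}_1)$ is a ring-family, so it has a unique maximal core $\hat{X}^{\max}$, and it suffices to show $h \notin (X^{\max})^+$. I would argue by contradiction: if $h \in (X^{\max})^+$, then $\hat{X} \subsetneq \hat{X}^{\max}$, and by the Case~(b) hypothesis (whose second disjunct must hold here since $|B|=1$), $\hat{X}^{\max}$ is covered by $F$. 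Because the just-added edge $\{h,v_1\}$ cannot cover $\hat{X}^{\max}$ (as $h \in (X^{\max})^+$ keeps both endpoints inside its outer part), the covering edge must have already been in $F$ before the current event. Combining this with the fact that $\hat{X}$ was a min-core of the residual family before $\{h,v_1\}$ was added (so $\hat{X}$ itself was still uncovered) and the uncrossability of $\Vfam$, I would apply the uncrossing operations to $\hat{X}^{\max}$ and the biset witnessing that earlier covering edge to manufacture a tight tuple whose directed form places a second directed edge leaving $h$ into $B$, contradicting $|B|=1$.

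Given $\Cfam(\hat{X}_1, h) = \Cfam(\hat{X}_1)$, the remaining condition that $S$ covers $\Cfam(\hat{X}_1)$ follows from Lemma~\ref{lem.fy}. The edge $\{h,v_1\}$ is precisely the one whose witness is the top element $\hat{Y}_l = \hat{X}$ of the chain of $\hat{X}$ in $\Lfam$, so the head of that chain satisfies $u_l = h$. Lemma~\ref{lem.fy} then yields that $S_{\hat{X}}$ covers every core in $\Cfam(\hat{X}_1, u_l) = \Cfam(\hat{X}_1, h) = \Cfam(\hat{X}_1)$, and since $S \supseteq S_{\hat{X}}$, $S$ covers $\Cfam(\hat{X}_1)$, completing the verification.

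The main obstacle is the middle step. The Case~(b) hypothesis is a combinatorial coverage statement, while the desired conclusion is geometric (about $h$ lying outside certain outer parts), so the bridge requires careful biset manipulation leveraging the ring-family structure of $\Cfam(\hat{X}_1)$ together with the uncrossability of $\Vfam$, ultimately producing a conflict with the primal-dual state of the algorithm at time $\tilde{\tau}$ that would violate $|B|=1$.
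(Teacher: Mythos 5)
Your approach diverges from the paper's in a way that exposes a real gap. You aim to prove the literal third condition $\Cfam(\hat{X}_1,h)=\Cfam(\hat{X}_1)$, i.e., that no core above $\hat{M}$ has $h$ in its outer part. The paper's own proof does \emph{not} establish this; instead it explicitly allows for cores $\hat{Z}\in\Cfam(\hat{M})$ with $h\in Z^+$ and shows that every such $\hat{Z}$ is nonetheless covered by $S$. The reason is that when $|B|=1$ and Case~(b) fires, a core $\hat{Z}\supset\hat{X}$ with $h\in\Gamma(\hat{Z})$ may well exist and be covered by an edge already placed in $F$ at an earlier iteration, whose endpoints have nothing to do with $h$. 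Your proposed contradiction---that such a $\hat{Z}$ forces a second directed edge out of $h$ into $B$---does not follow: Case~(b)'s second disjunct merely says $\hat{Z}$ is covered by some edge in $F$, not by an edge leaving $h$, and the ring-family structure of $\Cfam(\hat{M})$ gives you no handle to route the covering edge through $h$.

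Consequently, the crux step you flag as ``the main obstacle'' is not just unexecuted---it aims at a claim that need not hold. The missing ingredient is the role of the \emph{other} components of $S$. In the $|B|=1$ case the algorithm sets $\mathcal{X}$ to be $\hat{X}$ together with the maximal cores of $\Lfam\setminus\Afam$ incomparable with $\hat{X}$, and takes $S=\bigcup_{\hat{Y}'\in\mathcal{X}}S_{\hat{Y}'}$. Those extra pieces $S_{\hat{Y}'}$, $\hat{Y}'\neq\hat{X}$, are precisely what cover the cores $\hat{Z}\in\Cfam(\hat{M})$ with $h\in Z^+$. The paper's proof picks a minimal uncovered such $\hat{Z}$, finds the edge $e\in F$ covering it, and does a case split on whether the witness of $e$ is incomparable with $\hat{X}$ or included in $\hat{X}$, using Lemmas~\ref{lem.fy}, \ref{lem.afam}, and \ref{lem.uncrossable}(iii) to derive a contradiction in each case. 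Your argument never touches these components, so even if the ``second edge out of $h$'' bridge could somehow be built, you would still be proving a statement the algorithm does not in general guarantee. You should instead target the functional content of the definition for $f=1$---that $S$ covers all of $\Cfam(\hat{M})$---and use the full set $\mathcal{X}$ to get there.
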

\begin{proof}
 Let $\hat{X}$ be the witness of the edge in $B$
 and $\hat{M}$ be the min-core included by $\hat{X}$.
 We prove that $S$ is a spider and its foot is $\hat{M}$.
 Lemma~\ref{lem.fy} indicates that all cores in $\Cfam(\hat{M},h)$ are
 covered by $S_{\hat{X}}$.
 Hence, it suffices to show that
 each core $\hat{Z} \in \Cfam(\hat{M})$ with $h \in Z^+$ is covered by
 $S$.
 Suppose that $\hat{Z}$ is covered by no edge in $S$.
 Let $\hat{Z}$ be the minimal core among such cores.
 There exists an edge $e=\{a,b\} \in F$ that covers $\hat{Z}$.
 Let $\hat{K}_1$ be the witness of $e$, and let $a \in K_1$ and
 $b\not\in K^+_1$, without loss of generality. If $\hat{K}_1\in \mathcal{X}$,
 then $e$ remains in $S$.
 Hence $\hat{K}_1\not\in \mathcal{X}$.
 $\hat{K}_1$ is either incomparable with
 $\hat{X}$ or is included by $\hat{X}$.
 If more than one edge in $F$ cover $\hat{Z}$ and one of them
 gives $\hat{K}_1$ incomparable with $\hat{X}$,
 then we choose such an edge as $e$.

 \begin{figure}[th]
  \centering
 \begin{tikzpicture}[line width=1.5pt]
  \draw[biset] (-1,-.5) rectangle (.5,.5);
  \draw[biset] (-.5,-1) rectangle +(3,2.4);
  \node at (-1.2,-.9) {$\hat{X}$};
  \node at (1.1,-.6) {$\hat{Z}$};

  \draw[biset] (1.8,-1.7) rectangle +(3,2.4);
  \draw[biset] (3.3,-1) rectangle +(1,1);
  \node at (3.9,-.6) {$\hat{K}_1$};
  \node at (5.3,-1.4) {$\hat{K}_2$};

  \draw[biset outer] (-1.1,-.6) rectangle (.6,.6);
  \draw[biset outer] (-.6,-1.1) rectangle +(3.2,2.6);
  \draw[biset outer] (1.7,-1.8) rectangle +(3.2,2.6);
  \draw[biset outer] (3.2,-1.1) rectangle +(1.2,1.2);

  \node[vertex] (a) at (3.6,-.3){};
  \node[vertex] (b) at (1.9,-.3){};
  \draw (a) -- (b);
  \node at (2.9,-.1) {$e$};

  \node[vertex] (c) at (3.8,.35){};
  \node[vertex] (d) at (3.8,1.4){};
  \draw (c) -- (d);
  \node at (4,1.1) {$f$};

  \node[vertex] (e) at (1.2,.35){};
  \node[vertex] (f) at (2.45,.35){};
  \draw (e) -- (f);
  \node at (2.2,.1) {$g$};

  \begin{scope}[xshift = 280pt]
   \draw[biset] (-2.5,-2) rectangle (1.5,2.5);
   \draw[biset] (-1.5,-1.3) rectangle (1,1.5);
   \draw[biset] (-.5,-.8) rectangle +(3,1.6);

   \draw[biset outer] (-2.6,-2.1) rectangle (1.6,2.6);
   \draw[biset outer] (-1.6,-1.4) rectangle (1.1,1.6);
   \draw[biset outer] (-.6,-.9) rectangle +(3.2,1.8);

   \node at (-2.1,2.1) {$\hat{X}$};
   \node at (-1,1.1) {$\hat{K}_1$};
   \node at (2.1,-.3) {$\hat{Z}$};
   
  \node[vertex] (a) at (-2,0){};
  \node[vertex] (b) at (.2,0){};
  \draw (a) -- (b);
  \node at (-1,-.2) {$e$};
  \end{scope}
 \end{tikzpicture}
  \caption{Bisets in the proof of Lemma~\ref{lem.spider1}. The left
  figure illustrates the case where $\hat{K}_1$ is incomparable with
  $\hat{X}$, and
  the right illustrates the case where $\hat{K}_1$ is included by $\hat{X}$.}
  \label{fig.spider1}
 \end{figure}
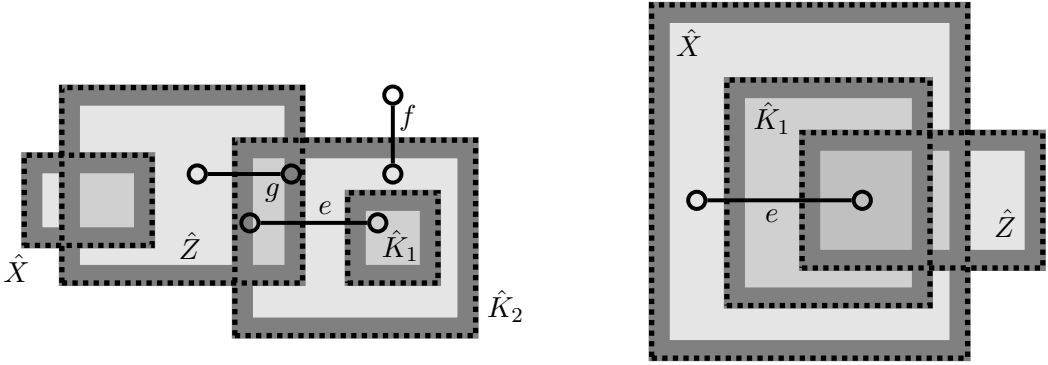

 Suppose that $\hat{K}_1$ is incomparable with $\hat{X}$.
 The left example in Figure~\ref{fig.spider1} illustrates this case.
 Let $\hat{K}_0$ be the min-core included by $\hat{K}_1$,
 and let $\hat{K}_2$ be the minimal core in $\Lfam$ with
 $\hat{K}_1\subset \hat{K}_2$.
 Note that $e \subseteq K^+_2$, and
 $\hat{Z}$ and $\hat{K}_2$ are incomparable.
 Then, $\hat{K}_2\setminus \hat{Z} \in \Cfam(\hat{K}_0)$ holds, and 
 it is covered by some edge $f\in S$ by Lemma~\ref{lem.fy}.
 Since $f$ does not cover $\hat{Z}$, it has one end-node in $K_2 \setminus
 Z^+$ and the other in $V \setminus (K^+_2 \cup Z)$.
 On the other hand, $\hat{Z}\setminus \hat{K}_2 \in \Cfam(\hat{M})$.
 The minimality of $\hat{Z}$ indicates that $\hat{Z}\setminus \hat{K}_2$
 is covered by some edge $g \in S$.
 Since $g$ does not cover $\hat{Z}$,
 it has one end-node in $Z\setminus K^+_2$ and the other in $K_2\cap Z^+$.
 These imply that $f\neq g$, and both $f$ and $g$ cover $\hat{K}_2$.
 If $\hat{K}_2 \in \Lfam\setminus \Afam$, then this is a contradiction because
 any core in $\Lfam\setminus \Afam$
 is covered by exactly one edge in $S$ by Lemma~\ref{lem.fy}.
 Otherwise, $\hat{K}_2 \in \Afam \setminus \{\hat{X}\}$.
 Even in this case, there is a contradiction because each core in $\Afam
 \setminus \{\hat{X}\}$ is covered by no edge in $F$.

 Suppose that $\hat{K}_1$ is included by $\hat{X}$.
 $\hat{X} \cup \hat{Z} \in \Cfam(\hat{M})$ holds.
 Moreover, $\hat{X}\subset \hat{X} \cup \hat{Z}$ holds
 because $Z^+$ includes $h$, 
 and $\hat{Z}\subset \hat{X} \cup \hat{Z}$ holds
 because $\{a,b\} \in \delta_E(\hat{Z})$ is included by $X^+$.
 $\hat{X}\cup \hat{Z}$ is covered by some edge $f' \in F$.
 The witness of $f'$ is incomparable with $\hat{X}$
 since otherwise, $f' \subseteq X^+$.
 $f'$ covers $\hat{X}$ or $\hat{Z}$.
 If $f'$ covers $\hat{Z}$,
 then $f'$ is chosen instead of $e$, and this case is categorized into
 the previous one where $\hat{K}_1$ is incomparable with $\hat{X}$.
 Hence, $f'$ covers $\hat{X}$. Then,
 Lemmas~\ref{lem.uncrossable}~(iii) and \ref{lem.afam} indicate that
 all cores comparable with the witness of
 $f'$ 
 are covered by $F$ before $\hat{X}$ is added to $\Afam$,
 which is
 a contradiction.
\end{proof}

 \begin{lemma}\label{lem.spider2}
 When $|B|\geq 2$,
  the edge set $S$
  is a spider with $|B|$ feet, and $h$ is its head.
\end{lemma}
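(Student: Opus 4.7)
The plan is to verify that $S=\bigcup_{\hat{Y}\in\mathcal{X}}S_{\hat{Y}}$ satisfies the four defining conditions of a spider, with head $h$ and with feet the min-cores $\hat{M}^{(e)}$ contained in the witnesses $\hat{X}^{(e)}$ of the edges $e\in B$. The backbone of the argument is that distinct edges of $B$ live in distinct $\Lfam$-chains: within a single chain $\hat{Y}_1\subset\cdots\subset\hat{Y}_l$, the tails of the corresponding edges occupy the pairwise disjoint shells $Y^+_{l'+1}\setminus Y^+_{l'}$, so two edges in $B$, both having $h$ as their tail, cannot share a chain. Hence the current top-of-chain ancestors $\hat{Z}^{(e)}\in\Afam$ containing the $\hat{X}^{(e)}$ are pairwise distinct. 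By Lemma~\ref{lem.afam} and Lemma~\ref{lem.uncrossable}(iii), these $\hat{Z}^{(e)}$ are pairwise strongly disjoint; strong disjointness descends to sub-bisets inside a chain, making the $\hat{X}^{(e)}$ pairwise strongly disjoint and their min-cores $\hat{M}^{(e)}$ pairwise distinct.

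Given this setup, the covering and head-outside-foot conditions follow quickly. The top edge of the chain truncated at $\hat{X}^{(e)}$ is $e$ itself and has tail $h$, so Lemma~\ref{lem.fy} yields that $S_{\hat{X}^{(e)}}$ covers every core in $\Cfam(\hat{M}^{(e)},h)$. Since $e$ leaves $h$ and covers $\hat{X}^{(e)}$, we have $h\notin (X^{(e)})^+\supseteq (M^{(e)})^+$, giving the head-outside-foot condition.

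The main obstacle is the node-disjointness condition $V(S_{\hat{X}^{(e)}})\cap V(S_{\hat{X}^{(e')}})\subseteq\{h\}$ for $e\neq e'$. Every non-$h$ vertex of $S_{\hat{X}^{(e)}}$ sits inside $(X^{(e)})^+$, and the strong disjointness $X^{(e)}\cap (X^{(e')})^+=\emptyset=(X^{(e)})^+\cap X^{(e')}$ forces any shared non-$h$ vertex to lie in the common boundary $\Gamma(\hat{X}^{(e)})\cap\Gamma(\hat{X}^{(e')})$. The only vertices of $S_{\hat{X}^{(e)}}$ that can land in $\Gamma(\hat{X}^{(e)})$ are the intermediate tails $u_{l'}$ with $l'<l^{(e)}$, each of which was the distinguished node $u$ at a past event that went through Case~(a) and hence had $|B|=1$ for that node at that moment. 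If a single non-$h$ node were an intermediate tail in two different chains, it would have been chosen as $u$ at two distinct past events; at the second of those events the accumulated $|B|$ for that node would already be at least $2$, forcing Case~(b) termination then and contradicting the fact that the algorithm reaches the present terminating event with head $h$. Combining these points gives that $S$ is a spider with $|B|$ feet and head $h$.
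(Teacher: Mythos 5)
Your proof is correct, and it reaches the same conclusion as the paper via a noticeably different organization of the node-disjointness argument. Like the paper, the engine of your argument is that a shared non-head vertex would have forced Case~(b) to fire at an earlier event. The paper takes two arbitrary shared edges $e_1 \in F_{b_1}$, $e_2 \in F_{b_2}$, introduces the core $\hat{Y}'_1$ that replaced $e_1$'s witness in $\Afam$, deduces $v \in (Y'_1)^+$, and then handles the two cases $v \in Y'_1$ and $v \in \Gamma(\hat{Y}'_1)$ with two separate arguments (a covering argument in the first case, a two-outgoing-edges argument in the second). You instead first pin down where a shared vertex $v \ne h$ can live: since the tails $u_{l'}$ within one chain occupy the disjoint shells $Y^+_{l'+1}\setminus Y^+_{l'}$, the $B$-edges come from distinct $\Lfam$-chains, so their witnesses $\hat{X}^{(e)}$ lie under pairwise strongly disjoint members of $\Afam$ and are therefore pairwise strongly disjoint themselves; combined with the fact that every non-$h$ vertex of $S_{\hat{X}^{(e)}}$ lies in $(X^{(e)})^+$, this forces $v \in \Gamma(\hat{X}^{(e)})\cap\Gamma(\hat{X}^{(e')})$. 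Since the $v_{l'}$ all lie in $X^{(e)}$, only an intermediate tail can reach the boundary, so $v$ is the distinguished node at two past Case~(a) events, whence $|B|\ge 2$ at the second and Case~(b) would have fired there. This collapses the paper's two cases into one; the price is establishing up front the strong disjointness of the $\hat{X}^{(e)}$, a step the paper bypasses by routing the argument through $\hat{Y}'_1$. Both approaches are sound, and yours is arguably the cleaner case analysis.
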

\begin{proof}
 Let $\Bfam=\{\hat{B}_1,\ldots,\hat{B}_b\}$ be the set of witnesses of
 the edges in $B$.
 Let $\hat{M}_{b'}$ be the min-core included by $\hat{B}_{b'}$,
 and let $F_{b'}$ denote $F_{B_{b'}}$
 for each $b' \in [b]$.
 Lemma~\ref{lem.fy} shows that $F_{b'}$ covers $\Cfam(\hat{M}_{b'},h)$
 for each $b' \in [b]$.
 Hence it suffices to prove that 
 $V(F_{b_1}) \cap V(F_{b_2}) \subseteq \{h\}$ for each
 $b_1,b_2\in [b]$ with $b_1 \neq b_2$.
 Suppose that $e_1 \in F_{b_1}$ and $e_2 \in F_{b_2}$ share an
 end-node $v$ with $h\neq v$.
 
 Suppose that $e_1$ was added to $F$
 before $e_2$.
 Let $\hat{Y}_1$
 be the witness of $e_1$, and $\hat{Y}'_1$ be the core 
 that was added to $\Afam$ when $\hat{Y}_1$ was removed from $\Afam$.
 Note that $\hat{Y}_1 \subset \hat{Y}'_1$, and
 $e_1$ does not cover $\hat{Y}'_1$ but $\hat{Y}_1$.
 Hence, $v \in (Y'_1)^+$,
 and the other end-node of $e_2$ is in $B_{b_2}$.
 If $v \in Y'_1$, then
 $e_2$ covers all cores including $B_{b_2}$
 since they are strongly disjoint with $Y'_1$.
 Hence, Case (b) occurred when $e_2$ was added to $F$, 
 and $v$ must be $h$ in this case.
 Even if $v \not\in Y'_1$, 
 $e_1$ and $e_2$ are added to $F$ because of the directed edges leaving $v$.
 This means that
 Case (b) occurred when $e_2$ was added to $F$,
 and $h=v$ holds.
\end{proof}

\begin{lemma}\label{lem.spider-lp}
 There exists $w\colon V \rightarrow W$ such that $S$ is
 activated by $w$, and $w(V)/f(S) \leq \sum_{\hat{X} \in
 \Lfam}z(\hat{X})/|\Mfam_{\Vfam}|$.
\end{lemma}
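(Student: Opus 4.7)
The plan is to construct $w$ directly from the tight tuples that the primal-dual algorithm attaches to the edges of $S$ when they are added to $F$, and then to bound $w(V)$ through the dual variables of $\CoreDual(\Lfam)$. For each edge $e = \{u_e, v_e\} \in S$ the algorithm added $e$ with a weight pair $(j_e^u, j_e^v)$ coming from some tight tuple $(u_e v_e, j_e^u, j_e^v)$. I set $w(v)$ to be the largest weight required at $v$ across all edges of $S$ incident to $v$ (the maximum of $j_e^u$ over edges with $u_e = v$ and of $j_e^v$ over edges with $v_e = v$). Since each $\psi^{u_e v_e}(j_e^u, j_e^v) = \true$, monotonicity of the activation functions immediately gives $S \subseteq E_w$, so the construction is feasible.

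To bound $w(V)$, I use the complementary slackness enforced by the algorithm. When $(u_e v_e, j_e^u, j_e^v)$ becomes tight at time $\tau_e$, the algorithm has already made \eqref{primal-c3} tight for $(v_e, j_e^v)$ at some time $\tau_e^v \leq \tau_e$ (on the dual side this is the tight form of \eqref{dual-c3}), and tightens the corresponding dual inequality \eqref{dual-c3'} for $(u_e, j_e^u)$ at time $\tau_e$. Hence $j_e^v = \sum_{\hat{X}\in\Lfam,\ v_e\in X} z(\hat{X}, v_e, j_e^v)$ and $j_e^u = \sum_{\hat{X}\in\Lfam,\ u_e\in V\setminus X^+} z(\hat{X}, u_e, j_e^u)$ at those moments, and these equalities persist to termination. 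Moreover, the remark recorded when witnesses are defined implies that a positive $z(\hat{X}, u_e, j_e^u)$ forces $\hat{X}$ to lie within the chain of $\Lfam$ descending from the witness of $e$, so each weight charges against the duals of a single chain.

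The guiding identity is $\sum_{\hat{X} \in \Lfam} z(\hat{X}) = |\Mfam_{\Vfam}| \cdot \tilde{\tau}$, where $\tilde{\tau}$ is the terminal time: $|\Afam|$ is invariant and equal to $|\Mfam_{\Vfam}|$, each $z(\hat{X})$ with $\hat{X}\in\Afam$ grows at unit rate, and any core leaving $\Afam$ is simultaneously replaced by one starting at zero. For each foot $i$ of the spider I consider the associated chain $\hat{Y}^{(i)}_1 \subset \cdots \subset \hat{Y}^{(i)}_{l_i}$ of cores in $\Lfam$, and prove by telescoping over the events generating that chain that the total weight assigned to $V(S_i)\setminus\{h\}$, together with the contribution to $w(h)$ needed to activate the edge of $S_i$ leaving $h$, is bounded by $\sum_p z(\hat{Y}^{(i)}_p) = \tilde{\tau}$. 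Summing over the $f(S)$ feet and using that all edges of $B$ leave $h$ with the same weight $j$ (the algorithm chooses a single $j\in W$ at the triggering event), we obtain $w(V) \leq f(S)\,\tilde{\tau} = \frac{f(S)}{|\Mfam_{\Vfam}|}\sum_{\hat{X}\in\Lfam} z(\hat{X})$, which is the required bound.

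The main obstacle is the per-foot telescoping, where one must carefully track the instants at which the algorithm stops raising $z(\hat{X}, v, j')$ because \eqref{dual-c3} goes tight and switches to raising $z(\hat{X}, u, j)$, so that each edge's two weights are charged exactly once against the dual variables of its chain and no node is double-counted even when it lies on two edges of $S_i$. A secondary subtlety arises in the case $|B|=1$, where the single-foot spider also incorporates the auxiliary sets $S_{\hat{Y}'}$ for $\hat{Y}'\in\mathcal{X}\setminus\{\hat{X}\}$: one must show that the combined weight across these auxiliary sub-chains still fits within the budget $\tilde{\tau}$, which I expect to handle by using the maximality of each $\hat{Y}'$ in $\Lfam\setminus\Afam$ to turn its contribution into a strictly truncated chain sum.
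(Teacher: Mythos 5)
Your proposal follows essentially the same route as the paper's proof: define $w$ as the maximum over incident tight tuples, use the fact that \eqref{dual-c3} and \eqref{dual-c3'} are tight to rewrite $j_e$ and $j'_e$ as sums of $z(\hat{X},\cdot,\cdot)$, invoke $\sum_{\hat{X}\in\Lfam}z(\hat{X})=|\Mfam_{\Vfam}|\cdot\tau$, and charge each foot's weight against the dual chain $\Lfam(\hat{M})$. So the overall strategy is correct. Two places, however, are presented as settled where they are not.

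First, you assert that ``all edges of $B$ leave $h$ with the same weight $j$ (the algorithm chooses a single $j\in W$ at the triggering event).'' As described in Section~\ref{sec.primal-dual}, the algorithm picks an arbitrary pair $(u,j)$ at each event, and the edges in $B$ are accumulated across \emph{several} events (each Case~(a) occurrence adds one edge to $B$). Different events could a priori pick different weights for the same node. What the paper's proof actually does is finer: for each foot $\hat{M}$ it charges the head only $\sum_{\hat{X}\in\Lfam(\hat{M},h)}z(\hat{X},h,j_e)$, and it shows that every $\hat{X}$ with $z(\hat{X},h,j_e)>0$ is covered by some edge of $S$ leaving $h$, so these per-foot head contributions are supported on disjoint chains. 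Your simplification would be legitimate if you stipulate that once \eqref{dual-c3'} becomes tight for some $(h,j)$, the algorithm reuses the same $j$ at later events involving $h$ (which is possible, since dual variables never decrease). But this needs to be stated and justified, not assumed.

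Second, you flag the $|B|=1$ case as a subtlety and say you ``expect to handle'' the auxiliary sets $S_{\hat{Y}'}$ for $\hat{Y}'\in\mathcal{X}\setminus\{\hat{X}\}$ by maximality. This is precisely the place where the paper's proof leans on the spider definition requiring $\Cfam(\hat{X}_1,h)=\Cfam(\hat{X}_1)$ when $f=1$, combined with Lemma~\ref{lem.spider1} which shows the whole of $\Cfam(\hat{M})$ is covered, so that the set $S'$ of edges covering $\Cfam(\hat{M},h)$ captures all of $S$'s charged cores. Leaving this as ``I expect to handle'' is a genuine gap: without it your bound $w(V)\le f(S)\tilde{\tau}$ only accounts for the weight assigned by edges in $S_{\hat{X}}$, and the weight of nodes incident only to auxiliary edges is unaccounted for. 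Also, your attribution of the ``$\hat{X}$ lies in the witness chain'' fact to ``the remark recorded when witnesses are defined'' is slightly off; the paper derives it instead from the observation that if some $\hat{X}\notin\Lfam(\hat{M})$ had $z(\hat{X},u,j_e)>0$ with $u\ne h$, then Case~(b) would already have triggered at an earlier event, contradicting that $u$ is not the head. You will want that argument rather than the witness remark, because the witness remark only constrains cores comparable with the witness $\hat{X}$, not all cores with positive dual.
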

\begin{proof}
 Recall that each edge in $S$ is undirected, but it has a unique
 direction in which it enters the inner-part of its witness.
 Hence, we regard the edges in $S$ as directed edges in this proof.
 For each $e=uv \in S$,
 there exists $(j_e,j'_e)\in \Psi^{e}$
 such that 
 \eqref{dual-c3'} is tight for 
 $(u,j_e)$ and $\eqref{dual-c3}$ is tight for $(v,j'_e)$.
 We can activate $e$ by setting $w(u)$ to a value of at least $j_e$ and
 $w(v)$ to a value of at least $j'_e$.
 When $e$ is added to $F$, $e$ assigns $j_e$ to $u$ and $j'_e$ to $v$.
 If a node has incident edges in $S$,
 we set the weight of the node to the maximum value assigned from the
 incident edges in $S$.
 If a node has no incident edge in $S$, then its weight is set to $0$.
 Let $\tau$ be the time when the algorithm was completed.
 Below, we prove that the total weight assigned from edges in $S$
 is at most $\tau f(S)$ where we do not count
 the weight assigned to the head $h$ of $S$ multiple times.
  Since $\tau = \sum_{\hat{X}\in \Lfam}z(\hat{X})/|\Mfam|$, this proves
 the lemma.

 Let $\hat{M}$ be a foot of $S$ and
 $S'$ be the set of edges in $S$ that cover $\Cfam(\hat{M},h)$.
 Let $e=uv \in S'$.  $e$ assigns $j_e \in W$
 to $u$  and $j'_e$ to $v$.
 Moreover,
\begin{equation}
 j_e=\sum_{\hat{X} \in \Lfam: u\in V \setminus X^+}z(\hat{X},u,j_e)
\label{eq.je}
\end{equation}
 holds because \eqref{dual-c3'} is tight for $(u,j_e)$,
 and 
\begin{equation}
 j'_e=\sum_{\hat{X} \in \Lfam: v\in X}z(\hat{X},v,j'_e) 
\label{eq.je'}
\end{equation}
 holds because \eqref{dual-c3} is tight for $(v,j'_e)$. 
 Let $\tau_e$ denote the time when  \eqref{dual-c3'} became tight for $(u,j_e)$.

 We first consider the case where $u \neq h$.
 Let us prove that the right-hand side of \eqref{eq.je} is contributed
 by cores covered by $e$. 
 Suppose that $z(\hat{X},u,j_{e})>0$ holds for some $\hat{X} \in \Lfam$ with
 $u \in V \setminus X^+$. 
 Then there exists an edge
 $uv'$ that covers $\hat{X}$, and
 \eqref{dual-c3} was tight for some $(v',j')$
 with $(j_e,j') \in \Psi^{uv'}$
 at time $\tau_e$.
  If $\hat{X} \not\in \Lfam(\hat{M})$,
 then
 this means that Case (b) occurred when $e$ was added to $F$.
 Since this contradicts $u\neq h$,
 we have $\hat{X} \in \Lfam(\hat{M})$.
 If $\hat{X}$ includes the witness of $e$,
 then $e$ covers $\hat{X}$ because $u \not\in X^+$.
 Hence,
 $\hat{X}$ is included by the
 witness of $e$. However, in this case, 
 $uv$ is not added to $F$ by the algorithm.
 Hence $e$ covers $\hat{X}$.

 The right-hand side of \eqref{eq.je'} is also contributed by
 cores covered by $e$.
 To see this, suppose that $z(\hat{X},v,j'_e)>0$
 holds for some $\hat{X} \in \Lfam$ with $v \in X$.
 If $e$ does not cover $\hat{X}$,
 then $u \in X^+$ holds, implying that $e$ was already in $F$ when $\hat{X}$ entered
 $\Afam$.
 In other words, $\hat{X}$ enters $\Afam$ after time $\tau_e$.
 However, \eqref{dual-c3} was tight for $(v,j'_e)$ at time $\tau_e$.
 Therefore, $z(\hat{X},v,j'_e)>0$ does not hold unless $e$ covers
 $\hat{X}$. Note that this is the case even when $u=h$.

 When $u=h$, $e$ assigns $j_e$ to $h$ but 
 more than one edge leaving $h$ in $S$ may assign the same weight to $h$.
 By the same discussion as above, 
 if a core $\hat{X} \in \Lfam$ with $h\not\in X^+$ satisfies
 $z(\hat{X},h,j_e)>0$,
 then $S$ contains an edge that leaves $h$ and covers $\hat{X}$.
 Hence, we here count only $\sum_{\hat{X}\in \Lfam(\hat{M},h)}z(\hat{X},h,j_e)$ as the weight assigned from $e$ to $h$.
 A core $\hat{X} \in \Lfam(\hat{M},h)$ contributing to this value is covered by
 $e$ according to the discussion above.
 Then the total weight assigned from edges in $S'$ is exactly
 \[
 \sum_{e=uv \in S'}  \sum_{\hat{X}\in \Lfam(\hat{M}): e \in \delta^-_A(\hat{X})}
 \left(
 z(\hat{X},u,j_e)+ z(\hat{X},v,j'_e)
\right)
= \sum_{e \in S'}  \sum_{\hat{X}\in \Lfam(\hat{M}): e \in \delta^-_A(\hat{X})}z(\hat{X}).
 \]
 Lemma~\ref{lem.fy} tells that each $\hat{X} \in \Lfam$ is covered by exactly one edge in $S$.
 Hence the right-hand side of the above equality is equal to $\sum_{\hat{X} \in \Lfam(\hat{M})}z(\hat{X})$.
 Since two cores in $\Lfam(\hat{M})$ do not belong to $\Afam$ simultaneously, 
 this does not exceed $\tau$.
 Since $S$ has $f(S)$ feet, it implies that 
 the total weight is at most $\tau f(S)$.
 \end{proof}

 Theorem~\ref{thm.spideralgorithm} follows from
 Lemmas~\ref{lem.spider1}, \ref{lem.spider2},
 and \ref{lem.spider-lp}.

\section{Potential function on uncrossable biset families}
\label{sec.potential}

In this section, $\Vfam$ is an uncrossable family of bisets
and $\gamma$ stands for $\max_{\hat{X} \in \Vfam}|\Gamma(\hat{X})|$.

For analyzing the greedy algorithm of choosing spiders repeatedly, we need a potential function that
measures the progress of the algorithm. Nutov~\cite{Nutov12uncrossable} used $|\Mfam_{\Vfam}|$ as a
potential. He claimed that this potential gives $O(\log d)$-approximation  because
$|\Mfam_{\Vfam}|-|\Mfam_{\Vfam_S}| \geq f(S)/3$ holds for each uncrossable biset family $\Vfam$ and
each spider $S$ of $\Vfam$. However, there is a case with $|\Mfam_{\Vfam}|-|\Mfam_{\Vfam_S}|=0$ as
follows. Let $\Vfam=\{\hat{X}_1,\hat{Y}_1,\ldots,\hat{X}_n,\hat{Y}_n\}$, and suppose that 
$\hat{X}_l \subseteq \hat{Y}_l$ for each $l \in [n]$, $\hat{Y}_l$ and $\hat{Y}_{l'}$ are strongly disjoint for
each $l,l' \in [n]$ with $l\neq l'$, and a node $h$ is in $\Gamma(\hat{Y}_l) \setminus X^+_l$ for
each $l \in [n]$. $\Vfam$ is strongly laminar, and hence uncrossable. Note that
$\Mfam_{\Vfam}=\{\hat{X}_1,\ldots,\hat{X}_n\}$, and hence $|\Mfam_{\Vfam}|=n$. If the head of a
spider $S$ is $h$ and its feet are $\hat{X}_1,\ldots,\hat{X}_n$ (i.e., $f(S)=n$), then
$\Mfam_{\Vfam_S}=\{\hat{Y}_1,\ldots,\hat{Y}_n\}$ holds, and hence $|\Mfam_{\Vfam_S}|=n$. Therefore,
$|\Mfam_{\Vfam}|-|\Mfam_{\Vfam_S}|=0$.

Vakilian~\cite{Vakilian13} showed that such an inconvenient situation does not appear if $\Vfam$ arises from the
node-weighted SNDP. To explain this more precisely, let $(V,E_0)$ be the graph to be augmented in an
instance of the prize-collecting augmentation problem. Recall that the problem requires to add edges in an edge set $E$ to $E_0$. If this instance is obtained by the reduction from the
node-weighted SNDP in Theorem~\ref{thm.augmentation}, then $E_0$ is the subset of $E_0\cup E$ induced by some node set 
$U\subseteq V$, and each biset $\hat{X}$ that requires to be covered satisfies $\Gamma(\hat{X})\subseteq U$.
Moreover, a spider is not chosen if its head is in $U$, 
and therefore the heads of chosen spiders are not included by the neighbor
of any biset. This means that each spider $S$ achieves $|\Mfam_{\Vfam}|-|\Mfam_{\Vfam_S}| \geq f(S)/3$ for $\Vfam$ arising
from the node-weighted SNDP. However this is not the case for all uncrossable biset families, including those arising from 
the PCNAP because $(V,E_0)$ may not be an induced subgraph in general.

Because of this, using $|\Mfam_{\Vfam}|$ as a potential function gives no desired
approximation guarantee for general uncrossable biset families. Hence, we introduce a new potential function in this section. For a family
$\mathcal{X}$ of cores and core $\hat{X} \in \mathcal{X}$,  let $\Delta_{\mathcal{X}}(\hat{X})$
denote the set of nodes $v\in \Gamma(\hat{X})$ such that  there exists another core $\hat{Y} \in
\mathcal{X} \setminus \{\hat{X}\}$ with $v \in \Gamma(\hat{Y})$. We define the potential
$\phi_{\mathcal{X}}(\hat{X})$ of a core $\hat{X}$ as $\gamma-|\Delta_{\mathcal{X}}(\hat{X})|$. The
potential $\phi(\mathcal{X})$ of $\mathcal{X}$ is defined as 
$(\gamma+1)|\mathcal{X}|+\sum_{\hat{X} \in \mathcal{X}}\phi_{\mathcal{X}}(\hat{X})$.

\begin{lemma}\label{lem.neighbor}
 Let $\hat{X} \in \Mfam_{\Vfam}$, $S$ be an edge set, and $\hat{Y}$ be the
 min-core in $\Mfam_{\Vfam_S}$ 
 such that $\hat{X}
 \subseteq \hat{Y}$ where $\hat{X}=\hat{Y}$ possibly holds.
 Let $v$ be a node with
 $v \in \Delta_{\Mfam_{\Vfam}}(\hat{X}) \setminus \Delta_{\Mfam_{\Vfam_S}}(\hat{Y})$,
 and $\hat{Z}$ be a min-core in $\Mfam_{\Vfam} \setminus \{\hat{X}\}$ with $v \in \Gamma(\hat{Z})$.
 Then, $S$ covers all cores in $\Cfam_{\Vfam}(\hat{Z})$. If there exists a min-core in
 $\Mfam_{\Vfam_S}$ that includes $\hat{Z}$, then it is $\hat{Y}$.
\end{lemma}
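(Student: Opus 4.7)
\emph{Plan.} I will prove the two assertions in reverse order, deducing (1) from (2). The tool throughout is the strong disjointness of distinct min-cores in uncrossable biset families (Lemma~\ref{lem.uncrossable}(iii)), used both inside $\Vfam$ and, thanks to Lemma~\ref{lem.residual-uncrossable}, inside the residual family $\Vfam_S$.

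For claim (2), suppose for contradiction that some $\hat{Z}''\in\Mfam_{\Vfam_S}$ with $\hat{Z}\subseteq\hat{Z}''$ satisfies $\hat{Z}''\neq\hat{Y}$. Since $\hat{Y}$ and $\hat{Z}''$ are distinct min-cores of the uncrossable family $\Vfam_S$, Lemma~\ref{lem.uncrossable}(iii) yields $Y\cap (Z'')^+=\emptyset$ and $Y^+\cap Z''=\emptyset$. Now I would locate $v$ inside the relevant boundaries: from $v\in\Gamma(\hat{X})\subseteq X^+\subseteq Y^+$ and $Y^+\cap Z''=\emptyset$ one obtains $v\notin Z''$; from $v\in\Gamma(\hat{Z})\subseteq Z^+\subseteq (Z'')^+$ and $Y\cap (Z'')^+=\emptyset$ one obtains $v\notin Y$. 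Hence $v\in\Gamma(\hat{Y})\cap\Gamma(\hat{Z}'')$ with $\hat{Y}\neq\hat{Z}''$ both in $\Mfam_{\Vfam_S}$, which forces $v\in\Delta_{\Mfam_{\Vfam_S}}(\hat{Y})$ and contradicts the hypothesis on $v$.

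For claim (1), suppose some $\hat{Z}'\in\Cfam_{\Vfam}(\hat{Z})$ is not covered by $S$, so $\hat{Z}'\in\Vfam_S$. Pick a biset $\hat{Z}''$ that is minimal in $\{\hat{U}\in\Vfam_S:\hat{U}\subseteq\hat{Z}'\}$; any biset of $\Vfam_S$ strictly inside $\hat{Z}''$ would also lie in this set, so in fact $\hat{Z}''\in\Mfam_{\Vfam_S}$. Because $\hat{Z}''\in\Vfam$ sits inside the $\Vfam$-core $\hat{Z}'$, the only min-core of $\Vfam$ it can contain is $\hat{Z}$, whence $\hat{Z}\subseteq\hat{Z}''$. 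Part (2) now forces $\hat{Z}''=\hat{Y}$, so $\hat{X}\subseteq\hat{Y}\subseteq\hat{Z}'$ and in particular $X\subseteq Z'$. Applying Lemma~\ref{lem.uncrossable}(iii) inside $\Vfam$ to the distinct min-cores $\hat{X},\hat{Z}$ and to $\hat{Z}'\in\Cfam_{\Vfam}(\hat{Z})$ gives $X\cap (Z')^+=\emptyset$; combined with $X\subseteq Z'$ this yields $X=\emptyset$, contradicting the non-emptiness of the inner part of a min-core (every min-core separates some terminal pair and hence contains a terminal in its inner part).

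The main subtlety to guard against is keeping track of which uncrossability statement applies in which family: $\hat{Y}$ and $\hat{Z}''$ are min-cores of $\Vfam_S$ but need not be cores of $\Vfam$, so the strong disjointness of the pair $(\hat{Y},\hat{Z}'')$ must be derived inside $\Vfam_S$, whereas the strong disjointness of $\hat{X}$ and $\hat{Z}'$ is a statement about $\Vfam$. Beyond this bookkeeping, the argument is purely set-theoretic manipulation of inner parts, outer parts, and boundaries.
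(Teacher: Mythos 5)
Your proof is correct, and it is a genuinely cleaner route than the paper's. The paper's own argument opens with a case split on whether $v\in Y$ or $v\in\Gamma(\hat{Y})$: in the first case it establishes $\hat{Z}\subseteq\hat{Y}$ directly by uncrossing, while in the second case it first proves the uniqueness assertion and then deduces the covering assertion. You avoid the case split entirely by proving uniqueness first: after invoking Lemma~\ref{lem.residual-uncrossable} to make $\Vfam_S$ uncrossable, the strong disjointness from Lemma~\ref{lem.uncrossable}(iii) applied in $\Vfam_S$ simultaneously forces $v\notin Y$ and $v\notin Z''$, hence $v\in\Gamma(\hat{Y})\cap\Gamma(\hat{Z}'')\subseteq\Delta_{\Mfam_{\Vfam_S}}(\hat{Y})$, giving the contradiction in one stroke. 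You then derive the covering claim from uniqueness: an uncovered $\hat{Z}'\in\Cfam_{\Vfam}(\hat{Z})$ hosts a min-core $\hat{Z}''$ of $\Vfam_S$, which by uniqueness must equal $\hat{Y}$, contradicting strong disjointness of $\hat{X}$ from members of $\Cfam_{\Vfam}(\hat{Z})$ in $\Vfam$. Both proofs ultimately rest on the same facts; yours is more modular and somewhat shorter. One minor note: your closing contradiction $X=\emptyset$ leans on min-cores having nonempty inner parts (true in this paper since each biset separates a demand pair, and the paper's own proof implicitly uses the same fact); at that point you could also conclude more directly that $\hat{Z}''=\hat{Y}$ would have to contain the distinct min-cores $\hat{X}$ and $\hat{Z}$ while lying inside the core $\hat{Z}'$ of $\hat{Z}$.
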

\begin{proof}
 Since $v \in \Gamma(\hat{X}) \subseteq Y^+$,
 $v$ is either in $Y$ or $\Gamma(\hat{Y})$.
 Suppose it is the former case (i.e., $v \in Y$). Then, $\hat{Z} \not\in
 \Vfam_S$ 
 because $\hat{Y}$ and $\hat{Z}$ are not strongly disjoint in this case,
 and $\hat{Z} \in \Vfam_S$
 contradicts Lemma~\ref{lem.uncrossable}~(iii).
 Moreover, $\hat{Z}$ is
 included by $\hat{Y}$ 
 since, otherwise, they must be strongly disjoint, contradicting the
 existence of $v$.
 This means that all cores
 in $\Cfam_{\Vfam}(\hat{Z})$ are covered by $S$.

 Suppose it is the latter case (i.e., $v \in \Gamma(\hat{Y})$). 
 Let $\hat{Z}'$ be a min-core in $\Mfam_{\Vfam_S}$ that includes $\hat{Z}$, and
 assume that it is distinct from $\hat{Y}$.
 Since $v \not\in \Delta_{\Mfam_{\Vfam_S}}(\hat{Y})$, no min-core
 in $\Mfam_{\Vfam_S} \setminus \{\hat{Y}\}$ contains $v$ in its neighbor.
 Hence $v \in Z'$.
 However, this means that 
 $\hat{Z}'$ and $\hat{Y}$ are not strongly disjoint, which contradicts
 Lemma~\ref{lem.uncrossable}~(iii).
 This implies that $S$ covers $\Cfam_{\Vfam}(\hat{Z})$ since,
 if $\Cfam_{\Vfam}(\hat{Z})$ contains a core not covered by $S$, then
 the minimal core among such cores is a min-core in $\Mfam_{\Vfam_S}$ distinct
 from $\hat{Y}$.
\end{proof}

\begin{lemma}\label{lem.newcore}
 Let $S$ be an edge set and $\hat{Y} \in \Mfam_{\Vfam_S} \setminus \Mfam_{\Vfam}$.
 Then, exactly one of the following holds\/{\rm :}
\begin{itemize}
 \item $\hat{Y}$ includes at least two min-cores in $\Mfam_{\Vfam}
       \setminus \Mfam_{\Vfam_S}$,
       and all cores of $\Vfam$ including these min-cores are covered by
       $S$\/.
 \item $\hat{Y}$ is a core of $\Vfam$ that includes 
       a min-core in $\Mfam_{\Vfam} \setminus \Mfam_{\Vfam_S}$.
\end{itemize}
\end{lemma}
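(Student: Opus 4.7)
The plan is to start from two obvious observations: first, $\hat{Y}\in\Vfam_S\subseteq\Vfam$, so $\hat{Y}$ lies in $\Vfam$; and second, since $\hat{Y}\notin\Mfam_{\Vfam}$, $\hat{Y}$ strictly contains at least one min-core of $\Vfam$. I would then split into two exhaustive and mutually exclusive cases according to how many min-cores of $\Vfam$ are contained in $\hat{Y}$: either $\hat{Y}$ contains exactly one (so $\hat{Y}\in\Cfam_{\Vfam}$), or it contains at least two. These match the two bullets in the statement, and I only need to verify the additional ``covered by $S$'' claims in each case.

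For the first case, let $\hat{X}$ be the unique min-core of $\Vfam$ contained in $\hat{Y}$. I need to show $\hat{X}\notin\Mfam_{\Vfam_S}$, i.e. that $\hat{X}$ is covered by $S$. If not, then $\hat{X}\in\Vfam_S$ and $\hat{X}\subsetneq\hat{Y}$, contradicting minimality of $\hat{Y}$ in $\Vfam_S$. This disposes of the second bullet quickly.

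The main work is the second case. Fix any min-core $\hat{X}_i$ of $\Vfam$ with $\hat{X}_i\subseteq\hat{Y}$ (there are at least two such min-cores, call one of the others $\hat{X}_j$) and suppose, for contradiction, that some $\hat{Z}\in\Cfam_{\Vfam}(\hat{X}_i)$ is not covered by $S$. Then $\hat{Z}\in\Vfam_S$, so by Lemma~\ref{lem.residual-uncrossable} the uncrossable family $\Vfam_S$ contains either $\hat{Z}\cap\hat{Y},\hat{Z}\cup\hat{Y}$ or $\hat{Z}\setminus\hat{Y},\hat{Y}\setminus\hat{Z}$. In the first subcase $\hat{Z}\cap\hat{Y}$ is a member of $\Vfam_S$ that properly lies inside $\hat{Y}$ (proper because $\hat{Z}\supseteq\hat{Y}$ would force the core $\hat{Z}$ to contain the distinct min-core $\hat{X}_j$, which is impossible for a core), contradicting minimality of $\hat{Y}$. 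In the second subcase I use Lemma~\ref{lem.uncrossable}(iii): $\hat{X}_j$ is strongly disjoint from $\hat{Z}\in\Cfam_{\Vfam}(\hat{X}_i)$, so $X_j\subseteq Y\setminus Z^+$ and $X_j^+\subseteq Y^+\setminus Z$, i.e.\ $\hat{X}_j\subseteq\hat{Y}\setminus\hat{Z}$; on the other hand $X_i\subseteq Y\cap Z$ witnesses $\hat{Y}\setminus\hat{Z}\subsetneq\hat{Y}$. Again this yields a member of $\Vfam_S$ strictly inside $\hat{Y}$, contradiction. Hence every core in $\Cfam_{\Vfam}(\hat{X}_i)$ is covered by $S$; applying this to $\hat{X}_i$ itself gives $\hat{X}_i\in\Mfam_{\Vfam}\setminus\Mfam_{\Vfam_S}$, proving the first bullet.

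The main obstacle I anticipate is the second subcase in the multi-min-core argument: one must carefully verify that $\hat{Y}\setminus\hat{Z}$ is a strict sub-biset of $\hat{Y}$ that still lies in $\Vfam_S$, and this is where Lemma~\ref{lem.uncrossable}(iii) on strong disjointness of a min-core with any core over a different min-core is essential. Once that subcase is handled, the two bullets are clearly exclusive (one min-core versus at least two), and exhaustiveness is automatic from the case split.
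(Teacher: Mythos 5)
Your proof is correct and follows the same case split as the paper: exactly one versus at least two min-cores of $\Vfam$ contained in $\hat{Y}$. The paper disposes of the second case with the terse justification ``because $\hat{Y}$ is minimal in $\Vfam_S$,'' whereas you supply the actual uncrossing argument (uncrossability of $\Vfam_S$ from Lemma~\ref{lem.residual-uncrossable} together with strong disjointness from Lemma~\ref{lem.uncrossable}(iii)) that is genuinely needed here, since a core over $\hat{X}_i$ can extend well beyond $\hat{Y}$; in that sense your write-up is the more complete of the two.
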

\begin{proof}
 Since $\hat{Y} \not\in \Mfam_{\Vfam}$, there exist min-cores in $\Mfam_{\Vfam}$
 included by $\hat{Y}$. Suppose that the number of such min-cores is
 one, and we call the min-core by $\hat{X}$. Then,
 $\hat{Y}$ is a core of $\Vfam$.
 Since $\hat{Y} \in \Mfam_{\Vfam_S}$, $\hat{X}$ is covered by $S$, and hence,
 $\hat{X} \in \Mfam_{\Vfam} \setminus \Mfam_{\Vfam_S}$.
 If the number of such min-cores is at least two, 
 then the cores of $\Vfam$ including such min-cores are covered by $S$
 because $\hat{Y}$ is minimal in $\Vfam_S$.
\end{proof}

\begin{lemma}\label{lem.potentialdecrease}
 Let $S$ be a spider for $\Vfam$. If $f(S)=1$, then 
 $\phi(\Mfam_{\Vfam}) - \phi(\Mfam_{\Vfam_S}) \geq 1$.
 Otherwise, $\phi(\Mfam_{\Vfam}) - \phi(\Mfam_{\Vfam_S}) \geq (f(S)-1)/2$.
\end{lemma}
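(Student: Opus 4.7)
The plan is to combine Lemma~\ref{lem.newcore} --- which classifies each new min-core as either an absorbing super-core for at least two old min-cores, or a single-$\Vfam$-core enlargement of a unique old min-core --- with Lemma~\ref{lem.neighbor} --- which restricts any ``lost'' $\Delta$-neighbor of a stay core to the boundary of a gone old min-core. Write $\Mfam=\Mfam_{\Vfam}$ and $\Mfam'=\Mfam_{\Vfam_S}$, and split $\Mfam'\setminus\Mfam$ into $\Mfam'_1$ (new min-cores that are $\Vfam$-cores containing a unique old min-core) and $\Mfam'_2$ (each containing $c(\hat{Y})\ge 2$ old min-cores). Correspondingly split $\Mfam\setminus\Mfam'$ into a part bijective with $\Mfam'_1$, a part absorbed by $\Mfam'_2$, and a ``gone'' part having no super-min-core in $\Mfam'$. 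A direct count gives $|\Mfam|-|\Mfam'|\ge|\mathrm{gone}|+\sum_{\hat{Y}\in\Mfam'_2}(c(\hat{Y})-1)$, so the $(\gamma+1)$-weighted size term of $\phi$ already drops by at least $(\gamma+1)(|\mathrm{gone}|+|\Mfam'_2|)$.

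The geometric crux is a positioning claim for the spider's head $h$. For any foot $\hat{X}_i$ matched with $\hat{Y}_i\in\Mfam'_1$ we have $\hat{Y}_i\in\Cfam(\hat{X}_i)$; since $S_i$ covers every member of $\Cfam(\hat{X}_i,h)$ and $\hat{Y}_i$ is uncovered by $S$, necessarily $\hat{Y}_i\notin\Cfam(\hat{X}_i,h)$, i.e.\ $h\in Y_i^+$. If at least two feet produce $\hat{Y}_i,\hat{Y}_j\in\Mfam'_1$, then by Lemma~\ref{lem.uncrossable}~(iii) applied to the uncrossable family $\Vfam_S$ (valid by Lemma~\ref{lem.residual-uncrossable}) we have $Y_i\cap Y_j^+=\emptyset$, which together with $h\in Y_i^+\cap Y_j^+$ forbids $h\in Y_i$. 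Hence $h\in\Gamma(\hat{Y}_i)$ for every such foot, so denoting by $f_1$ the number of feet of this type, $h$ lies in the boundary of $f_1$ distinct members of $\Mfam'$ whenever $f_1\ge 2$. Since $h\notin X_i^+$ for every foot, no foot places $h$ in any $\Gamma(\hat{X}_i)$ of $\Mfam$; consequently the contribution of $h$ to $D(\Mfam'):=\sum_{\hat{Y}\in\Mfam'}|\Delta_{\Mfam'}(\hat{Y})|$ is at least $f_1$ larger than its contribution to $D(\Mfam)$.

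For $f=1$, the spider covers the entire ring-family $\Cfam(\hat{X}_1)$, so $\hat{X}_1$ is gone and $|\Mfam|-|\Mfam'|\ge 1$. A per-node bookkeeping, using Lemma~\ref{lem.neighbor} to restrict any $\Delta$-loss to nodes in $\Gamma(\hat{X}_1)$, shows that the $(2\gamma+1)$-credit from removing $\hat{X}_1$ pays for at most $2|\Gamma(\hat{X}_1)|\le 2\gamma$ units of $D$-loss, leaving $\phi(\Mfam)-\phi(\Mfam')\ge 1$.

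For $f\ge 2$ I perform an amortised accounting. Let $f_0,f_2,f_1$ count the feet in the gone, absorbed-by-$\Mfam'_2$, and matched-with-$\Mfam'_1$ classes. A pigeonhole on $\sum_{\hat{Y}\in\Mfam'_2}(c(\hat{Y})-1)\ge f_2/2$ (using $c(\hat{Y})\ge 2$) shows the size drop is at least $f_0+f_2/2$, and the preceding paragraph provides a $D$-gain of at least $f_1$ from $h$ once $f_1\ge 2$. The main technical obstacle is balancing these gains against the $D$-loss at stay cores; Lemma~\ref{lem.neighbor} restricts every such loss to $\Gamma$ of a gone core, and the $(2\gamma+1)$-credit from each gone core absorbs its own boundary losses of at most $2\gamma$, leaving at least one unit of spare $\phi$-drop per gone core. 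Charging each foot with at least $1/2$ unit of $\phi$-drop while absorbing one unit of slack for the single foot (if any) whose $\hat{Y}_i$ places $h$ in its interior then yields $\phi(\Mfam)-\phi(\Mfam')\ge(f-1)/2$.
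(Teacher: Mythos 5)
Your overall plan mirrors the paper's: the same three-way split of old and new min-cores (gone / absorbed by an $\Mfam'_2$-core / matched with a single-min-core super-core), the same size-drop bound via Lemma~\ref{lem.newcore}, the same use of Lemma~\ref{lem.neighbor} to control the $\Delta$-loss, and the same geometric positioning argument that forces $h\in\Gamma(\hat{Y}_i)$ when at least two feet survive as $\Mfam'_1$-members (via $h\notin X_i^+$, $h\in Y_i^+$, and strong disjointness of $\Mfam_{\Vfam_S}$). So this is essentially the paper's proof.

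However, the step that turns the positioning claim into a $D$-gain has a real gap. You assert that, because no foot places $h$ in $\Gamma(\hat{X}_i)$, ``the contribution of $h$ to $D(\Mfam')$ is at least $f_1$ larger than its contribution to $D(\Mfam)$.'' But non-foot min-cores of $\Mfam$ may well have $h$ in their boundary, and if two or more such cores exist and one or more of them disappear (are gone), the \emph{per-node} count $n_h(\Mfam')$ need not exceed $n_h(\Mfam)$ by $f_1$; the slack would then have to be recovered from the size drop, which your accounting does not do. The paper sidesteps this entirely by a \emph{per-pair} rather than per-node comparison: for each matched pair $(\hat{Y},\hat{Z})\in\Pfam$ it evaluates $\phi_{\Mfam}(\hat{Y})-\phi_{\Mfam'}(\hat{Z})=|\Delta_{\Mfam'}(\hat{Z})\setminus\Delta_{\Mfam}(\hat{Y})|-|\Delta_{\Mfam}(\hat{Y})\setminus\Delta_{\Mfam'}(\hat{Z})|$, uses Lemma~\ref{lem.neighbor} to bound $\sum_\Pfam|\Delta_{\Mfam}(\hat{Y})\setminus\Delta_{\Mfam'}(\hat{Z})|\le\gamma|\Mfam_2|$, and then observes that when $f_1\ge 2$ the node $h$ lies in $\Delta_{\Mfam'}(\hat{Z})\setminus\Delta_{\Mfam}(\hat{Y})$ for every pair with a foot $\hat{Y}$ (because $h\notin\Gamma(\hat{Y})$), giving a gain of at least one \emph{per pair} regardless of what non-foot cores do with $h$. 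You should switch to this per-pair bookkeeping. A secondary, smaller issue: your $f(S)=1$ claim $|\Mfam|-|\Mfam'|\ge 1$ is correct but needs the observation that when the lone foot is absorbed into an $\Mfam'_2$-core, that core absorbs at least two old min-cores, so the size still drops by one; as written you only say ``$\hat{X}_1$ is gone,'' which by itself does not rule out a compensating new min-core.
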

\begin{proof}
 Let $\nu(S)$ denote the number of min-cores
 $\hat{X} \in \Mfam_{\Vfam}$
 such that $S$ covers all bisets in $\Cfam_{\Vfam}(\hat{X})$, and let $\xi(S)$ denote the
 number of min-cores $\hat{Y} \in \Mfam_{\Vfam}$ 
 such that $S$ covers $\hat{Y}$ but not all bisets in $\Cfam_{\Vfam}(\hat{Y})$.
 Note that $\nu(S)+\xi(S)\geq f(S)$ holds.
 If $\hat{Y}$ is a min-core counted in $\xi(S)$,
 then there exists a unique min-core $\hat{Y}' \in \Mfam_{\Vfam_S}$ that
 includes $\hat{Y}$. Let $\Pfam$ denote the set of pairs of such $\hat{Y}$ and $\hat{Y}'$.

 Let $\hat{X} \in \Mfam_{\Vfam}$ be a min-core counted in $\nu(S)$.
 If a core of $\Vfam_S$ includes $\hat{X}$, then
 the core includes at least two min-cores in $\Mfam_{\Vfam}$.
 Let $\Mfam_1$ be the set of such $\hat{X}$ that
 is included by a min-core in $\Vfam_S$, 
 and let $\Mfam_2$ be the set of such $\hat{X}$
 that 
 is included by no min-core of $\Vfam_S$ (although it may be included by a core in $\Vfam_S$).
 Note that $|\Mfam_1|+|\Mfam_2|= \nu(S)$.

 By Lemma~\ref{lem.newcore},
 each min-core in $\Mfam_{\Vfam_S} \setminus \Mfam_{\Vfam}$  
 includes at least two members of $\Mfam_1$ 
 or belongs to $\Cfam_{\Vfam}(\hat{Y})$ defined by a min-core 
 $\hat{Y}\in \Mfam_{\Vfam}$ covered by $S$.
 Hence
 $|\Mfam_{\Vfam_S} \setminus \Mfam_{\Vfam}| \leq |\Mfam_1| /2 +  \xi(S)$.
 From this, it follows that 
 \[
  |\Mfam_{\Vfam_S}| \leq |\Mfam_{\Vfam_S} \setminus \Mfam_{\Vfam}| + |\Mfam_{\Vfam}| -\nu(S) - \xi(S)
 \leq |\Mfam_{\Vfam}| - \frac{|\Mfam_1|}{2} - |\Mfam_2|.
\]
 
 Recall that
 $\phi(\Mfam_{\Vfam})$ is defined as $(\gamma+1)|\Mfam_{\Vfam}|+\sum_{\hat{Z} \in
 \Mfam_{\Vfam}}\phi_{\Mfam_{\Vfam}}(\hat{Z})$,
 and 
 $\phi(\Mfam_{\Vfam_S})$ is defined as 
 $(\gamma+1)|\Mfam_{\Vfam_S}|+\sum_{\hat{Z} \in \Mfam_{\Vfam_S}}\phi_{\Mfam_{\Vfam_S}}(\hat{Z})$.
 The first term of $\phi(\Mfam_{\Vfam})$
 is larger than that of $\phi(\Mfam_{\Vfam_S})$ by $(\gamma+1)(|\Mfam_{\Vfam}|-|\Mfam_{\Vfam_S}|)$.
 A min-core $\hat{Z} \in \Mfam_{\Vfam_S}\setminus \Mfam_{\Vfam}$ either includes at least two members of
 $\Mfam_1$ or belongs to $\Cfam_{\Vfam}(\hat{Y})$ defined by a
 min-core $\hat{Y} \in \Mfam_{\Vfam}\setminus \Mfam_{\Vfam_S}$ (i.e., $(\hat{Y},\hat{Z})
 \in \Pfam$).
 There are at most $|\Mfam_1|/2$ min-cores of the former type, and hence
 the sum of their potentials is at most $\gamma |\Mfam_1|/2$.
 Let $\hat{Z}$ belong to the latter type. 
 Note that
 \[
  \phi_{\Mfam_{\Vfam}}(\hat{Y})-\phi_{\Mfam_{\Vfam_S}}(\hat{Z})
 = |\Delta_{\Mfam_{\Vfam_S}}(\hat{Z})|-|\Delta_{\Mfam_{\Vfam}}(\hat{Y})|
 =
 |\Delta_{\Mfam_{\Vfam_S}}(\hat{Z})\setminus \Delta_{\Mfam_{\Vfam}}(\hat{Y})|-
 |\Delta_{\Mfam_{\Vfam}}(\hat{Y})\setminus
 \Delta_{\Mfam_{\Vfam_S}}(\hat{Z})|.
\]
 If there exists $v \in \Delta_{\Mfam_{\Vfam}}(\hat{Y})\setminus \Delta_{\Mfam_{\Vfam_S}}(\hat{Z})$,
 then there exists $\hat{C} \in \Mfam_{\Vfam}$ counted in $\nu(S)$ such that $v
 \in \Gamma(\hat{C})$, and $\hat{C} \in \Mfam_2$ by Lemma~\ref{lem.neighbor}. 
 We make $\hat{C}$ give one token to $\hat{Z}$.
 Then, $\hat{Z}$ obtains $|\Delta_{\Mfam_{\Vfam}}(\hat{Y})\setminus
 \Delta_{\Mfam_{\Vfam_S}}(\hat{Z})|$ tokens. 
 Note that only $\hat{Z}$ contains $v$ in its
 outer-part among all min-cores in $\Mfam_{\Vfam_S}$;
 If $v \in Z$, then
 it is implied by the strong disjointness of min-cores, and 
 if $v \in \Gamma(\hat{Z})$, then it is implied by $v \not\in
 \Delta_{\Mfam_{\Vfam_S}}(\hat{Z})$.
 Hence, each $\hat{C} \in \Mfam_2$ releases at most
 one token for each node $v \in \Gamma(\hat{C})$.
Therefore, the total number of tokens is at most $\gamma|\Mfam_2|$, and
hence,
\[
 \sum_{(\hat{Y},\hat{Z}) \in \Pfam}
 |\Delta_{\Mfam_{\Vfam}}(\hat{Y})\setminus \Delta_{\Mfam_{\Vfam_S}}(\hat{Z})|
 \leq \gamma |\Mfam_2|.
\]
 Summing up, 
\begin{align}
 &\phi(\Mfam_{\Vfam})-\phi(\Mfam_{\Vfam_S})\notag\\
 &\geq  (\gamma+1)(|\Mfam_{\Vfam}|-|\Mfam_{\Vfam_S}|) - \frac{\gamma |\Mfam_1|}{2}
  +
  \sum_{(\hat{Y},\hat{Z}) \in \Pfam}(
 |\Delta_{\Mfam_{\Vfam_S}}(\hat{Z})\setminus \Delta_{\Mfam_{\Vfam}}(\hat{Y})|-
 |\Delta_{\Mfam_{\Vfam}}(\hat{Y})\setminus \Delta_{\Mfam_{\Vfam_S}}(\hat{Z})|)
\notag \\
 &\geq 
 (\gamma+1)\left(\frac{|\Mfam_1|}{2}+|\Mfam_2|\right) - \frac{\gamma|\Mfam_1|}{2} - \gamma|\Mfam_2| + 
  \sum_{(\hat{Y},\hat{Z}) \in \Pfam} |\Delta_{\Mfam_{\Vfam_S}}(\hat{Z})\setminus \Delta_{\Mfam_{\Vfam}}(\hat{Y})|
\notag\\
&=\frac{|\Mfam_1|}{2} + |\Mfam_2| + 
  \sum_{(\hat{Y},\hat{Z}) \in \Pfam} |\Delta_{\Mfam_{\Vfam_S}}(\hat{Z})\setminus \Delta_{\Mfam_{\Vfam}}(\hat{Y})|\notag\\
&\geq  \frac{\nu(S)}{2} + 
  \sum_{(\hat{Y},\hat{Z}) \in \Pfam} |\Delta_{\Mfam_{\Vfam_S}}(\hat{Z})\setminus \Delta_{\Mfam_{\Vfam}}(\hat{Y})|.\label{eq.potential-lowerbound}
\end{align}

 If $f(S)=1$, then $\nu(S) \geq 1$, and hence $\phi(\Mfam_{\Vfam})-\phi(\Mfam_{\Vfam_S})
 \geq 1/2$ by \eqref{eq.potential-lowerbound}. Since potentials are integers, this means that 
 $\phi(\Mfam_{\Vfam})-\phi(\Mfam_{\Vfam_S}) \geq 1$.
 Suppose that $f(S) \geq 2$. 
  Consider the case where the head of $S$ is included by the inner-part
 of some min-core $\hat{X} \in \Mfam_{\Vfam_S}$.
 If a foot $\hat{C}$ of $S$ is strongly disjoint from $\hat{X}$, 
 then $\Cfam_{\Vfam}(\hat{C})$ is covered by $S$, and hence $\hat{C}$ is counted in $\nu(S)$.
 If $\hat{X}$ includes at least two feet of $S$, then all cores of $\Vfam$
 including these feet are covered by $S$. Therefore,
 $\nu(S) \geq f(S)-1$, and hence
 $\phi(\Mfam_{\Vfam})-\phi(\Mfam_{\Vfam_S})\geq (f(S)-1)/2$ by \eqref{eq.potential-lowerbound}.

 In the remaining case, $f(S) \geq 2$ and 
 no min-core in $\Mfam_{\Vfam_S}$ contains
the head $h$ of $S$ in its inner-part.
By definition of spiders, each foot $\hat{C}$ is covered by $S$.
 Hence $\hat{C}$ is counted in $\nu(S)$ or $\xi(S)$.
 If $\nu(S) \geq f(S)-1$, then we are done. Hence, suppose that
 $\nu(S)\leq f(S)-2$. $f(S)-\nu(S)$ feet of $S$ are counted in $\xi(S)$.
 Let $\hat{Y}$ be a foot of $S$ that is counted in $\xi(S)$.
 Then, there exists $\hat{Z} \in \Mfam_{\Vfam_S}$ with
 $(\hat{Y},\hat{Z})\in \Pfam$ and $h \in \Gamma(\hat{Z}) \setminus
 \Gamma(\hat{Y})$.
 Since $\Mfam_{\Vfam_S}$ contains at least two such $\hat{Z}$,
 we have $h \in \Delta_{\Mfam_{\Vfam_S}}(\hat{Z}) \setminus
 \Delta_{\Mfam_{\Vfam}}(\hat{Y})$.
 Therefore,
\[
 \nu(S) + 
  \sum_{(\hat{Y},\hat{Z}) \in \Pfam} 
 |\Delta_{\Mfam_{\Vfam_S}}(\hat{Z})\setminus \Delta_{\Mfam_{\Vfam}}(\hat{Y})|
 \geq f(S),
\]
 and \eqref{eq.potential-lowerbound} implies that
 $\phi(\Mfam_{\Vfam})-\phi(\Mfam_{\Vfam_S}) \geq f(S)/2$.
 \end{proof}

\begin{theorem}\label{thm.potential-spider}
 Let $\Vfam$ be an uncrossable family of bisets.
 There exist $w\colon V \rightarrow W$,
 a spider $S$ activated by $w$,
 and a strongly laminar family $\Lfam$ of cores of $\Vfam$
 such that 
 \[
  \frac{w(V)}{\phi(\Mfam_{\Vfam})-\phi(\Mfam_{\Vfam_{S}})} = O(\max\{\gamma,1\})
 \cdot \frac{\CoreLP(\Lfam)}{\phi(\Mfam_{\Vfam})}.
 \]
\end{theorem}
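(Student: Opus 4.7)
The plan is to combine the two major technical results already established in Sections~\ref{sec.primal-dual} and~\ref{sec.potential}: Theorem~\ref{thm.spideralgorithm}, which produces a spider $S$ with controlled density relative to $|\Mfam_{\Vfam}|$ and a strongly laminar core family $\Lfam$; and Lemma~\ref{lem.potentialdecrease}, which lower-bounds the drop in the new potential in terms of $f(S)$. The bridge between the two bounds is a simple comparison between $\phi(\Mfam_{\Vfam})$ and $|\Mfam_{\Vfam}|$, which follows directly from the definition of $\phi$.

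More concretely, I would run the algorithm of Theorem~\ref{thm.spideralgorithm} to obtain $w$, a spider $S\subseteq E_w$, and a strongly laminar core family $\Lfam$ with
\[
\frac{w(V)}{f(S)} \le \frac{\CoreLP(\Lfam)}{|\Mfam_{\Vfam}|}.
\]
Next I invoke Lemma~\ref{lem.potentialdecrease}. Both cases of that lemma yield
\[
\phi(\Mfam_{\Vfam}) - \phi(\Mfam_{\Vfam_S}) \ge \tfrac{f(S)}{4},
\]
since for $f(S)=1$ the bound is $1\ge 1/4$ and for $f(S)\ge 2$ we have $(f(S)-1)/2\ge f(S)/4$. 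Dividing the first display by this lower bound gives
\[
\frac{w(V)}{\phi(\Mfam_{\Vfam})-\phi(\Mfam_{\Vfam_S})} \le \frac{4\,\CoreLP(\Lfam)}{|\Mfam_{\Vfam}|}.
\]

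It remains to replace $|\Mfam_{\Vfam}|$ in the denominator by $\phi(\Mfam_{\Vfam})$ at the cost of an $O(\max\{\gamma,1\})$ factor. From the definition $\phi(\mathcal{X})=(\gamma+1)|\mathcal{X}|+\sum_{\hat{X}\in\mathcal{X}}\phi_{\mathcal{X}}(\hat{X})$ and the fact that $0\le \phi_{\mathcal{X}}(\hat{X})\le \gamma$ for every core, we get $\phi(\Mfam_{\Vfam})\le (2\gamma+1)|\Mfam_{\Vfam}|$, i.e.\ $|\Mfam_{\Vfam}|\ge \phi(\Mfam_{\Vfam})/(2\gamma+1)$. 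Substituting this yields
\[
\frac{w(V)}{\phi(\Mfam_{\Vfam})-\phi(\Mfam_{\Vfam_S})} \le \frac{4(2\gamma+1)\,\CoreLP(\Lfam)}{\phi(\Mfam_{\Vfam})} = O(\max\{\gamma,1\})\cdot \frac{\CoreLP(\Lfam)}{\phi(\Mfam_{\Vfam})},
\]
which is exactly the claimed inequality (the $\max\{\gamma,1\}$ form covers the edge-connectivity case $\gamma=0$, where $\phi(\Mfam_{\Vfam})=|\Mfam_{\Vfam}|$ and the factor is a constant).

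Because all the nontrivial work has already been absorbed into Theorem~\ref{thm.spideralgorithm} and Lemma~\ref{lem.potentialdecrease}, there is no real obstacle left in the proof itself; the entire argument is bookkeeping. The only place where one must be a little careful is the edge-connectivity regime $\gamma=0$: there $\phi_{\mathcal{X}}(\hat{X})\equiv 0$ and $\phi=|\mathcal{X}|$, so the crude inequality $\phi(\Mfam_{\Vfam})\le (2\gamma+1)|\Mfam_{\Vfam}|$ still gives the correct constant-factor bound, and that is why the statement uses $O(\max\{\gamma,1\})$ rather than $O(\gamma)$.
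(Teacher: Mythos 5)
Your argument is essentially the same as the paper's: invoke Theorem~\ref{thm.spideralgorithm} for the density bound in terms of $|\Mfam_{\Vfam}|$, invoke Lemma~\ref{lem.potentialdecrease} for the potential drop, and bridge via $\phi(\Mfam_{\Vfam}) \leq (2\gamma+1)|\Mfam_{\Vfam}|$. The only cosmetic difference is that you unify the two cases of Lemma~\ref{lem.potentialdecrease} into the single bound $\phi(\Mfam_{\Vfam})-\phi(\Mfam_{\Vfam_S}) \geq f(S)/4$ before dividing, whereas the paper applies the $(2\gamma+1)$ factor first and then treats the two cases separately; both routes give the same $4(2\gamma+1)$ constant.
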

\begin{proof}
 Theorem~\ref{thm.spideralgorithm} shows that
 there exist $w\colon V\rightarrow W$, 
 a spider $S$ activated by $w$, and a strongly laminar family $\Lfam$ of cores
 such that
 \[
  \frac{w(V)}{f(S)} \leq \frac{\CoreLP(\Lfam)}{|\Mfam_{\Vfam}|}.
 \]
 Since $\phi(\Mfam_{\Vfam}) \leq (2\gamma+1)|\Mfam_{\Vfam}|$, we have
\begin{equation}\label{eq.potential-1}
 \frac{w(V)}{f(S)} \leq  \frac{\CoreLP(\Lfam)}{|\Mfam_{\Vfam}|} \leq
 (2\gamma+1) \cdot \frac{\CoreLP(\Lfam)}{\phi(\Mfam_{\Vfam})}.
\end{equation}

 If $f(S)=1$, then $\phi(\Mfam_{\Vfam})-\phi(\Mfam_{\Vfam_S}) \geq f(S)$ by
 Lemma~\ref{lem.potentialdecrease}, and hence, the required inequality
 follows from \eqref{eq.potential-1}.
 Otherwise,
 $\phi(\Mfam_{\Vfam})-\phi(\Mfam_{\Vfam_S}) \geq (f(S)-1)/2$ by
 Lemma~\ref{lem.potentialdecrease}, and hence,
\[
  \frac{w(V)}{f(S)} \geq 
  \frac{w(V)}{2(f(S)-1)} \geq 
\frac{w(V)}{4(\phi(\Mfam_{\Vfam})-\phi(\Mfam_{\Vfam_S}))},
\]
 where the first inequality follows from $f(S)\geq 2$.
Combining with \eqref{eq.potential-1}, this gives
\[
\frac{w(V)}{\phi(\Mfam_{\Vfam})-\phi(\Mfam_{\Vfam_S})} \leq
 4(2\gamma+1)\cdot \frac{\CoreLP(\Lfam)}{\phi(\Mfam_{\Vfam})}.
\]
\end{proof}

Our algorithm presented in Section~\ref{sec.primal-dual}
computes the node weights $w$ and spider $S$ claimed by
Theorem~\ref{thm.potential-spider} in polynomial time. 
Alternatively, one can use the 
simpler algorithm in~\cite{Nutov13activation}, which approximates $w$
within a factor of $2$.

\section{Algorithm}
\label{sec.algorithm}

We first present our main theorem.

\begin{theorem}\label{thm.main}
 Suppose that $\Vfam$ is a biset family such that $\bigcup_{i \in D}\Vfam_i$ is
 uncrossable for each $D \subseteq [d]$.
 Let $\gamma=\max_{\hat{X}\in \Vfam}|\Gamma(\hat{X})|$
 and $\gamma'=\max\{\gamma,1\}$.
 The prize-collecting biset covering problem with $\Vfam$ admits an $O(\gamma'
 \log(\gamma' d))$-approximation algorithm.
\end{theorem}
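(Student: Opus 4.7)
The plan is to combine the three ingredients built up in the paper: the lifted LP $\LP(\Vfam)$ (Lemma~\ref{lem.relaxation}), the primal-dual spider algorithm with its density bound (Theorem~\ref{thm.potential-spider}), and the comparison $\CoreLP(\Lfam)\le 2\,\NPCLP(\Vfam)$ for strongly laminar core families (Lemma~\ref{lem.corelpvslp}). First, I would solve $\LP(\Vfam)$ in polynomial time and let $(x^*,y^*)$ denote an optimal solution. I then threshold-round the prize-collecting variables: set $D:=\{i\in[d]\colon y^*(i)<1/2\}$, discard every demand pair with index outside $D$, and observe that the resulting penalty is bounded by $\sum_{i\notin D}\pi_i\le 2\sum_i \pi_i y^*(i)\le 2\,\LP(\Vfam)$. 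The residual demand family $\Vfam':=\bigcup_{i\in D}\Vfam_i$ is uncrossable by hypothesis. Scaling $x^*$ by $2$ and, for each $\hat{C}\in\Mfam_{\Vfam'}$, using the variables indexed by some $\hat{C}'\in\Mfam_{\Vfam}$ with $\hat{C}'\subseteq\hat{C}$ (which exists because $\hat{C}\in\Vfam$) yields a solution feasible for $\NPCLP(\Vfam')$ of value at most $2\,\LP(\Vfam)$, hence $\NPCLP(\Vfam')\le 2\,\LP(\Vfam)$.

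Next, I would run a greedy spider loop on $\Vfam'$. Let $\Vfam_0:=\Vfam'$ and $F_0:=\emptyset$; in iteration $t\ge 1$ invoke Theorem~\ref{thm.potential-spider} on $\Vfam_{t-1}$ (still uncrossable by Lemma~\ref{lem.residual-uncrossable}) to obtain weights $w_t$, a spider $S_t\subseteq E_{w_t}$, and a strongly laminar core family $\Lfam_t$ with
\[
 w_t(V)\ \le\ O(\gamma')\cdot\frac{\CoreLP(\Lfam_t)}{\phi(\Mfam_{\Vfam_{t-1}})}\cdot\bigl(\phi(\Mfam_{\Vfam_{t-1}})-\phi(\Mfam_{\Vfam_t})\bigr),
\]
and then set $\Vfam_t:=(\Vfam_{t-1})_{S_t}$. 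By Lemma~\ref{lem.corelpvslp}, $\CoreLP(\Lfam_t)\le 2\,\NPCLP(\Vfam_{t-1})$. To keep the right-hand side controlled by a single LP value, I would prove the monotonicity $\NPCLP(\Vfam_t)\le\NPCLP(\Vfam')$ by a lift analogous to the one above: for each $\hat{C}\in\Mfam_{\Vfam_t}$ pick a min-core $\hat{C}_0\in\Mfam_{\Vfam'}$ with $\hat{C}_0\subseteq\hat{C}$ and copy the $\hat{C}_0$-variables of any $\NPCLP(\Vfam')$-feasible solution; the constraints of $\NPCLP(\Vfam_t)$ are inherited because every $\hat{X}\in\Vfam_t(\hat{C})$ sits in $\Vfam'(\hat{C}_0)$.

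Telescoping the density bound and applying the standard harmonic-sum inequality
\[
 \sum_{t\ge 1}\frac{\phi(\Mfam_{\Vfam_{t-1}})-\phi(\Mfam_{\Vfam_t})}{\phi(\Mfam_{\Vfam_{t-1}})}\ \le\ H_{\phi(\Mfam_{\Vfam'})}\ =\ O(\log\phi(\Mfam_{\Vfam'})),
\]
which is valid because $\phi$ is integer-valued and decreases by at least $1$ per iteration by Lemma~\ref{lem.potentialdecrease}, I would conclude $\sum_t w_t(V)\le O(\gamma'\log\phi(\Mfam_{\Vfam'}))\cdot\NPCLP(\Vfam')$. By Lemma~\ref{lem.uncrossable}(iii) the min-cores of $\Vfam'$ are pairwise strongly disjoint, and each of them must contain at least one of the $2d$ terminals in its inner part (since it separates some demand pair), giving $|\Mfam_{\Vfam'}|\le 2d$ and $\phi(\Mfam_{\Vfam'})\le(2\gamma+1)\cdot 2d=O(\gamma' d)$. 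Combining with the penalty bound from the first step yields a total cost of at most $O(\gamma'\log(\gamma' d))\cdot\LP(\Vfam)\le O(\gamma'\log(\gamma' d))\cdot\OPT$, which is the claimed approximation guarantee.

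The main obstacle I anticipate is the monotonicity claim $\NPCLP(\Vfam_t)\le\NPCLP(\Vfam')$, since the min-cores that index the edge variables $x(uv,j,j',\hat{C})$ change from one iteration to the next. The lift sketched above must be checked carefully against each of~\eqref{lp-c1}--\eqref{lp-c3}, and forms the only nontrivial bridge between the layered LP of Section~\ref{sec.LP} and the residual operation driving the greedy loop. Once this bridge is in place the remainder is a routine assembly of the spider-density bound with the harmonic-sum argument.
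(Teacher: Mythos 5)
Your proof is correct and follows the paper's own plan: solve $\LP(\Vfam)$, threshold $y$ at $1/2$, reduce to $\NPCLP(\Vfam')$, iterate Theorem~\ref{thm.potential-spider} together with Lemma~\ref{lem.corelpvslp}, and close with a harmonic-sum bound on the potential $\phi$. Your explicit check of the monotonicity $\NPCLP(\Vfam_t)\le\NPCLP(\Vfam')$ via re-indexing over a min-core $\hat{C}_0\in\Mfam_{\Vfam'}$ contained in each $\hat{C}\in\Mfam_{\Vfam_t}$ fills a step that the paper leaves implicit inside its invocation of the standard greedy set-cover argument, and the lift you give is exactly the right way to supply it.
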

\begin{proof}
 Let $(x,y)$ be an optimal solution for $\LP(\Vfam)$.
 We first compute $(x,y)$.
 We eliminate all demand pairs $\{s_i,t_i\}$ such that $y(i)\geq 1/2$,
 and eliminate each biset that separates no remaining demand pair from $\Vfam$.
 Let $\Vfam'$ be the biset family obtained after this operations.
 $\NPCLP(\Vfam') \leq 2 \sum_{v \in V}\sum_{j \in W}j \cdot x(v,j)$ holds
 because $2x$ is feasible to $\NPCLP(\Vfam')$.

 Applying Theorem~\ref{thm.potential-spider} to $\Vfam'$,
 we obtain $w$, $S$ and $\Lfam$ 
 such that
 $w(V)/(\phi(\Mfam_{\Vfam'})-\phi(\Mfam_{\Vfam'_S}))=O(\gamma')\cdot
 \CoreLP(\Lfam)/\phi(\Mfam_{\Vfam'})$,
and the right-hand side is at most $O(\gamma') \cdot \NPCLP(\Vfam')/\phi(\Mfam_{\Vfam'})$
 by Lemma~\ref{lem.corelpvslp}.
 If $\phi(\Mfam_{\Vfam'_S})> 0$, then
 we apply
 Theorem~\ref{thm.potential-spider} to $\Vfam'_S$.
 Let $w'$ and $S'$ be the obtained node weights and spider, respectively.
 We add edges in $S'$ to $S$,
 increase the weight $w(v)$ by $w'(v)$ for each $v\in V$.
 We repeat this until $\phi(\Mfam_{\Vfam'_S})$ becomes 0.
 By a standard argument of the greedy algorithm for the set cover problem,
 we have $w(V) = O(\gamma' \log(\phi(\Mfam_{\Vfam'}))) \cdot \NPCLP(\Vfam')$
 when the above procedure is completed.
 Since $\phi(\Mfam_{\Vfam'})=O(\gamma' d)$,
 it implies that $w(V)=O(\gamma' \log(\gamma' d)) \cdot \NPCLP(\Vfam')$.

The penalty of $w$ is at most $2\sum_{i \in [d]}\pi_i y(i)$
 because $S$ covers all bisets separating each demand pair $\{s_i,t_i\}$
 with $y(i)<1/2$, and $S\subseteq E_w$.
 $w(V) = O(\gamma' \log(\gamma' d)) \cdot \NPCLP(\Vfam')
 =  O(\gamma' \log(\gamma' d)) \cdot\sum_{j\in W}\sum_{v \in V}j\cdot x(v,j)$.
 Therefore the objective value of $w$ is 
$O(\gamma' \log(\gamma' d))$ times $\LP(\Vfam)$.
 Lemma~\ref{lem.relaxation} shows that $\LP(\Vfam)$ is at most the
 optimal value of the prize-collecting biset covering problem.
\end{proof}

\begin{corollary}\label{cor.edge-element}
 Let $k'=\min\{k,|V|\}$.
 The edge-connectivity PCNAP admits an $O(k \log d)$-approximation
 algorithm, and
 the element-connectivity PCNAP admits an $O(k k' \log(k'd))$-approximation
 algorithm.
 \end{corollary}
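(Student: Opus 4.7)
The plan is to combine three previously established pieces: the reduction from the PCNAP to the prize-collecting augmentation problem (Theorem~\ref{thm.augmentation}), the characterization of the prize-collecting augmentation problem as an instance of the prize-collecting biset covering problem (Lemma~\ref{lem.edge-element}), and the approximation guarantee for the biset covering problem (Theorem~\ref{thm.main}). The whole argument is a routine plug-in, so the work is really just tracking the parameters $\gamma$ and $k'$ for the two connectivity regimes.

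First I would reduce an instance of the edge- (resp.\ element-) connectivity PCNAP to a sequence of $k$ instances of the prize-collecting augmentation problem by invoking Theorem~\ref{thm.augmentation}; this introduces a multiplicative factor of $k$ on top of whatever approximation we obtain for the augmentation subproblem. Next, for each of these augmentation subproblems, I would cast the feasibility constraint as a biset covering requirement with $\Vfam=\bigcup_{i\in[d]}\Vfam_i^{\rm edge}$ in the edge-connectivity case and $\Vfam=\bigcup_{i\in[d]}\Vfam_i^{\rm ele}$ in the element-connectivity case. Lemma~\ref{lem.edge-element} precisely certifies that these families, and the subfamilies $\bigcup_{i \in D}\Vfam_i$ obtained by restricting to any $D\subseteq[d]$, are uncrossable, so the hypothesis of Theorem~\ref{thm.main} is satisfied.

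The only parameter still to evaluate is $\gamma=\max_{\hat{X}\in\Vfam}|\Gamma(\hat{X})|$. For the edge-connectivity family, every biset $\hat{X}\in\Vfam_i^{\rm edge}$ satisfies $X=X^+$, so $\Gamma(\hat{X})=\emptyset$ and $\gamma=0$, giving $\gamma'=1$; Theorem~\ref{thm.main} therefore yields an $O(\log d)$-approximation for the augmentation step, which after multiplying by $k$ gives the stated $O(k\log d)$ bound for the edge-connectivity PCNAP. For the element-connectivity family, the defining identity $|\delta_{E_0}(\hat{X})|+|\Gamma(\hat{X})|=k-1$ together with $|\Gamma(\hat{X})|\le |V|-2$ yields $\gamma\le \min\{k-1,|V|-2\}\le k'$, hence $\gamma'\le k'$; Theorem~\ref{thm.main} now gives an $O(k'\log(k'd))$-approximation for the augmentation step, and the factor-$k$ loss from Theorem~\ref{thm.augmentation} produces the claimed $O(kk'\log(k'd))$ bound.

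There is no genuine obstacle here, since every ingredient is already proved; the only thing to be careful about is to confirm that the separation oracles invoked inside Theorem~\ref{thm.main} (used when solving $\LP(\Vfam)$) are polynomial-time implementable for the specific biset families $\Vfam_i^{\rm edge}$ and $\Vfam_i^{\rm ele}$. This was already remarked after the statement of $\LP(\Vfam)$ in Section~\ref{sec.LP}: for both families, minimal bisets in a ring-family containing a prescribed node can be produced via maximum-flow computations, so $\LP(\Vfam)$ is solvable in polynomial time and the algorithm behind Theorem~\ref{thm.main} runs in polynomial time as required.
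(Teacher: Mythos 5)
Your proposal is correct and follows essentially the same route as the paper: invoke Theorem~\ref{thm.augmentation} for the factor-$k$ reduction to augmentation, use Lemma~\ref{lem.edge-element} to certify uncrossability of $\bigcup_{i\in D}\Vfam_i^{\rm edge}$ and $\bigcup_{i\in D}\Vfam_i^{\rm ele}$, and then apply Theorem~\ref{thm.main} with the parameter computations $\gamma=0$ (edge) and $\gamma\le k'-1$ (element). Your extra observation on the polynomial-time separation oracles via max-flow is accurate and matches the remark in Section~\ref{sec.LP}, though the paper's own proof leaves it implicit.
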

\begin{proof}
$\bigcup_{i \in [d]}\Vfam_i^{\rm edge}$ is an uncrossable family of bisets with $\gamma=0$.
 Hence, Theorems~\ref{thm.augmentation} and \ref{thm.main} give
 an $O(k \log d)$-approximation algorithm for the edge-connectivity PCNAP.
$\bigcup_{i \in [d]}\Vfam^{\rm ele}$ is an uncrossable family of bisets
 with $\gamma\leq k'-1$.
 Hence, Theorems~\ref{thm.augmentation} and \ref{thm.main} give
 an $O(kk' \log (k'd))$-approximation algorithm for the element-connectivity PCNAP.
\end{proof}

We note that 
$d=O(|V|^2)$. Hence, the above corollary gives
an $O(k\log |V|)$-approximation algorithm for the edge-connectivity
PCNAP,
and an $O(k^2\log |V|)$-approximation algorithm for the element-connectivity
PCNAP.

The next corollary provides approximation algorithms for the
node-connectivity requirements.
Since it is reasonable to suppose $k\leq |V|$ for the node-connectivity
requirements,
the next corollary does not have $k'$ in contrast with Corollary~\ref{cor.edge-element}.

\begin{corollary}\label{cor.node}
\begin{itemize}
 \item[\rm (i)] The node-connectivity PCNAP admits an
	      $O(k^5 \log |V|\log(kd))$-approximation randomized algorithm.
 \item[\rm (ii)] The rooted node-connectivity PCNAP admits an
	      $O(k^3 \log(kd))$-approximation algorithm.
 \item[\rm (iii)] The subset node-connectivity PCNAP admits an
	      $O(k^3 \log(kd))$-approximation algorithm.
\end{itemize}
 \end{corollary}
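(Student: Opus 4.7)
The plan is to reduce each of the three node-connectivity variants to a sequence of prize-collecting biset covering problems on uncrossable biset families, and then invoke Theorem~\ref{thm.main}. In every case the first step is Theorem~\ref{thm.augmentation}, which reduces the PCNAP to $k$ rounds of the prize-collecting augmentation problem at a cost of a factor $k$. The second step is to decompose the (generally non-uncrossable) node-connectivity biset family $\bigcup_{i}\Vfam_i^{\rm node}$ arising in each augmentation round into several uncrossable sub-families, using existing decomposition results. Each sub-family satisfies $\gamma\leq k-1$, so Theorem~\ref{thm.main} produces an $O(k\log(kd))$-approximation per sub-family.

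For (ii), I would apply Nutov's rooted decomposition from~\cite{Nutov12uncrossable}, which splits the rooted node-connectivity augmentation into $O(k)$ uncrossable biset covering instances whose optimal values sum to a constant times the optimum of the original augmentation. Summing the $O(k\log(kd))$ guarantees over the $O(k)$ sub-instances yields $O(k^2\log(kd))$ per augmentation round, and multiplying by the factor $k$ from Theorem~\ref{thm.augmentation} gives $O(k^3\log(kd))$. Statement (iii) follows by the same accounting using the subset decomposition in~\cite{Nutov12subset}. For (i), I would apply the randomized reduction of Chuzhoy and Khanna~\cite{ChuzhoyK12} from node-connectivity to element-connectivity augmentation, which incurs a loss factor of $O(k^3\log |V|)$ in the approximation ratio. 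The resulting element-connectivity biset families are uncrossable with $\gamma\leq k-1$, so Theorem~\ref{thm.main} applies at cost $O(k\log(kd))$; multiplying the factor $k$ from Theorem~\ref{thm.augmentation}, the $O(k^3\log|V|)$ reduction loss, and the $O(k\log(kd))$ per-instance ratio yields $O(k^5\log|V|\log(kd))$.

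The main technical obstacle is verifying that the decompositions in~\cite{ChuzhoyK12,Nutov12uncrossable,Nutov12subset}, which were originally stated for the non-prize-collecting augmentation problem, remain valid in the prize-collecting setting. Concretely, one has to show that each demand pair's penalty $\pi_i$ can be transferred to its sub-instances in such a way that paying the penalty within any sub-instance bounds the penalty incurred in the original PCNAP, while the aggregate sum of the optimal prize-collecting costs of the sub-instances stays within a constant factor of the original $\OPT$. This amounts to a book-keeping argument on the side of each decomposition lemma, routing the penalties consistently with the connectivity requirements. Since Theorem~\ref{thm.main} already handles prize-collecting within a single uncrossable family, this penalty-aggregation check is the only non-routine step required to complete the proof.
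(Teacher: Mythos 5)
Your approach for parts (i) and (ii) is essentially the paper's: reduce via Theorem~\ref{thm.augmentation} at a factor-$k$ cost, decompose the node-connectivity biset family of each augmentation round into uncrossable subfamilies using Chuzhoy--Khanna (giving $O(k^3\log|V|)$ subfamilies) or Nutov's rooted decomposition (giving $O(k)$ subfamilies), solve each subinstance with Theorem~\ref{thm.main} at cost $O(k\log(kd))$, and multiply. The penalty ``book-keeping'' you single out as the main obstacle is in fact routine: the optimal solution for the augmentation round is feasible in every sub-instance with the same set of paid penalties, so each sub-instance's optimum is at most the round's $\OPT$; and since the subfamilies are subfamilies of $\Vfam$ whose union coverage implies coverage of $\Vfam$, if the combined solution fails demand $i$ then some sub-instance failed it too, so the combined penalty is dominated by the sum of sub-instance penalties.

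Where the proposal does not hold up is part (iii). You assert that it ``follows by the same accounting using the subset decomposition in~\cite{Nutov12subset},'' but that decomposition has a different shape: it produces one rooted node-connectivity instance plus $O\bigl((3|T|/(|T|-k))^2\log(3|T|/(|T|-k))\bigr)$ instances with single demand pairs, not $O(k)$ uncrossable subfamilies. The single-demand instances are handled by a separate constant-factor algorithm from~\cite{Nutov13activation}, not by Theorem~\ref{thm.main}; and there is a necessary case split on whether $|T|=O(k)$ (in which case one decomposes instead into $O(k^2)$ single-demand instances), since otherwise the quantity $|T|-k$ in the denominator can make the first bound blow up. These ingredients are what actually make (iii) nontrivial, and ``the same accounting'' glosses over them; as written, your argument for (iii) is incomplete.
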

\begin{proof}
 Theorem~\ref{thm.augmentation}
 reduces the node-connectivity PCNAP
 to the prize-collecting biset covering problem with the biset family
 $\Vfam=\bigcup_{i\in [d]}\Vfam^{\rm node}_i$ by paying factor $k$.
 Chuzhoy and Khanna~\cite{ChuzhoyK12} presented a randomized algorithm
 for decomposing an instance of the node-connectivity SNDP into 
 $O(k^3 \log |V|)$ instances of the element-connectivity SNDP such that
 the union of solutions for the $O(k^3 \log |V|)$ instances is 
 feasible to the original instance.
 This algorithm can be applied for
 computing $O(k^3 \log |V|)$  
 uncrossable subfamilies of $\Vfam$
 such that an edge set covering the union of the subfamilies
 covers $\Vfam$.
 By Theorem~\ref{thm.main}, 
 we compute $O(k \log (kd))$-approximate solutions for
 instances of 
 the prize-collecting biset covering problem 
 with the subfamilies. We then return the union of the obtained solutions.
 This achieves
 $O(k^5 \log (kd) \log |V|)$-approximation for the original instance
 of the node-connectivity PCNAP.

 For the rooted node-connectivity PCNAP, 
 we replace the decomposition result due to Chuzhoy and Khanna~\cite{ChuzhoyK12} by
 the one due to Nutov~\cite{Nutov12uncrossable}, which proved
 that $\Vfam$
 can be decomposed into $O(k)$ uncrossable subfamilies.
 This achieves $O(k^3 \log (kd))$-approximation for the rooted
 node-connectivity PCNAP.

 Strictly speaking,
 Theorem~\ref{thm.augmentation} cannot be applied to the subset
 node-connectivity PCNAP because it is not a special case of the PCNAP, 
 but we can similarly prove that the same claim holds for
 the subset node-connectivity PCNAP.
  Using a decomposition result in Nutov~\cite{Nutov12subset},
 the augmentation problem obtained by the reduction 
 can be decomposed into
 one instance with the rooted node-connectivity requirements and $O(3|T|/(|T|-k))^2
 \cdot \log(3|T|/(|T|-k))$ instances  with single demand pairs.
 The former instance can be approximated within a factor of $O(k^2 \log
 (kd))$
 as above.
 Each of the latter instances admits a constant factor approximation
 using the algorithm presented in \cite{Nutov13activation}.
 These give $O(k^2 \log (kd))$-approximation for the original augmentation unless $k=|T|-o(|T|)$. 
 When $|T|=O(k)$ (including the case with $k=|T|-o(|T|)$), 
 the augmentation problem can be decomposed into $O(k^2)$ instances
 with single demand pairs, resulting in an $O(k^2)$-approximation for
 the augmentation problem.
 Recall that we pay factor $k$ for reducing PCNAP to the
 prize-collecting augmentation problem.
 Therefore, we have an $O(k^3 \log (kd))$-approximation algorithm for the
 subset node-connectivity PCNAP.
\end{proof}

Note that $\log (kd)=O(\log|V|)$ in Corollary~\ref{cor.node}.

\section{Conclusion}\label{sec.conclusion}

We have presented approximation algorithms for PCNAP.
Our algorithms are built on new formulations of LP relaxations, the primal-dual algorithm for computing spiders, and the potential
function for analyzing the greedy spider cover algorithm.

Our algorithms must solve the LP relaxation in order to decide which demand
pairs should be satisfied by solutions. In contrast, several primal-dual algorithms such as
those in \cite{BateniHL13,Konemann13CoRR} can manage this without solving LP
by generic LP solvers.
In other words, these algorithms are combinatorial.
We believe that it is challenging to design combinatorial algorithms for PCNAP.

\section*{Acknowledgements}

This work was supported by
JSPS KAKENHI Grant Number 25730008 in part.
The author thanks Zeev Nutov for sharing information on his paper \cite{Nutov12uncrossable}.

\end{document}